
\documentclass[]{jfm}

\usepackage{amsmath}
\usepackage{amssymb}
\usepackage{amscd}
\usepackage{graphicx}
\usepackage{newtxtext}
\usepackage{newtxmath}
\usepackage{natbib}
\usepackage{hyperref}
\usepackage{import}
\usepackage{algorithm}
\usepackage{algpseudocode}
\usepackage{array}
\usepackage{multicol}
\usepackage{multirow}
\usepackage{tabularx}
\usepackage{caption}
\usepackage{siunitx}

\captionsetup{format=plain,justification=justified}
\hypersetup{
    colorlinks = true,
    urlcolor   = blue,
    citecolor  = blue,
}
\newtheorem{lemma}{Lemma}

\newtheorem{theorem}{Theorem}
\newcommand{\RomanNumeralCaps}[1]
\linenumbers

\newcommand{\refsec}[1]{section \ref{#1}}

\newcommand{\bsym}[1]{\boldsymbol{#1}}
\newcommand{\bu}{\bsym{u}}
\newcommand{\bU}{\bsym{U}}
\newcommand{\be}{\bsym{e}}
\newcommand{\bx}{\bsym{x}}
\newcommand{\Ubulk}{U_{\text{bulk}}}
\newcommand{\utot}{\bsym{u_{\text{tot}}}}
\newcommand{\ptot}{p_{\text{tot}}}
\newcommand{\hbu}{\hat{\bsym{u}}}
\newcommand{\id}{1}
\newcommand{\lx}{a}
\newcommand{\lz}{b}
\newcommand{\tx}{\tau_x}
\newcommand{\tz}{\tau_z}
\newcommand{\txz}{\tau_{xz}}

\newcommand{\sxy}{\sigma_{xy}}
\newcommand{\sx}{\sigma_x}
\newcommand{\sy}{\sigma_y}
\newcommand{\sz}{\sigma_z}
\newcommand{\syz}{\sigma_{yz}}
\newcommand{\sxyz}{\sigma_{xyz}}
\newcommand{\Gppfh}{G_{\text{PPF-}16}}
\newcommand{\Gpcfh}{G_{\text{PCF-}16}}
\newcommand{\Gppf}{G_{\text{PPF}}}
\newcommand{\Gpcf}{G_{\text{PCF}}}
\newcommand{\TW}{\text{TW}}
\newcommand{\rms}{\text{rms}}

\newcommand{\bbU}{\mathbb{U}}
\newcommand{\bbR}{\mathbb{R}}
\newcommand{\bbZ}{\mathbb{Z}}

\newcommand{\bbV}{\mathbb{V}}

\newcommand{\btau}{\overline{\tau}}
\newcommand{\dtr}{\delta_{\text{TR}}}

\title{Symmetry groups and invariant solutions of plane Poiseuille flow}

\author{Pratik P. Aghor\aff{1}
\corresp{\email{paghor3@gatech.edu}}
  \and John F. Gibson\aff{2}
  }

\affiliation{
  \aff{1}School of Earth and Atmospheric Sciences, Georgia Institute of Technology, USA
  \aff{2}Integrated Applied Mathematics Program, Department of Mathematics \& Statistics, University of New Hampshire, USA
}


\begin{document}
\maketitle
\begin{abstract}
Equilibrium, traveling-wave, and periodic-orbit solutions of the Navier-Stokes equations provide a
promising avenue for investigating the structure, dynamics, and statistics of transitional flows.
Many such invariant solutions have been computed for wall-bounded shear flows, including
plane Couette, plane Poiseuille, and pipe flow. However, the organization of invariant solutions 
is not well understood. 
In this paper we focus on the role of symmetries in the organization and computation of invariant
solutions of plane Poiseuille flow. We show that enforcing symmetries while computing invariant 
solutions increases the efficiency of the numerical methods, and that redundancies between search spaces
can be eliminated by consideration of equivalence relations between symmetry subgroups. We determine
all symmetry subgroups of plane Poiseuille flow in a doubly-periodic domain up to translations by half
the periodic lengths and classify the subgroups into equivalence classes, each of which represents
a physically distinct set of symmetries and an associated set of physically distinct invariant solutions. 
We calculate fifteen new traveling waves of plane Poiseuille flow in seven distinct symmetry groups
and discuss their relevance to the dynamics of transitional turbulence. We present a few examples
of subgroups with fractional shifts other than half the periodic lengths and one traveling wave 
solution whose symmetry involves shifts by one-third of the periodic lengths. We conclude with 
a discussion and some open questions about the role of symmetry in the behavior of shear flows.
\end{abstract}

\begin{keywords}
Transition to turbulence, Navier-Stokes equations, bifurcation
\end{keywords}
{\bf MSC Codes }  {\it(Optional)} Please enter your MSC Codes here
\section{Introduction}
\label{sec:intro}

In the past three decades, computations of invariant solutions 
of the Navier-Stokes equations  have considerably advanced the understanding of subcritical
transition to turbulence in wall-bounded shear flows.
Invariant solutions,  or ``exact coherent structures,'' (\cite{graham2021exact}) have been computed primarily in pipe flow,
plane Couette flow, and plane Poiseuille flow (\cite{graham2021exact} and references therein). 
\cite{nagata1990three} calculated the first known pair of finite amplitude, nonlinear equilibrium
solutions for plane Couette flow using a numerical continuation technique, continuing wavy vortex
flow from the Taylor-Couette to plane Couette conditions. The same solution was independently found by \cite{clever1992three}
and \cite{waleffe2001exact}.
Numerous equilibria and traveling waves have since been found for the shear flows
mentioned above. For example, see \cite{clever1997tertiary}, \cite{waleffe1998three}, \cite{faisst2003traveling},
\cite{wedin2004exact},  \cite{gibson2009equilibrium}. Periodic and relative periodic orbits have also
been computed for plane Couette flow (see \cite{kawahara2001periodic, viswanath2007recurrent, cvitanovic2010geometry}),
plane Poiseuille flow (\cite{toh2003periodic}), and pipe flow (\cite{duguet2008transition}). 
Relatively fewer solutions
have been computed for plane Poiseuille flow (see \cite{ehrenstein1991three}, \cite{waleffe1998three},
\cite{toh2003periodic}, \cite{waleffe2003homotopy}, \cite{gibson2014spanwise}, \cite{zammert2015crisis},
and \cite{park2015exact}).

Another line of work has been to study transitional turbulence in large periodic domains
(see \cite{tuckerman2020patterns} and references therein). In large domains, localized puffs
in the case of pipe flow ( \cite{avila2011onset}) and bands in the cases of plane Couette 
(\cite{tuckerman2011patterns}) and plane Poiseuille flow (\cite{tuckerman2014turbulent})
are known to be important structures that facilitate the transition to turbulence. In plane
Couette and Poiseuille flows, the alternating laminar and turbulent bands are often found to
be arranged at $24^{\circ}$ to the streamwise direction. These tilted transitional bands have
been studied in small doubly-periodic domains with axes tilted with respect to the streamwise
(\cite{tuckerman2011patterns}, \cite{tuckerman2014turbulent}). The structures seen in such
simulations, e.g. puffs in the case of pipe flow or bands in planar flows, might not be universal,
as was recently shown in \cite{gome2024phase}. Some phenomenological models have been proposed
for the transition to turbulence in pipe flows, based on excitable media (\cite{barkley2016theoretical})
or stochastic ecological models (\cite{shih2016ecological},\cite{wang2022stochastic}). Though
the present study is primarily focused on periodic domains aligned with the streamwise forcing, 
extension to tilted domains is possible and is briefly discussed in section \ref{sec:extension_to_tilted_domains}.

The importance of invariant solutions stems from their role in the dynamical-systems perspective on 
transitional turbulence.
This view considers the dynamics of turbulence in an infinite-dimensional state space, where each
point represents an entire fluid velocity field, and where state-space dynamics are given by the
Navier-Stokes equations. In this view, an equilibrium solution is an equilibrium point in state space,
and a periodic orbit is a closed loop. Equilibria and periodic orbits are thus zero- and
one-dimensional invariant solutions of the governing equations; i.e. each is a subset of state
space that remains unchanged under the evolution of dynamics. Since individual states in an
invariant solution repeat themselves in finite time under the dynamics, the linear stability of
solutions can be computed by numerical integration of small perturbations \citep{viswanath2007recurrent}.
Invariant solutions for transitional flows typically have very few unstable eigenmodes, 
a moderate number of weakly stable modes and a very large number of strongly stable
modes, so that the dynamics in the neighborhood of a solution is inherently low-dimensional. 
Further, we expect a generic dynamical system to have finitely many equilibria (up to equivalence
under coordinate transformations), and a generic chaotic system to have a countable and dense 
set of periodic orbits in its chaotic attractor or repeller. The combination of linearization 
and countability leads to periodic orbit theory, in which averages over chaotic sets are 
computed from the linearizations of finite sets of periodic orbits (\cite{cvitanovic2005chaos}). 
The hope, then, for the study of invariant solutions of wall-bounded shear flows is to develop 
quantitatively accurate, dynamical, spatio-temporal model of transitional turbulence 
based on computations of invariant solutions and their linearizations. Short of that, the
equilibrium, traveling-wave, and periodic-orbit solutions of a flow serve as precise
representative examples of the flow's complex, far-from-laminar spatiotemporal dynamics.
\citep{gibson2008visualizing, budanur2017relative}. 

Invariant solutions often satisfy a set of symmetries inherent to the equations of motion.
The set of symmetries determines which types of solutions are possible 
and which solution parameters should be treated as free variables in a search. 
For example, in plane Couette flow, a parametrically robust equilibrium must have reflection or
shift-refect symmetries in both the spanwise and streamwise directions in order to fix the
spatial phase of the velocity field. In the absence of such reflection symmetries, we expect to find
traveling waves rather than equilibria, and the wavespeed must be included as an unknown
in the search space. Further, symmetries can be leveraged to increase the efficiency of the
computational search for invariant solutions, if they are specified prior to the search 
and imposed as constraints on the search space. Imposing a second-order symmetry reduces
the dimensionality of the search space by a factor of two, leading to a corresponding speedup
in the iterative search algorithms used to compute solutions. For both these reasons, it is
advantageous to specify the set of symmetries of a solution as a kind of control parameter
for its computation. In practice, to find an invariant solution, we first specify a symmetry
group, next determine which type of invariant solution the symmetry group allows, and
then search in the constrained subspace for solutions of the appropriate form. 

Knowledge of the subgroups of the general symmetry group of a flow is thus crucial for computing
the invariant solutions of that flow. The subgroups of plane Couette flow were partially
enumerated and classified in \cite{gibson2009equilibrium}, with an emphasis on subgroups that
support equilibrium solutions. In this paper, we partially classify the subgroups of plane Poiseuille
flow in doubly-periodic rectangular domains. Our main results are (1) determination of all symmetry
subgroups of rectangular, periodic plane Poiseuille flow up to phase shifts of half the spanwise and
steamwise periodic lengths, (2) classification of these subgroups into equivalence classes of subgroups
that are the same up to coordinate transformations, and (3) computation of new traveling waves in several
previously explored and some hitherto unexplored symmetry subgroups. We find that doubly-periodic 
plane Poiseuille flow has thirty-nine nontrivial subgroups up to half-box shifts, and that these 
thirty-nine subgroups fall into twenty-four equivalence classes with distinct dynamical behavior. 
We also expand on the analysis of plane Couette subgroups in \cite{gibson2009equilibrium} by computing
and classifying subgroups that support traveling waves and subgroups with diagonal phase shifts.

\subsection{The governing equations and invariant solutions} \label{sec:governing_eqns}

We focus on plane Poiseuille flow driven by a constant bulk velocity $\Ubulk$ in the streamwise direction and zero bulk 
velocity in the spanwise direction. We define the Reynolds number as $Re = U_c h/\nu$, where $h$ is the
channel half-height, $U_c = 3 \Ubulk/2$ is the centerline velocity of the parabolic laminar flow with
the same streamwise bulk velocity, and $\nu$ is the kinematic viscosity of the fluid. After 
nondimensionalization the equations of motion are
\begin{align}\label{eqn:gov_eqns_tot}
 \begin{split}
  \frac{\partial \utot}{\partial t} + \utot \cdot \nabla \utot & =  - \nabla \ptot + \frac{1}{Re} \nabla^{2} \utot , \\
  \nabla \cdot \utot &= 0.
 \end{split}
\end{align}
We denote the spatial domain of (\ref{eqn:gov_eqns}) by $\Omega$ and the spatial coordinates 
by $\bx = (x,y,z) \in \Omega$, where $x$ is the streamwise direction, $y$ is wall-normal, and $z$ is spanwise.
We take $\Omega$ to be doubly periodic with periodic lengths $L_x$ and $L_z$, so that
$\Omega = [0, L_x) \times [-1, 1] \times [0, L_z)$, where $[0,L)$ denotes the periodic interval
of length $L$. 

The total velocity and pressure fields are decomposed into a sum of a base flow and a fluctuation,
$\utot =\bU + \bu, \ptot = P + p$, where the base velocity $\bU$ is the nondimensionalized laminar flow
solution $\bU(y) = (1-y^{2}) \be_x$, for which $U_c = 1$ and $\Ubulk = 2/3$. The boundary conditions
on $\bu$ are periodic in $x$ and $z$ and $\bu = 0$ at the walls $y = \pm 1$. 
The base pressure takes the form $P(t) = x P_x(t) + z P_z (t)$ giving the total pressure gradient the form
$\nabla \ptot(\bx, t) = P_x(t) \, \be_x + P_z(t) \, \be_z + \nabla p(\bx,t)$, where 
$\be_x, \be_z$ are unit vectors in the $x$ and $z$ directions. We constrain $\nabla p(\bx,t)$ to have
zero spatial mean so that $P_x(t)$ and $P_z(t)$ are uniquely determined and adjust dynamically to balance
the stream- and spanwise bulk velocity constraints. In terms of the spatial average operator
\begin{equation}
 \langle \cdot \rangle_{\Omega} = \frac{1}{\text{vol}(\Omega)}  \int_{\Omega} \cdot \, d \Omega,
\end{equation}
the constraint for the pressure decomposition is
$\langle \nabla p \rangle_{\Omega} = 0$. The base flow $\bU(y)$ carries the bulk velocity,
$\langle \utot \rangle_{\Omega} = \langle \bU \rangle_{\Omega} = \Ubulk \, \be_x$, so that
$\langle \bu \rangle_{\Omega} = 0$.
With this decomposition \ref{eqn:gov_eqns_tot} becomes
\begin{align}\label{eqn:gov_eqns}
 \begin{split}
  \frac{\partial \bu}{\partial t} + \bu \cdot \nabla \bu 
     &= -\nabla p + \frac{1}{Re} \nabla^2 \bu 
       + \left(\frac{2}{Re} - P_x \right) \be_x - P_z \be_z - 2 y v \be_y - (1-y^2) \frac{\partial \bu}{\partial x}, \\
  \nabla \cdot \bu &= 0.
 \end{split}
\end{align}
Here $v$ is the wall-normal component of $\bu$; i.e. $\bu(\bx,t) = [u,v,w](x,y,z,t)$.
Henceforth we refer to the fluctuations $\bu$ and $p$ as ``velocity'' and ``pressure.''
Since the total pressure terms $- p - P_x \be_x - P_z \be_z$ act as Lagrange multipliers,
determined instantaneously to meet the incompressibility and bulk velocity constraints,
we can view the system (\ref{eqn:gov_eqns}) as defining $\partial \bu/\partial t$ as a function
of $\bu$ alone. We represent this infinite-dimensional dynamical system abstractly as
\begin{align}\label{eqn:dynamics}
  \begin{split}
  \frac{\partial \bu}{\partial t} &= f(\bu), \\
  \bu(t) &= \phi^t(\bu(0)).\\
 \end{split}
\end{align}
for $t \geq 0$. These equations represent the infinitesimal and finite-time dynamics of (\ref{eqn:gov_eqns})
with the given boundary conditions and bulk constraint. We take the space of solutions of (\ref{eqn:gov_eqns}) 
to be the set of square-integrable, divergence-free velocity fields on $\Omega$ with Dirichlet boundary 
conditions at the walls, 
\begin{align}\label{eqn:uspace}
  \bbU = \{\bu \in L^2(\Omega) : {\bf \nabla} \cdot \bu = 0, \; \bu|_{y=\pm 1} = 0\},
\end{align}
where the $L^2$ norm $\|\cdot\|$ is defined by $\|\bu\|^2 = \langle \bu \cdot \bu \rangle_{\Omega}$. 
Conveniently, $\bbU$ is a vector space. 


An invariant solution of plane Poiseuille flow is a velocity field $\bu \in \bbU$ satisfying
\begin{align}\label{eqn:invariance}
  \sigma \phi^t(\bu) - \bu = 0
\end{align}
for some time $t>0$ and some symmetry $\sigma$ of the equations of motion (see \refsec{sec:symmetries}). 
Equilibria satisfy (\ref{eqn:invariance}) for all $t > 0$ and $\sigma = \id$. Traveling waves satisfy
(\ref{eqn:invariance}) for all $t > 0$ and some $\sigma(t) = \tau(c_x t, c_z t)$, a phase shift that
increases linearly in time by some fixed wavespeeds $c_x$ and $c_z$, one or both nonzero (see (\ref{def:symmetries_PPF})).
Periodic orbits satisfy (\ref{eqn:invariance}) with $\sigma=\id$ and $t = T$ for a fixed $T  > 0$, with
$\phi^t(\bu) \neq \bu$ for $0 < t < T$. Relative periodic orbits satisfy equation
(\ref{eqn:invariance}) with the same temporal conditions as periodic orbits but with
$\sigma \neq \id$. Invariant solutions are computed by solving
(\ref{eqn:invariance}) in discretized form from initial guesses $\hat{\bu}$ and 
potentially $\hat{\sigma}, \hat{T}$ that approximately satisfy the discretized equations.
For simple geometries such as doubly-periodic rectangular boxes and low to moderate Reynolds
numbers, discretizing $\bu$ with a spectral representation and $\phi^t$ with finite-difference
time-stepping results in a system of $10^4$ to $10^6$ equations on the same number of unknown
spectral coefficients. Depending on the type of invariant solutions, the system can include
additional unknowns parameters such as $c_x, c_z$ or $T$ and additional constraint equations
related to these variables. The resulting system of nonlinear discretized equations can
be solved efficiently using trust-region Newton-Krylov methods \citep{viswanath2007recurrent}.
\subsection{Symmetries, equivariance, and invariant subspaces} \label{sec:symmetries}

The symmetries $\sigma$ allowed in (\ref{eqn:invariance}) are given by the plane Poiseuille
symmetry group $\Gppf$. This group is generated by two discrete symmetries corresponding to
reflections about the wall-normal and spanwise axes $y$ and $z$, and two continuous
symmetries corresponding to translations in the streamwise and spanwise directions $x$ and $z$.
We represent these symmetries by operators $\sy, \sz,$ and $\tau(\lx,\lz)$, whose actions on
velocity fields $\bu(\bx) = [u,v,w](x,y,z)$ are 
\begin{align}\label{def:symmetries_PPF}
 \begin{split}
  \sy[u, v, w](x, y, z) &= [u, -v, w](x, -y, z),\\
  \sz[u, v, w](x, y, z) &= [u, v, -w](x, y, -z), \\
  \tau (\lx, \lz) [u, v, w](x, y, z) &= [u, v, w](x + \lx, y, z + \lz),
 \end{split}
\end{align}
where $\lx,\lz \in \bbR$ represent continuous phase shifts. (These definitions are in accordance with
\cite{gibson2014spanwise} and differ from the definitions in \cite{gibson2009equilibrium} which were
specialized to plane Couette flow.) The action of a symmetry can be interpreted as either as mapping one velocity
field to another in a fixed coordinate system, or a change in the representation of a given velocity
field due to a change of coordinates. 

The plane Poiseuille symmetry group $\Gppf$ is then defined as the set of all possible products
of the generators $\sy$, $\sz,$ and $\tau(a,b)$,
\begin{align} \label{def:Gppf}
  \Gppf = \langle \sy,\, \sz,\, \{\tau(\lx, \lz) : a,b \in \bbR \} \rangle,
\end{align}
with group multiplication determined by (\ref{def:symmetries_PPF}) and substitution. For example,
the product $\sy \tau(a,b)$ is given by its action $(\sy \tau(a,b)) [u,v,w](x,y,z)) = [u, -v, w](x+a, -y, z+b)$.
From (\ref{def:symmetries_PPF}) we can deduce the multiplication rules 
\begin{align} \label{eqn:multiplication_rules}
  \begin{split}
    \sy^2 &= \sz^2 = \id, \\
    \sy \sz &= \sz \sy, \\
  \tau(\lx, \lz) \, \sy &= \sy \, \tau(\lx, \lz), \\
  \tau(\lx, \lz) \, \sz &= \sz \, \tau(\lx, \, -\lz), \\
  \tau(a_1, b_1) \, \tau(a_2, b_2) &= \tau(a_1+a_2, \, b_1+ b_2).
  \end{split}
\end{align}
The inverses of $\sy, \sz,$ and $\tau(a,b)$ are thus $\sigma_y^{-1} = \sigma_y$,
$\sigma_z^{-1} = \sigma_z$, and $\tau^{-1}(a,b) = \tau(-a, -b)$. We use concatenation
of subscripts to indicate multiplication of group elements, e.g. $\syz = \sy\sz$.
The multiplication rules and concatenation notation allow any element of $\Gppf$
to be written as the product of a single reflection symmetry (one of $1, \sy, \sz,$ or $\syz$)
and a single translation $\tau(a,b)$. We will express such products in either order
depending on context and typographical considerations.


Plane Poiseuille flow is {\em equivariant} in $\Gppf$, meaning that all symmetries 
$\gamma \in \Gppf$ commute with the dynamics $f$ and $\phi^t$,
\begin{align} \label{eqn:equivariance}
  \begin{split}
    f(\gamma \bu) &= \gamma f(\bu), \\
    \phi^t(\gamma \bu) &= \gamma \phi^t(\bu),
  \end{split}
\end{align}
for all $\bu \in \bbU$ and all $t\geq0$. In terms of physics, equivariance means
that the equations of motion are preserved under any coordinate transformation described
by the group. Given a solution $\bu(t)$ of (\ref{eqn:dynamics}) and some coordinate transformation
$\gamma \in \Gppf$, let $\bu'(t) = \gamma \bu(t)$. Then
\begin{align} \label{eqn:solution_preservation}
\bu'(t) = \gamma \bu(t) = \gamma \phi^t(\bu(0)) = \phi^t(\gamma \bu(0)) = \phi^t(\bu'(0)),
\end{align}
and thus $\bu'(t)$ is also a solution of (\ref{eqn:dynamics}). The dynamics $\phi^t$
are thus preserved under the coordinate transformation $\gamma : \bu(t) \rightarrow \bu'(t)$. 
The relations in (\ref{eqn:solution_preservation}) are conveyed by the commutation diagram
\begin{align} \label{eqn:commutation_diagram}
  \begin{CD}
\bu(0) @> \phi^t >> \bu(t) \\
@VV \gamma V @VV \gamma V\\
\bu'(0) @> \phi^t >> \bu'(t), \\
\end{CD}
\end{align}
which shows how the dynamics $\phi^t$ maps initial conditions $\bu(0)$ and $\bu'(0)$
to future states $\bu(t)$ and $\bu'(t)$, and how a symmetry $\gamma$ maps a given 
trajectory $\bu(t)$ into a distinct but dynamically equivalent trajectory 
$\bu'(t) = \gamma \bu(t)$. It is worth emphasizing that the actions of symmetries
$\gamma: \bu(t) \rightarrow \bu'(t)$ are vastly simpler than the action of dynamics
$\phi^t : \bu(0) \rightarrow \bu(t)$: the former are simple, invertible coordinate 
transformations, whereas the latter is time integration of the Navier-Stokes equations.
In this light, the commutation diagram shows that the complex time evolution of the
trajectory $\bu'(t)$ is just a mirror image of the complex time evolution of $\bu(t)$,
under some symmetric ``mirror'' $\gamma$.

A second important property of equivariance is the preservation of the symmetries of 
states under dynamics. If $\bu(0)$ is a $\gamma$-symmetric initial condition, i.e.\ an
initial fluid state that satisfies $\bu(0) = \gamma \bu(0)$ for some $\gamma \in \Gppf$, then 
\begin{align} \label{eqn:symmetry_preservation}
\bu(t) = \phi^t(\bu(0)) = \phi^t(\gamma \bu(0)) = \gamma \phi^t(\bu(0)) = \gamma \bu(t),
\end{align}
so that the entire trajectory $\bu(t)$ for $t \geq 0$ is $\gamma$-symmetric as well.
The preservation of symmetry under dynamics also holds for subgroups of $\Gppf$.
Let $H$ be a subgroup of $\Gppf$, and suppose $\bu(0)$ is $h$-symmetric
for all $h \in H$. Then $\bu(t)$ is $h$-symmetric for all $h \in H$
and all $t\geq0$. A subgroup $H$ of $\Gppf$ thus defines a dynamically invariant
subspace of velocity fields,
\begin{align} \label{eqn:invariant_subspace}
\bbU_H = \{\bu \in \bbU : \bu = h \bu \; \forall h \in H\},
\end{align}
with $\bu(0) \in \bbU_H$ implying $\bu(t) \in \bbU_H$ for all $t \geq 0$. 
Since the symmetries in $\Gppf$ act linearly on $\bu$, each symmetric subspace $\bbU_H$
is a linear subspace of $\bbU$, in that any linear combination of states in $\bbU_H$
remains $\bbU_H$.

\subsection{Leveraging symmetry in computation of invariant solutions}
\label{sec:preliminary} 

The relations between the equations of motion, symmetries, and invariant solutions
have important practical consequences for the computation of invariant solutions.
Firstly, the symmetries of a given subgroup $H$ determine which kinds of invariant solutions
are allowed within the associated symmetric subspace $\bbU_H$. For example, subgroups
and subspaces with $\sz$ symmetry do not support traveling waves that travel in the
spanwise direction since the restriction $[u,v,w](x,y,z) = [u,v,-w](x,y,-z)$ requires
all $z$-varying velocity fields in $\bbU_H$ to have $u,v$ even and $w$ odd about the fixed
$z=0$ plane. 

Secondly, the computational cost of finding an invariant solution can be significantly
reduced by enforcing known symmetry constraints on the search space. The dominant cost of
computing an invariant solution comes from the many fixed-length time integrations of small
velocity perturbations in the iterative Krylov-subspace solution to the Newton-step equation.
Enforcing known symmetries on all velocity fields reduces the effective dimension
of the space that the Krylov algorithm explores, so that fewer iterations and fewer 
time integrations of perturbations are needed to converge on a solution to a given accuracy.
For example, velocity fields with $\sy$ symmetry satisfy $[u,v,w](x,y,z) = [u,-v,w](x,-y,z)$,
and these even/odd symmetries in $y$ imply that half the coefficients of the Chebyshev expansions
of $[u,v,w]$ in $y$ are zero. Given the variety of possible symmetry constraints and the
general complexity of the algebraic relations they impose on the coefficients, it is impractical
to develop specialized time-integration codes that operate on the reduced set of spectral coefficients for
a given set of symmetries. However, arbitrary symmetries can be imposed on the full-coefficient
search space by projection operations $\bu \rightarrow (\bu + \gamma \bu)/2$ for second-order
$\gamma$ (or  $\bu \rightarrow (\bu + \gamma \bu + \gamma^2 \bu \ldots + \gamma^{m-1} \bu)/m$ for $m$th-order
$\gamma$). Such projection-based symmetry enforcement is commonly used in the computation of 
invariant solutions \citep{willis2013revealing,gibson2009equilibrium}, and available as functionality in
widely-used codes such as Openpipeflow \citep{willis2017openpipeflow} and Channelflow 
\citep{gibson2024channelflow}.

Each independent second-order symmetry restriction halves the effective dimensionality
of the search space and roughly doubles the convergence rate of the Krylov-subspace solution
of the Newton-step equation, so that enforcing $n$ independent symmetry constraints for a
subgroup $H$ with $n$ second-order generators decreases the overall computational cost of
the search by a factor of $2^n$. 
This halving of effective dimensionality has group-theoretic origin. Each symmetry generator
$\gamma$ in this case is order two, making the group $\{\id, \gamma\}$ isomorphic to the group
$\bbZ_2$. There are two one-dimensional irreducible representations of $\bbZ_2$, the trivial
representation defined by $R_{\textrm{trivial}}(g) = 1$ and the so-called 'sign' representation
defined by $R_{\textrm{sign}}(g) = -1$, for $g \in \langle \gamma \rangle$.  The isotypic
decomposition of the vector space on which these representations act can be written as
$\bbV = \bbV_{\textrm{trivial}} \oplus \bbV_{\textrm{sign}}$, with 
$\bbV_{\textrm{trivial}} = \{f: \gamma f(\bsym{x}) = f(\bsym{x})\}$ and $\bbV_{\textrm{sign}} = \{f: \gamma f(\bsym{x}) = -f(\bsym{x})\}$.
In a symmetry-adapted basis, the Jacobian of the Newton-Krylov iteration takes a
block-diagonal form with two block diagonals that decouple the even/odd modes and effectively
halve the dimension of the search space. 
This procedure can be generalized for other symmetries of order $3$ and higher.

\begin{figure}  
 \begin{center}
   \includegraphics[width=0.8\textwidth]{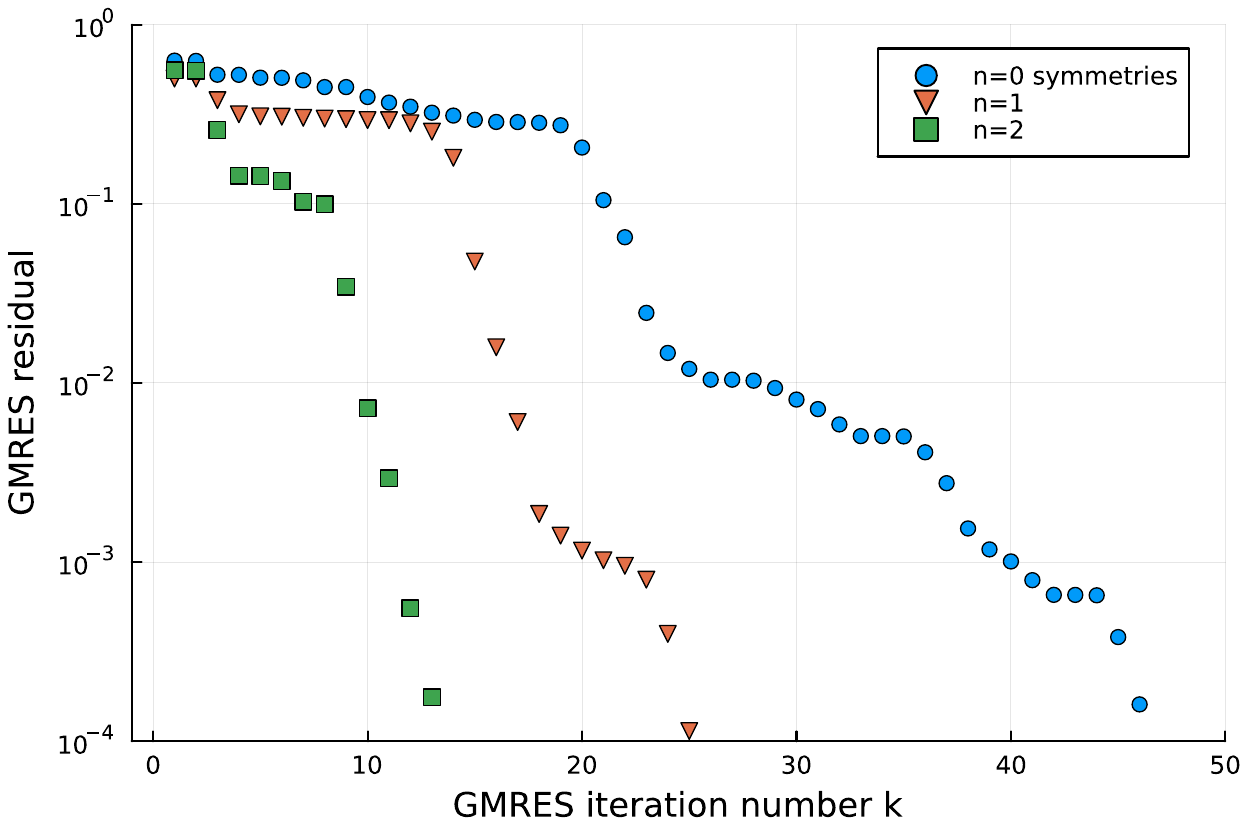}
 \end{center}
 \caption{{\bf Effect of symmetry restrictions on the convergence of Krylov-subspace methods.}
The figure shows the normalized residual $\|Dg|_{\hbu}\; \Delta \hbu^{(k)} + g(\hbu)\|/\|g(\hbu)\|$
of the Newton-step equation versus GMRES iteration number $k$ while enforcing $n=0$, 1, and 2
independent second-order symmetry constraints during a computation of the Nagata lower-branch
equilibrium of plane Couette flow. The solution was computed for $\Rey=400$ and $L_x, L_z = 2\pi, \pi$ 
on a $48 \times 49 \times 48$ grid with 2/3-style dealiasing, and 3rd-order semi-implicit backwards
differentiation time stepping.}
 \label{fig:gmres_convergence}
\end{figure}

Figure \ref{fig:gmres_convergence} illustrates this $2^n$ speed-up in the computation of
the upper-branch Nagata equilibrium of plane Couette flow. The plot shows the convergence
of the GMRES algorithm with $n=0,1,$ and 2 symmetries imposed.
The invariance equation for this solution is $g(\bu) = \phi^T(\bu) - \bu = 0$
(eqn. \ref{eqn:invariance} with $\sigma = 1$,  $t$ set to a fixed value $T$,
and plane Couette conditions rather than plane Poiseuille). Given an approximate
solution $\hbu$ of $g(\bu) = 0$, the Newton step is the solution $\Delta \hbu$
of the linear system $Dg|_{\hbu}\; \Delta \hbu = - g(\hbu)$. For the given discretization
with $32 \times 49 \times 32 \times 2$ independent spectral coefficients, the discrete representations
of $\hbu$ and $\Delta \hbu$ are vectors of about $10^5$ free variables, and $Dg|_{\hbu}$ is a
$10^5 \times 10^5$ matrix, the derivative of each component of the discretized equation 
$g(\bu)$ with respect to each spectral coefficient in $\bu$ evaluated at $\bu = \hbu$.
An explicit representation of $Dg$ is not required for matrix-free Krylov methods;
instead it is sufficient to calculate the action of $Dg|_{\hbu}$ on test vectors $\Delta \hbu$
using finite differencing,
$Dg|_{\hbu} \; \Delta \hbu = (g(\hbu + \epsilon \Delta \hbu) - g(\hbu)) /\epsilon + O(\epsilon)$.
The initial guess for the search was constructed from a smooth, divergence-free, no-slip
perturbation of the previously-computed Nagata equilibrium, with magnitude 0.02 relative
to the unperturbed solution in $L_2$ norm. This perturbed velocity field was integrated forward
in time for ten outer time units in order to eliminate any artificiality in its spectral
characteristics and then projected into the appropriate symmetric subspace to form the
initial guess $\hbu$. 
The symmetry group of the Nagata solution in the chosen spatial phase is
$H = \{1, \sxy \tz, \sz \tx, \sxyz \txz \}$ (see eqn.\ \ref{eqn:halfbox_shifts}).
From these elements we select two generators $h_1 = \sxy \tz$ and $h_2 = \sz \tx$.
The figure shows GMRES convergence of the same Newton step with no imposed symmetry
($n=0$), with $h_1$ imposed ($n=1$), and with $h_1, h_2$ imposed ($n=2)$. Symmetries
were imposed by projection $\bu \rightarrow (\bu + h_i \bu)/2$ at intervals $\Delta T=1$.

The convergence of GMRES is measured by the normalized residual of the Newton-step equation,
namely $\|Dg|_{\hbu} \, \Delta \hbu + g(\hbu)\|/\|g(\hbu)\|$.
The figure shows that for any given level of tolerance in the residual, GMRES
converges about twice as quickly with one symmetry enforced, and four times as quickly 
with two symmetries enforced, confirming the expected $2^n$ improvement.

\subsection{Equivalent subgroups and equivalence classes}
\label{sec:equivalence}

The efficiency gains outlined in section \ref{sec:preliminary} require that the generators 
of the symmetry subgroup are specified beforehand as control parameters for the numerical search. 
We are thus lead to the questions
\begin{itemize}
\item  For a given flow and general symmetry group $G$, what are the subgroups $H$ of $G$?
\item  How are those subgroups specified in terms of generators? 
\item  Which subgroups of $G$ are dynamically distinct, and which are dynamically equivalent?
\end{itemize}
The main aim of this paper is to answer these questions partially in the context of plane Poiseuille flow. 
In this section, we use the language of equivalence relations and equivalence classes to formalize  
the idea of dynamical equivalence and extend it to symmetry subgroups and their associated symmetric
subspaces. 

For fluid states, we define two velocity fields $\bu$ and $\bu'$ to be {\em equivalent} or 
if there is a symmetry $\gamma \in \Gppf$ such that $\bu' = \gamma \bu$.
Due to the properties of symmetries as a group, this definition of equivalence is reflexive, symmetric,
and transitive, and so forms an {\em equivalence relation} on velocity fields. We follow common practice
and indicate equivalence with the symbol $\sim$, i.e. $\bu \sim \bu'$ means $\bu$ and $\bu'$ are equivalent
by the above definition. The {\em equivalence class} $\overline{\bu}$ of a given velocity field $\bu$ is
then defined as the set of all velocity fields equivalent to $\bu$ under the given equivalence relation,
i.e.
\begin{align}
  \overline{\bu} &= \{\bu' \in \bbU : \bu' \sim \bu\} 
\intertext{or equivalently,}
  \overline{\bu} &= \{\gamma \bu : \gamma \in \Gppf \}.  
\end{align}
For example, the equivalence class of a given traveling-wave solution is the same traveling
wave transformed by all possible combinations of phase shifts and $y$ and $z$ reflections. 
Similarly, we define two trajectories $\bu(t)$ and $\bu'(t)$ to be equivalent if there
is a $\gamma \in \Gppf$ such that $\bu'(t) = \gamma \bu(t)$. This defines an equivalence
relation on trajectories whose equivalence classes are sets of equivalent trajectories.
By (\ref{eqn:solution_preservation}) and (\ref{eqn:commutation_diagram}), if two initial
conditions are equivalent, their ensuing trajectories are equivalent. We will often 
refer to such states and trajectories as {\em dynamically equivalent} to emphasize 
that $\bu'(t)$ and $\bu(t)$ are the same trajectory up to a coordinate transformation.
If two states or trajectories are not equivalent, we call them {\em dynamically distinct}.

Equivalence between symmetry groups is somewhat more complicated. Suppose that $\bu$ and $\bu'$
are dynamically equivalent states, related by $\bu' = \gamma \bu$ for some $\gamma \in \Gppf$,
and that $\bu$ has symmetry $h \in \Gppf$. Then $\bu'$ has symmetry $h' = \gamma h \gamma^{-1}$,
the {\em conjugate} of $h$ under $\gamma$. The conjugacy relation can be derived by substituting
$\bu = \gamma^{-1} \bu'$ into $\bu = h \bu$ and then applying $\gamma$ from the left. The conjugacy
relation extends to symmetry groups; namely, if the symmetry group of $\bu$ is $H$ and
$\bu' = \gamma \bu$, then the symmetry group of $\bu'$ is 
\begin{align} \label{eqn:group_conjugacy}
H' = \gamma H \gamma^{-1} = \{ \gamma h \gamma^{-1} : h \in H\},
\end{align}
the conjugate of $H$ under $\gamma$. If we interpret $\bu' = \gamma \bu$ as a coordinate
transformation on velocity fields, then $h = \gamma h \gamma^{-1}$ and $H' = \gamma H \gamma^{-1}$
are the corresponding coordinate transformation on symmetries and symmetry groups.

Two subgroups $H$ and $H'$ of $\Gppf$ are then defined as equivalent, $H' \sim H$, if there is a
$\gamma \in \Gppf$ such that $H' = \gamma H \gamma^{-1}$. This relation is reflexive, symmetric,
and transitive and so defines an equivalence relation on the set of subgroups of $\Gppf$. The
equivalence class $\overline{H}$ of subgroup $H$ is then the set of all subgroups equivalent to $H$,
\begin{align} \label{eqn:group_equiv_class}
\overline{H} = \{H' \leq \Gppf : H' \sim H \} = \{ \gamma H \gamma^{-1} :  \gamma \in \Gppf \}
\end{align}
An equivalence class of subgroups represents a given set of symmetries viewed from 
all coordinate transformations that preserve the dynamics. By general theorems of equivalence
relations and classes, the equivalence classes of subgroups of $\Gppf$ are nonempty,
disjoint, and complete. Every subgroup $H$ of $\Gppf$ is in one and only one equivalence class,
and all subgroups in a given equivalence class are equivalent. Lastly, two invariant symmetric
subspaces $\bbU_H$ and $\bbU_{H'}$ are defined to be equivalent if their symmetry subgroups $H$
and $H'$ are equivalent. 

The import of these abstract ideas for the problem at hand is this: In general a given
invariant solution $\bu$ with symmetry group $H$ and in subspace $\bbU_H$ exists in an
infinite number of equivalent forms $\bu' = \gamma \bu$ with equivalent symmetry groups 
$H' = \gamma H \gamma^{-1}$ and equivalent subspaces $\bbU_H'$. We do not need to search
for invariant solutions in all possible subgroups and subspaces, but only for those that 
are dynamically distinct (not equivalent). To accomplish this, we first 
determine the subgroups $H$ of $\Gppf$ and classify them into equivalence classes
$\overline{H}$. We then choose one representative subgroup $H$ from each equivalence
class and search for invariant solutions within $\bbU_H$. This approach will save us
from searching redundantly in multiple different but dynamically equivalent subspaces;
i.e. from finding the same solutions repeatedly with different phase shifts and orientations. 

\subsection{Organization} \label{sec:organization}

This paper is organized as follows.
In section \ref{sec:finite_subgroups} we partially enumerate the finite subgroups of $\Gppf$
and their equivalence classes. 
Section \ref{sec:nl_tws} details $15$ newly-found traveling wave solutions in several
of the subgroups given in section \ref{sec:finite_subgroups}. Various properties of the 
newly-found traveling waves are analyzed, such as their physical structure, their continuation curves
in Reynolds number, and their bifurcations. 
Section \ref{sec:discussion} focuses on the dynamical importance of the newly-found traveling waves. 
Conclusions are presented in section \ref{sec:conclusion}.


\section{Finite subgroups of plane Poiseuille flow}\label{sec:finite_subgroups}

We seek to determine the finite subgroups of the plane Poiseuille symmetry group $\Gppf$,
to classify them into equivalence classes of equivalent subgroups, to count the
number of equivalence classes (the number of dynamically distinct symmetric subspaces),
and to select one representative subgroup from each equivalence class. We focus on
discrete symmetries and finite subgroups, since the infinite subgroups with continuous
translation symmetries correspond to spanwise- and streamwise-constant velocity fields,
which are less relevant to the dynamics of turbulence. 

We assume a doubly-periodic domain $\Omega = [0, L_x) \times [-1, 1] \times [0, L_z)$. For phase shifts
$\tau(a,b)$, this allows us to conduct arithmetic on $a$ and $b$ modulo $L_x$ and $L_z$ respectively.
A velocity field on $\Omega$ with symmetry $\tau(a,0)$ for $0<a<L_x$ or $\tau(0,b)$ for $0<b<L_z$
can be recast on a smaller periodic domain with a simplified symmetry group. 
We will distinguish between {\em minimal-domain} subgroups, which have no such elements, and
{\em nonminimal-domain} subgroups, which do, and exclude nonminimal-domain subgroups
from consideration. 

Our strategy is as follows: 
In section \ref{sec:centering}, we begin with an arbitrary, finite, subgroup $\hat{H}$ 
of $\Gppf$ on a fixed, doubly-periodic domain $\Omega$. If $\hat{H}$ contains a $z$-reflection
symmetry of a certain form, we perform a phase shift that maps $\hat{H}$ to an equivalent subgroup
$H$ in which the $z$ reflection is centered at $z=0$. Otherwise we let $H = \hat{H}$. In either
case, $H$ is equivalent to $\hat{H}$. 
In section \ref{sec:second_order_elements}, we restrict attention to minmal-domain $\hat{H}$ and
$H$ whose elements are all second order, except for the identity. We show that all elements of $H$
with these restrictions can be written as products of the four second-order commuting symmetries 
$\sy, \sz, \tau(L_x/2, 0),$ and $\tau(0,L_z/2)$. All subgroups $H$ with the second-order
restriction are therefore subgroups of the 16th-order abelian group generated by these four
symmetries. In section \ref{sec:halfbox_groups}, we analyze the subgroups of the 16th-order
abelian group computationally. We generate all of its subgroups, eliminate those that imply
nonminimal domains, and determine the equivalence classes of the minimal-domain subgroups.
We find that there are 40 minimal-domain $z$-centered subgroups in 24 equivalence classes.
These are listed in Table \ref{tbl:subgroups}. 
It follows that any finite, minimal-domain subgroup $\hat{H}$ of $\Gppf$ with at most second-order
elements belongs to one of these 24 equivalence classes, and that these 24 equivalence classes
represent all possible finite, minimal-domain subgroups of $\Gppf$ with at most second-order elements.

In section \ref{sec:nonhalfbox_groups} we discuss subgroups $H$ with elements of order 3 or more,
present some examples, and state without proof some general principles of such groups. 

\subsection{Centering subgroups about $z=0$} \label{sec:centering}

\begin{theorem} \label{thm:zcenter}
Let $\hat{H}$ be a subgroup of $\Gppf$ for the doubly-periodic domain 
$\Omega = [0, L_x) \times [-1, 1] \times [0, L_z)$. Suppose $\hat{h} = \sy^k \sz \tau(a,b)$
is an element of $\hat{H}$, where $k=0$ or 1, $a \in [0, L_x)$,  and $b \in [0, L_z)$,
and let $\gamma = \tau(0,b/2)$. Then the subgroup $H = \gamma \hat{H} \gamma^{-1}$ of $\Gppf$
is equivalent to $\hat{H}$ and contains the element $h = \sy^k \sz \tau(a, 0)$.
\end{theorem}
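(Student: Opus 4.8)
The statement has two parts: that $H = \gamma \hat H \gamma^{-1}$ is equivalent to $\hat H$, and that $H$ contains $h = \sy^k \sz \tau(a,0)$. The first part is immediate from the definition of subgroup equivalence in \refsec{sec:equivalence}: since $\gamma = \tau(0, b/2)$ lies in $\Gppf$ and $H$ is the conjugate $\gamma \hat H \gamma^{-1}$, we have $H \sim \hat H$ by definition, with no computation required. All the content lies in the second part, and my plan is to show directly that $h = \gamma \hat h \gamma^{-1}$. Because $\hat h \in \hat H$ by hypothesis, conjugation by $\gamma$ sends $\hat h$ into $H = \gamma \hat H \gamma^{-1}$, so establishing this single identity places $h$ in $H$.

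To verify $h = \gamma \hat h \gamma^{-1}$, I would compute the conjugate explicitly using only the multiplication rules (\ref{eqn:multiplication_rules}). Writing $\gamma^{-1} = \tau(0, -b/2)$, the product to evaluate is $\tau(0, b/2)\, \sy^k \sz\, \tau(a,b)\, \tau(0,-b/2)$. First I would combine the two rightmost translations to get $\tau(a, b/2)$, reducing the expression to $\tau(0,b/2)\, \sy^k \sz\, \tau(a, b/2)$. Then I would commute the leading $\tau(0, b/2)$ rightward: it passes through $\sy^k$ unchanged (translations commute with $\sy$), and passing through $\sz$ it flips the sign of its spanwise shift, producing $\tau(0,-b/2)$ to the right of $\sz$. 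Combining this with the trailing $\tau(a, b/2)$ gives $\tau(a, -b/2 + b/2) = \tau(a,0)$, so that $\gamma \hat h \gamma^{-1} = \sy^k \sz \tau(a,0) = h$, as desired.

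The only step requiring care is the sign flip in the rule $\tau(\lx,\lz)\,\sz = \sz\,\tau(\lx,-\lz)$: this is exactly what allows the $-b/2$ introduced by commuting $\gamma$ through $\sz$ to cancel the $+b/2$ carried by the right-hand translation, zeroing out the spanwise shift. The streamwise shift $a$ is never altered, since $\gamma$ shifts only in $z$ and $\sz$ leaves the $x$-component of a translation fixed. There is no genuine obstacle here — the proof is a short, deterministic calculation — so the main thing to get right is bookkeeping of the two half-shifts and the one sign reversal; the $\sy^k$ factor is inert throughout and could equally be carried along or suppressed.
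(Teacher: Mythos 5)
Your proof is correct and follows essentially the same route as the paper's: equivalence of $H$ and $\hat H$ is immediate from the definition of conjugacy, and the identity $h = \gamma \hat h \gamma^{-1}$ is verified by the same explicit computation using the rule $\tau(\lx,\lz)\,\sz = \sz\,\tau(\lx,-\lz)$. The only cosmetic difference is the order in which you combine the translations (you merge the two rightmost first, the paper commutes $\gamma$ through $\sy^k\sz$ first); the bookkeeping and the sign flip are identical.
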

\begin{proof}
Assume the suppositions of the enunciation and let $H = \gamma \hat{H} \gamma^{-1}$ for 
$\gamma = \tau(0,b/2) \in \Gppf$. By construction $H$ is equivalent to $\hat{H}$. Since
$\hat{h} = \sy^k \sz \tau(a,b)$ is in $\hat{H}$, $h = \gamma \hat{h} \gamma^{-1}$ is in $H$.
Substituting in the values of $\hat{h}$ and $\gamma$ gives
\begin{align*}
 h &= \tau(0, b/2)\, \sy^k \sz \tau(a,b)\, \tau(0, -b/2), \\
   &= \sy^k \sz \tau(0, -b/2)\, \tau(a,b)\, \tau(0, -b/2), \\
   & = \sy^k \sz \tau(a, 0).
\end{align*}
Thus $h = \sy^k \sz \tau(a, 0)$ is in $H$. 
\end{proof}

\begin{theorem} \label{thm:implication}
Let $\hat{H}$ be a subgroup of $\Gppf$ for
$\Omega = [0, L_x) \times [-1, 1] \times [0, L_z)$. If $\hat{H}$ has any elements
of the form $\hat{h}_i = \sy^{k_i} \sz \tau(a_i, b_i)$, $a_i \in [0, L_x),$ 
$b_i \in [0, L_z)$, $k_i=0$ or 1, chose one such element as $\hat{h}_1$ and let 
$H = \gamma \hat{H} \gamma^{-1}$ for $\gamma = \tau(0,b_1/2)$. Otherwise
let $H = \hat{H}$. In either case, the implication
\begin{align} \label{eqn:implication}
\exists \; h_i = \sy^{k_i} \sz \tau(a_i, b_i) \in H \; \; \Rightarrow \; \; h_1 = \sy^{k_1} \sz \tau(a_1, 0) \in H
\end{align}
is true. 
\end{theorem}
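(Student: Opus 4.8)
The plan is to recognize Theorem \ref{thm:implication} as a logical repackaging of Theorem \ref{thm:zcenter} that additionally covers the degenerate case, and to prove it by splitting on whether $\hat{H}$ contains any element carrying a $\sz$ reflection. The implication (\ref{eqn:implication}) has the shape $P \Rightarrow Q$, where $P$ asserts the existence of some element $h_i = \sy^{k_i} \sz \tau(a_i, b_i) \in H$ and $Q$ asserts that the specific element $h_1 = \sy^{k_1} \sz \tau(a_1, 0)$ lies in $H$. An implication is true whenever its consequent is true or its antecedent is false, so I would aim to land in exactly one of these two situations in each case.

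First I would treat the case in which $\hat{H}$ does contain an element of the form $\sy^{k_i} \sz \tau(a_i, b_i)$. This is precisely the hypothesis under which $\hat{h}_1$ is selected and $H = \gamma \hat{H} \gamma^{-1}$ is formed with $\gamma = \tau(0, b_1/2)$. Applying Theorem \ref{thm:zcenter} with $\hat{h} = \hat{h}_1$, $k = k_1$, $a = a_1$, and $b = b_1$ gives directly that $h_1 = \sy^{k_1} \sz \tau(a_1, 0) \in H$. Thus the consequent $Q$ holds unconditionally in this case, and the implication (\ref{eqn:implication}) is true regardless of which other reflection-carrying elements $H$ may also contain.

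Next I would treat the complementary case, in which $\hat{H}$ contains no element of the form $\sy^{k_i} \sz \tau(a_i, b_i)$. Here $H = \hat{H}$ by definition, so $H$ inherits the absence of any such element. Consequently the antecedent $P$ is false, and the implication (\ref{eqn:implication}) is vacuously true. Combining the two cases establishes the theorem.

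I do not anticipate a genuine computational obstacle: the substance of the argument, namely that conjugation by $\tau(0, b_1/2)$ cancels the spanwise shift of a chosen $\sz$-reflection element, has already been carried out in Theorem \ref{thm:zcenter}. The only point requiring care is the logical bookkeeping, that is, matching the case split in the hypothesis to the two ways an implication can be satisfied (true consequent versus false antecedent) and confirming that the element singled out by Theorem \ref{thm:zcenter} is exactly the $h_1$ named in the consequent of (\ref{eqn:implication}).
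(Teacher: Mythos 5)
Your proposal is correct and follows essentially the same route as the paper's proof: the case with a $\sz$-carrying element is settled by invoking Theorem \ref{thm:zcenter} to make the consequent true, and the complementary case makes the antecedent false so the implication holds vacuously. No gaps.
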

\begin{proof} Let $\hat{H}$ be a subgroup of $\Gppf$ for the given domain.
If $\hat{H}$ has one or more elements of the form $\hat{h}_i = \sy^{k_i} \sz \tau(a_i, b_i)$,
for $a_i \in [0, L_x),$ $b_i \in [0, L_z)$, $k_i=0$ or 1, we choose one such element as 
$\hat{h}_1 = \sy^{k_1} \sz \tau(a_1, b_1)$ and let $\gamma = \tau(0,b_1/2)$, and let
$H = \gamma \hat{H} \gamma^{-1}$. By theorem \ref{thm:zcenter}, $H$ is equivalent to $\hat{H}$
and contains the element $h_1 = \sy^{k_1} \sz \tau(a_1, 0)$. The consequence of implication
\ref{eqn:implication} is true, so the implication is true trivially. 

If instead $\hat{H}$ does not have an element of the form $\hat{h}_i = \sy^{k_i} \sz \tau(a_i, b_i)$,
then we let $H = \hat{H}$, or equivalently, $H = \gamma \hat{H} \gamma^{-1}$ for $\gamma = 1$. Then 
implication \ref{eqn:implication} is true vacuously, since its premise is false. 
\end{proof}

Geometrically, a symmetry of form $\hat{h}_i = \sy^{k_i} \sz \tau(a_i, b_i)$ represents a
$z$-reflection symmetry centered about $z=b_i/2$, possibly combined with a phase shift in $x$ and/or 
a reflection in $y$.  
The coordinate transformation $H = \gamma \hat{H} \gamma^{-1}$ for $\gamma = \tau(0, b_1/2)$ or 1
specified in Theorem \ref{thm:implication} shifts the origin in $z$ so that, if $H$ has elements
with $z$-reflections, at least one element $h_1 = \sy^{k_1} \sz \tau(a_1, 0)$ is centered on $z=0$.  
We call symmetry groups $H$ that satisfy implication \ref{eqn:implication} ``$z$-centered.''

\subsection{Restriction to subgroups with second-order elements} \label{sec:second_order_elements}

At this point we restrict attention to minimal-domain subgroups $\hat{H}$ of $\Gppf$ that have elements
of order 2 or less; i.e. $\hat{h}^2 = 1$ for all $\hat{h} \in \hat{H}$. Let $H$ be the $z$-centered
equivalent of $\hat{H}$ by the construction of Theorem \ref{thm:implication}, $H = \gamma \hat{H} \gamma^{-1}$
for $\gamma = \tau(0, b_1/2)$ or $\gamma = 1$. It follows that $H$ is also minimal domain, since
$\tau(a,0)$ and $\tau(0,b)$ are invariant under conjugacy by either form of $\gamma$, and that 
all elements of $H$ are second-order or less, since for any $h \in H$, 
$h^2 = (\gamma \hat{h} \gamma^{-1}) (\gamma \hat{h} \gamma^{-1}) = \gamma \hat{h}^2 \gamma^{-1} = \gamma \gamma^{-1} = 1$. 

We now show that any element $h$ in $H$ can be written as a product of $\sx, \sy, \sz, \tau(L_x/2, 0),$
and $\tau(0, L_z/2)$. For the latter two elements and their product we introduce the more compact notation 
\begin{align} \label{eqn:halfbox_shifts}
\tx &= \tau(L_x/2, 0), \nonumber\\
\tz &= \tau(0, L_z/2), \\
\txz &= \tau(L_x/2, L_z/2). \nonumber
\end{align}

\begin{theorem} 
Let $H$ be a $z$-centered, minimal-domain subgroup of $\Gppf$ for the doubly-periodic domain
$\Omega = [0, L_x) \times [-1, 1] \times [0, L_z)$. Assume all elements of $H$ are order 2 or
less. Then every element of $H$ can be expressed as a product of factors chosen from $\sy, \sz, \tx,$
and $\tz$.
\end{theorem}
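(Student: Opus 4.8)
The plan is to take an arbitrary element $h \in H$ and show that, under the two standing hypotheses (all elements order two, and $z$-centering), its translation part is forced onto the half-box lattice $\{0, L_x/2\}\times\{0, L_z/2\}$, which is exactly the statement that $h$ is a product of factors drawn from $\sy, \sz, \tx, \tz$. First I would use the reduction noted after (\ref{eqn:multiplication_rules}) --- that every element of $\Gppf$ is a single reflection times a single translation --- to write $h = \sy^j \sz^k \tau(a,b)$ with $j,k \in \{0,1\}$, $a \in [0, L_x)$, and $b \in [0, L_z)$. The whole argument then rests on computing $h^2$ with the multiplication rules (\ref{eqn:multiplication_rules}) and imposing $h^2 = \id$.

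The computation of $h^2$ splits into two cases according to whether the $z$-reflection $\sz$ is present. When $k=0$, the translation commutes with $\sy^j$ and one finds $h^2 = \tau(2a, 2b)$; setting this equal to $\id$ forces $2a \equiv 0$ and $2b \equiv 0$, i.e. $a \in \{0, L_x/2\}$ and $b \in \{0, L_z/2\}$, so that $h = \sy^j \tx^m \tz^n$ for some $m,n \in \{0,1\}$ and $h$ is already of the desired form. When $k=1$, the rule $\tau(a,b)\,\sz = \sz\,\tau(a,-b)$ makes the two $b$-contributions cancel, giving $h^2 = \tau(2a, 0)$; this pins down $a \in \{0, L_x/2\}$ but leaves $b$ entirely unconstrained. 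This is the crux of the proof: the order-two condition by itself cannot control the spanwise shift of a $z$-reflection.

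To close this gap I would invoke $z$-centering. Since $h = \sy^j\sz\,\tau(a,b)$ is itself a $z$-reflection element of $H$, the premise of implication (\ref{eqn:implication}) holds, so $H$ contains a centered reflection $h_1 = \sy^{k_1}\sz\,\tau(a_1, 0)$. Because $H$ is a group, the product $h_1 h$ again lies in $H$, and a short calculation using the commutation rules collapses the two $\sz$ factors, yielding the $k=0$ element $h_1 h = \sy^{k_1+j}\,\tau(a_1 + a,\, b)$. Applying the $k=0$ analysis of the previous paragraph to this element --- it too must square to $\id$ --- forces $b \in \{0, L_z/2\}$. Hence the original $k=1$ element also has its translation on the half-box lattice, so $h = \sy^j\sz\,\tx^m \tz^n$. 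Combining the two cases shows that an arbitrary $h \in H$ is a product of factors from $\sy, \sz, \tx, \tz$, as claimed.

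The main obstacle, as noted, is precisely the spanwise shift $b$ in the $z$-reflection case, which the order-two hypothesis leaves free; the essential idea is that left-multiplying by a centered reflection $h_1$ converts the troublesome element into a pure $y$-reflection-plus-translation whose square genuinely constrains $b$. I would also remark that the argument uses only the order-two property and $z$-centering together with closure of $H$; the minimal-domain hypothesis is inherited from the ambient setting of the section but is not actually required to establish the stated form.
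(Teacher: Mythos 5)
Your proof is correct and follows essentially the same route as the paper's: write $h$ as a single reflection times a single translation, square it, and in the $\sz$-case use $z$-centering to produce a centered reflection $h_1 \in H$ and then analyze the product $h_1 h$. The one genuine difference lies in how the remaining constraints are closed. The paper pins down $a_1$ and $b$ by appealing to the minimal-domain hypothesis: it notes that $h_1^2 = \tau(2a_1,0)$ and $(h_1 h)^2 = \tau(0,2b)$ are elements of $H$, and that minimality excludes $\tau(r,0)$ from $H$ for $0<r<L_x$ and $\tau(0,r)$ for $0<r<L_z$. You instead apply the order-two hypothesis directly to $h_1$ and to $h_1 h$ (both in $H$ by closure), forcing $\tau(2a_1,0)=1$ and $\tau(0,2b)=1$ outright. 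Your closing remark is therefore accurate: the minimal-domain assumption is not actually needed for this theorem, and your version isolates the conclusion as a consequence of $z$-centering, the order-two condition, and closure alone. (Minimality does real work elsewhere in the section --- in discarding subgroups that contain $\tx$ or $\tz$ in isolation --- but not in this proof.)
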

\begin{proof} Let $h$ be an element of the $z$-centered, minimal-domain subgroup $H$ of $\Gppf$ on 
$\Omega$, and let $h^2=1$ for all $h \in H$. As discussed in section \ref{sec:intro}, since $h$ is
an element of $\Gppf$, it can be expressed in one of the four forms $\tau(a,b)$, $\sy \tau(a,b),$
$\sz \tau(a,b),$ or $\syz \tau(a,b)$, where $a \in [0,L_x)$ and $b \in [0, L_z)$. These four forms
can be reduced to two, $h = \sy^k \tau(a,b)$ or $h = \sy^k \sz \tau(a,b)$, using the auxiliary 
variable $k=0$ or 1. 

{\em Case 1:} $h = \sy^k \tau(a,b)$. Since $h$ is second-order or less, 
$h^2 = \sy^k \tau(a, b) \, \sy^k \tau(a, b) = \sy^{2k} \tau(a, b) \, \tau(a, b) = \tau(2a,2b) = 1$.
Since $\tau(0,0) = 1$, we have $\tau(2a,2b) = \tau(0,0)$. Since 
$\tau(0,0) = \tau(L_x,0) = \tau(0, L_z) = \tau(L_x,L_z)$ by the periodicity of the domain, the
complete set of solutions to $\tau(2a,2b) = \tau(0,0)$ for $a$ in $[0, L_x)$ and $b$ in $[0, L_z)$ are 
$(a,b) = (0,0),$ $(L_x/2,0),$ $(0, L_z/2)$, and $(L_x/2, L_z/2)$. Thus $\tau(a,b)$ is either 
$1, \tx, \tz,$ or $\txz = \tx\tz$, and $h = \sy^k \tau(a,b)$ can be written as a product of factors
chosen from $\sy, \tx$ and $\tz$. 

{\em Case 2:} $h = \sy^k \sz \tau(a,b)$. Since $h$ is second-order or less,
$h^2 = \sy^k \sz \tau(a,b) \, \sy^k \sz \tau(a,b) = \sy^{2k} \sz^2 \tau(a,-b) \tau(a,b) = \tau(2a,0) = 1$.
The solutions of this equation for $a$ in $[0, L_x)$ are $a = 0$ and $a=L_x/2$.

Since $H$ is $z$-centered and has an element of form $\sy^k \sz \tau(a,b)$, it follows
from lemma that $H$ has an element of form $h_1 = \sy^{k_1} \sz \tau(a_1, 0)$ for some 
$a_1 \in [0,L_x)$ and $k_1=0$ or 1. The element $h_1^2 = \sy^{2k_1} \sz^2 \tau(2a_1,0) = \tau(2a_1,0)$
is then also in $H$. Since $L_x>0$, $2a_1 = jL_x + r$ for some integer $j$ and some real remainder
$r$ satisfying $0 \leq r < L_x$. Then $\tau(2a_1,0) = \tau(jL_x+r,0) = (\tau(L_x, 0))^{\,j}\tau(r,0) = \tau(r,0)$
is in $H$. However, since $\Omega$ is the minimal periodic domain for $H$, $\tau(r,0)$ is not
in $H$ for $0<r<L_x$. Therefore $r=0$ and $2a_1 = jL_x$ for some integer $j$. Since
$0 \leq 2a_1 = jL_x < 2L_x$, we have $j = 0$ or 1, and thus $a_1 = 0$ or $L_x/2$. 

Since $h$ and $h_1$ are in $H$, it follows that $(h_1 h)^2 = (\sy^{k_1} \sz \tau(a_1, 0) \, \sy^k \sz \tau(a,b))^2 = (\sy^{k_1+k} \tau(a_1 + a, b))^2 = \tau(2a_1+2a, 2b)$
is in $H$. Since $a$ and $a_1$ are each either 0 or $L_x/2$, $2a_1 + 2a$ is an integer multiple
of $L_x$, and thus $\tau(2a_1+2a, 2b) = \tau(0,2b)$ is in $H$. By an argument similar to that 
above for $a_1$, the minimal periodic length of $L_z$ in $z$ implies $b=0$ or $L_z/2$. Since $a$ is
either $0$ or $L_x/2$ and $b$ is either 0 or $L_z/2$, $\tau(a,b)$ is either $1, \tx, \tz,$ or 
$\txz = \tx\tz$. Thus $h = \sy^k \sz \tau(a,b)$ can be written as a product of factors chosen from
$\sy, \sz, \tx,$ and $\tz$. 
\end{proof}

\begin{lemma}
The elements $\sy, \sz, \tx$ and $\tz$ of $\Gppf$ commute with each other, and each is second-order.
\end{lemma}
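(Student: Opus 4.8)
The plan is to verify both claims by direct computation from the multiplication rules (\ref{eqn:multiplication_rules}), making essential use of the fact that on the doubly-periodic domain the shift parameters are taken modulo $L_x$ and $L_z$. There are two things to check: that each of the four generators squares to the identity, and that each of the six distinct pairs of generators commutes. Almost every case reduces immediately to one of the given rules; only one case requires a periodicity argument.

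For the second-order property, $\sy^2 = \sz^2 = \id$ is recorded directly in (\ref{eqn:multiplication_rules}). For the two half-box translations I would apply the translation composition rule together with periodicity:
\begin{align*}
\tx^2 &= \tau(L_x/2,0)\,\tau(L_x/2,0) = \tau(L_x,0) = \id, \\
\tz^2 &= \tau(0,L_z/2)\,\tau(0,L_z/2) = \tau(0,L_z) = \id,
\end{align*}
since $\tau(L_x,0) = \tau(0,L_z) = \tau(0,0) = \id$ on $\Omega$.

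For commutativity, $\sy\sz = \sz\sy$ is again one of the given rules, and the rule $\tau(a,b)\,\sy = \sy\,\tau(a,b)$ shows at once that $\sy$ commutes with both $\tx$ and $\tz$. That $\tx$ and $\tz$ commute with each other follows from the translation composition rule, since translations add componentwise. The reflection $\sz$ commutes with $\tx$ because the rule $\tau(a,b)\,\sz = \sz\,\tau(a,-b)$ leaves the shift unchanged when $b=0$: $\tx\,\sz = \tau(L_x/2,0)\,\sz = \sz\,\tau(L_x/2,0) = \sz\,\tx$.

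The one case that is not purely formal is the commutation of $\sz$ with $\tz$. Here the same rule gives $\tz\,\sz = \tau(0,L_z/2)\,\sz = \sz\,\tau(0,-L_z/2)$, and I would then invoke periodicity: since $-L_z/2 \equiv L_z/2 \pmod{L_z}$, we have $\tau(0,-L_z/2) = \tau(0,L_z/2) = \tz$, so that $\tz\,\sz = \sz\,\tz$. This single observation---that the half-period spanwise shift is its own image under the sign flip induced by $\sz$---is the only real content of the lemma; it is precisely what distinguishes the half-box shift $\tz$ from a generic spanwise translation, which would not commute with $\sz$. I expect no serious obstacle beyond keeping track of the modular arithmetic in this step.
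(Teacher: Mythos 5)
Your proof is correct and takes the same approach as the paper, which simply states ``Inspection'': a direct verification of each square and each pairwise commutator from the multiplication rules (\ref{eqn:multiplication_rules}) and the periodicity of the domain. You have merely written out explicitly what the paper leaves to the reader, correctly identifying that the only nontrivial check is $\tz\,\sz = \sz\,\tau(0,-L_z/2) = \sz\,\tz$ via $-L_z/2 \equiv L_z/2 \pmod{L_z}$.
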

\begin{proof} Inspection. \end{proof}

\begin{lemma} 
Any product of factors $\sy, \sz, \tx$ and $\tz$ is at most second order.
\end{lemma}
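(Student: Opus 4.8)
The plan is to build directly on the two properties established in the preceding lemma: that $\sy, \sz, \tx,$ and $\tz$ commute pairwise, and that each is second order. The key consequence I would extract first is that pairwise commutativity of a finite set of elements forces the subgroup they generate to be abelian, so that any product of factors drawn from $\{\sy, \sz, \tx, \tz\}$ can be freely reordered. Granting this, I would collect an arbitrary such product into the canonical form $g = \sy^a \sz^b \tx^c \tz^d$ for nonnegative integers $a, b, c, d$, and then reduce each exponent modulo $2$ using $\sy^2 = \sz^2 = \tx^2 = \tz^2 = \id$, so that without loss of generality $a, b, c, d \in \{0, 1\}$ and $g$ is one of only sixteen distinct elements.

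The second and final step is to square this canonical form. Because the factors commute, I can rearrange the product $g^2 = \sy^a \sz^b \tx^c \tz^d \, \sy^a \sz^b \tx^c \tz^d$ so that like factors are adjacent, giving
\begin{align*}
g^2 = \sy^{2a} \sz^{2b} \tx^{2c} \tz^{2d} = (\sy^2)^a (\sz^2)^b (\tx^2)^c (\tz^2)^d = \id.
\end{align*}
Hence every product of the four generators satisfies $g^2 = \id$ and so is at most second order, which is exactly the claim.

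The only step that is more than a routine substitution is the passage from pairwise commutativity to free reordering of arbitrary-length products. This is a standard fact about groups—if an element commutes with each of two others that commute with one another, it commutes with their product, and one extends this by induction on word length—but it is worth stating explicitly since everything downstream relies on it. An equivalent route that keeps the bookkeeping entirely local is to prove by induction on the number of factors that any such product $g$ both squares to $\id$ and commutes with each generator; writing $g = sh$ with $s$ a generator and $h$ a shorter product, the inductive step gives $g^2 = s(hs)h = s(sh)h = s^2 h^2 = \id$ using the inductive commutation hypothesis, and $gt = sht = sth = tsh = tg$ verifies that $g$ again commutes with each generator $t$, closing the induction without any appeal to a global canonical form.
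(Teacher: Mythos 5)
Your proof is correct and takes essentially the same route as the paper: both collect an arbitrary product into the canonical form $\sy^j \sz^k \tx^m \tz^n$ using pairwise commutativity and then square it to obtain the identity. Your explicit justification of the reordering step (and the alternative induction) is a welcome bit of extra rigor, but the substance matches the paper's argument.
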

\begin{proof}
Let $h$ be the product of any number of factors $\sy, \sz, \tx$ and $\tz$ in any order.
Since these factors commute, $h$ can be written as $h = \sy^j \sz^k \tx^m \tz^n$
where $j,k,m,n$ are non-negative integers. 
Then $h^2 = (\sy^j \sz^k \tx^m \tz^n)^2 = \sy^{2j} \sz^{2k} \tx^{2m} \tz^{2n} = 1$. 
\end{proof}
\subsection{Enumeration and classification of all minimal half-box subgroups} \label{sec:halfbox_groups}

In this section we enumerate the $z$-centered subgroups $H$ of $\Gppf$ with second-order
elements or less, eliminate the nonminiml-domain subgroups, and determine the equivalence
classes of the remaining minimal-domain subgroups.
It follows from the results of section \ref{sec:second_order_elements} that any second-order
element of any $z$-centered, minimal-domain subgroup of $\Gppf$ can be written as a product of factors
$\sy, \sz, \tx,$ and $\tz$. Thus any $z$-centered subgroup with at most second-order elements
is a subgroup of the group generated by these factors. We denote this group as 
\begin{align} \label{def:Gppfh}
  \Gppfh &= \langle \sy, \sz, \tx, \tz \rangle \\
  &=
  \{1, \sy, \sz, \tx, \tz, \syz, \txz, \sy\tx, \sy\tz, \sz\tx, \sz\tz, \syz\tx, \syz\tz, \sy\txz, \sz\txz, \syz\txz\} \nonumber
\end{align}
Note that $\Gppfh$ is abelian subgroup of $\Gppf$, since its generators are commuting elements
of $\Gppf$. By Lagrange's theorem, the order of any subgroup of $\Gppfh$ must be a divisor of 
$|\Gppfh| = 16$, namely 1, 2, 4, 8, or 16. Since $\Gppfh$ is abelian, the converse of
Lagrange's theorem holds as well, guaranteeing the existence of nontrivial subgroups of orders
2, 4, and 8, in addition to the trivial order-1 subgroup $\{1\}$ and the trivial order-16 subgroup
$\Gppfh$. Since $\Gppfh$ is abelian, any two subgroups $H_1, H_2$ of $\Gppfh$ commute,
and the product $H_1 H_2 = \{h_1 h_2 \,|\, h_1 \in H_1, \, h_2 \in H_2\}$ is itself a subgroup
of $\Gppfh$. Since all elements of $\Gppfh$ except the identity are second order, we can
construct all nontrivial subgroups by constructing all order-2 subgroups 
$\langle h \rangle = \{1, h\}$ for all non-identity elements $h \in \Gppfh$, 
then taking the products of all distinct order-2 subgroups to generate all order-4 subgroups,
then similarly forming the order-8 subgroups from products of order-2 subgroups and order-4 subgroups,
at each step eliminating redundantly generated subgroups for efficiency.

After generating all subgroups of $\Gppfh$, we eliminate from consideration all those that contain
$\tx$ or $\tz$ as elements, since these elements in isolation imply periodicity on a smaller periodic
domain. From the remaining set of subgroups, we determine the equivalence classes by checking 
which subgroups are equivalent under conjugacy. That is, given two subgroups $H_1$ and $H_2$ of
$\Gppfh$, we must determine if there is a $\gamma \in \Gppf$ such that $H_1 = \gamma^{-1} H_2 \gamma$.
However, we need not check for conjugacy under all $\gamma \in \Gppf$; a theorem in Appendix
\ref{sec:appendixA} shows that if $H_1$ and $H_2$ are unequal but conjugate subgroups of $\Gppfh$,
then they are conjugate under $\gamma = \tau(0,L_z/4)$. Thus we check for conjugacy only under
$\gamma = \tau(0,L_z/4)$. After arranging all the subgroups into equivalence classes, we then select
a single representative subgroup from each equivalence class. The set of the representatives then
represents the set of physically distinct minimal-domain subgroups of plane Poiseuille flow, up to
second-order elements. 

\newcommand{\bin}{\bsym{\text{\bf~in~}}}
\newcommand{\HinEj}{H\_in\_Ej}
\algnewcommand\algorithmicnot{\textbf{not}}
\algdef{SE}[IF]{IfNot}{EndIf}[1]{\algorithmicif\ \algorithmicnot\ #1\ \algorithmicthen}{\algorithmicend\ \algorithmicif}%

\begin{algorithm} 
\caption{Generate all minimal-domain subgroups of $\Gppfh$ and arrange into equivalence
classes.}\label{alg:algo}
\begin{algorithmic}[1]
\Statex \hspace{-5mm} {\bf Notation:}
\Statex $H_i, H_j, H:$ subgroups of $\Gppfh$
\Statex $S^{(n)} = \{H_1, H_2, \ldots\},$ set of all order $2^{n-1}$ subgroups of $\Gppfh$ containing neither $\tx$ nor $\tz$
\Statex $E^{(n)}_j = \{H_{j_1}, H_{j_2}, \ldots\}, $ the $j$th equivalence class, a set of equivalent order $2^{n-1}$ subgroups
\Statex $E^{(n)} = \{E^{(n)}_1, E^{(n)}_2, \ldots\},$ the set of equivalence classes for order $2^{n-1}$ subgroups
\Statex $k :$ a counter for the number of equivalence classes in $E^{(n)}$ as it is built
\Statex \HinEj : a Boolean indicating subgroup $H$ was found in some $E_j^{(n)}$, $1 \leq j \leq k$
\Statex
\State $\Gppfh \gets \langle \sy,\, \sz,\, \tx,\, \tz \rangle$
\State $S^{(1)} \gets \{ \langle 1 \rangle \}$
\State $S^{(2)} \gets \{ \langle h \rangle \; : \; h \in \Gppfh, \, h \not\in \{1, \tx, \tz\} \}$
\Statex
\For{$n = 3,4,5$}  \Comment{Construct set $S^{(n)}$ inductively from $S^{(n-1)}$}
   \State $S^{(n)} = \emptyset$ \Comment{Initiate $S^{(n)}$ to the empty set}
   \For{$H_i \bin S^{(n-1)}$}
      \For{$H_j \bin  S^{(n-1)}$, ~$j < i$}
        \State $H \gets H_i \, H_j$ \Comment{Construct subgroup $H$ as candidate for inclusion in $S^{(n)}$}
        \If{ $|H| = 2^{n-1}~\text{\bf and } \tx \not\in H ~\text{\bf and } \tz \not\in H$ }
            \State $S^{(n)} \gets S^{(n)} \cup \{ H\}$ \Comment{If $H$ is order $2^{n-1}$ and minimal, add it to set $S^{(n)}$}
        \EndIf
      \EndFor
   \EndFor
\EndFor
\Statex
\State $\gamma \gets \tau(0, L_z/4)$
\For{$n = 1,2,3,4,5$} \Comment {Partition $S^{(n)}$ into set of equiv. classes $E^{(n)} = \{E^{(n)}_1, E^{(n)}_2, \ldots\}$}
   \State $E^{(n)} \gets \emptyset$\Comment{Initiate $E^{(n)}$ to empty set}
   \State $k \gets 0$\Comment{Set counter for $|E^{(n)}|$ to zero}
   \For{$H \bin S^{(n)}$} \Comment{Put each subgroup $H$ in $S^{(n)}$ into some equiv. class $E^{(n)}_j$}
       \State \HinEj $\, \gets$ false
       \For{$j=1,\ldots,k$}
       \If{ $\gamma H \gamma^{-1} \in E^{(n)}_j$} \Comment{$H$ belongs to equivalence class $E^{(n)}_j$}
            \State $E^{(n)}_j \gets E^{(n)}_j \cup \{ H \}$ \Comment {Add subgroup $H$ to equivalence class $E^{(n)}_j$}
            \State \HinEj $\, \gets$ true \Comment{Mark that $H$ belongs to an existing equivalence class}
            \State {\bf break}
          \EndIf
       \EndFor
       \IfNot{\HinEj} \Comment{$H$ does not belong to an existing equivalence class}
         \State{$E^{(n)}_{k+1} \gets \{H\}$} \Comment{Form new equivalence class $E^{(n)}_{k+1}$ with one element $H$}
         \State{$k \gets k+1$}               \Comment{Increment counter for $|E^{(n)}|$}
       \EndIf
   \EndFor
\EndFor

\end{algorithmic}

\end{algorithm}

Since the the number of subgroups and pairwise conjugacies to check is large,
we implemented the algorithm described above in symbolic code and generated the
subgroups and equivalence classes of $\Gppfh$ computationally. Our symbolic codes
were written in the Julia programming language \citep{bezanson2017julia}; pseudocode
for the algorithm is given in Algorithm \ref{alg:algo}.
The action of an arbitrary symmetry $\gamma \in \Gppf$ can be represented by
\begin{align}
  \gamma[u,v,w](x,y,z) = [s_x u, s_y v, s_z w](s_x x + a_x L_x, s_y y, s_z z + a_z L_z) \label{eqn:gamma}
\end{align}
where $a_x, a_z$ are real-valued and $s_x, s_y, s_z$ take values $\pm 1$. The symbolic code
represents a given symmetry $\gamma$ by its 5-tuple parametrization $(s_x, s_y, s_z, a_x, a_z)$
and computes inverses and group multiplication in terms of this representation.
Arithmetic on $a_x$ and $a_z$ is performed modulo 1 so that each group element has a unique 5-tuple
representation and equality of group elements can be tested by equality of their 5-tuples. The symbolic
code uses Julia's built-in rational number type for $a_x, a_z$ rather than floating-point numbers so
that group operations and equality tests are exact. The symbolic code represents a group by an array
of its group elements. We defined a strict ordering on group elements in terms of the representation,
so that a given group has a unique representation and equality of two groups is established by
equality of their ordered elements. Conjugation of subgroups $\gamma^{-1} H \gamma$ is performed 
by conjugation $\gamma^{-1} h \gamma$ of the elements $h$ of $H$. Equivalence of subgroups $H_1$ and
$H_2$ under conjugacy by $\gamma$ is tested by computing $\gamma^{-1} H_1 \gamma$ and then testing
for equality with $H_2$. Group products $H_1 H_2$ are determined by pairwise multiplication
of all elements of $H_1$ against all elements of $H_2$ followed by a reduction to a set
of sorted, unique elements. Once the parametrization, group arithmetic, and ordering
were defined, much of the rest of the functionality of the symbolic code resulted from
application of built-in Julia libraries for tuples, arrays, sorting, rational
numbers, and tests for equality and set membership. To check for correctness, we wrote two
symbolic codes independently and confirmed that they produced identical sets of subgroups 
and conjugacy classes. We also performed manual verification of conjugacy between the 
subgroups in each equivalence class.

\begin{table}
  \begin{tabular}{llp{0.1\textwidth}ll}
     order $1$: & 1 subgroup in 1 equivalence class\\
     \\
     & $\bsym{\langle 1 \rangle}$\\  
     \\
     order $2$: & 13 subgroups in 9 equivalence classes \\
     \\
     &       $\langle \sy \rangle$        & &       $\langle \sz \rangle  \sim \langle \sz \tz \rangle $  \\       
     & $\bsym{\langle \txz \rangle}$      & &       $\langle \sz \tx \rangle \sim \langle \sz \txz \rangle$  \\    
     & $\bsym{\langle \sy \tx \rangle}$   & & $\bsym{\langle \syz \tx \rangle \sim \langle \syz \txz \rangle }$ \\ 
     & $\bsym{\langle \sy \tz \rangle}$   & & $\bsym{\langle \syz \rangle \sim \langle \syz \tz \rangle }$ \\      
     & $\bsym{\langle \sy \txz \rangle}$ \\                                                                        
     \\ order $4$: & 22 subgroups in 12 equivalence classes \\ 
     \\
     & $\bsym{\langle \sy,\, \txz \rangle}$ & & $\bsym{\langle \sy\tx,\, \txz \rangle}$  \\ 
     & $\langle \sy,\, \sz \rangle \sim \langle \sy,\, \sz\tz \rangle$  &                   
     & $\bsym{\langle \syz,\, \txz \rangle \sim \langle \syz\tx,\, \txz \rangle}$ \\       
     & $\langle \sz,\, \txz, \rangle \sim \langle \sz\tx,\, \txz \rangle $  &               
     & $\bsym{\langle \sz,\, \sy\tz \rangle \sim \langle \syz,\, \sy\tz \rangle}$\\         
     & $\langle \sy,\, \sz\tx \rangle \sim \langle \sy,\, \sz\txz \rangle $  &              
     & $\bsym{\langle \sz,\, \sy\txz \rangle \sim \langle \sz\tz,\, \syz\tx \rangle}$ \\    
     & $\bsym{\langle \sz,\, \sy\tx \rangle \sim \langle \sy\tx,\, \sz\tz \rangle}$  &      
     & $\bsym{\langle \sy\tz,\, \sz\tx \rangle \sim \langle \sz\txz,\, \sy\tz \rangle}$ \\  
     & $\bsym{\langle \syz,\, \sy\txz \rangle \sim \langle \sz\tx,\, \syz\tz \rangle}$ &    
     & $\bsym{\langle \sy\tz,\, \sz\tx, \rangle \sim \langle \sy\tz,\, \sz\txz \rangle}$ \\ 
     \\ order $8$: & 4 subgroups in 2 equivalence classes\\
     \\
     & $\langle \sy,\, \sz,\, \txz \rangle \sim \langle \sy, \sz \tx,\, \txz \rangle$  &         
     & $\bsym{\langle \sz,\, \sy\tx,\, \txz \rangle \sim \langle \syz,\, \sy \tx,\, \txz \rangle}$ \\  
   \end{tabular}
   \caption{\label{tbl:subgroups}
     {\bf Symmetry subgroups and their equivalence classes for plane Poiseuille flow in minimal doubly-periodic domains.}
     The symbol $\sim$ denotes equivalence between two subgroups under conjugation. 
     A set of subgroups related by $\sim$ symbols forms an equivalence class.
     A subgroup listed in isolation forms an equivalence class with that subgroup as its sole element. 
     The first subgroup listed in each equivalence class is chosen as the representative for that equivalent class. 
     Subgroups and equivalence classes in bold are new to the literature.
     The subgroups listed are the complete set of $z$-centered, minimal-domain subgroups with elements of order 2 or less
     for plane Poiseuille flow in doubly-periodic domains.
}
 \end{table}

Table \ref{tbl:subgroups} provides a complete list of the $z$-centered minimal subgroups of $\Gppf$ 
with elements of order 2 or less and their equivalence classes. The subgroups are listed 
in terms of their generators, e.g. $\langle \sy, \txz \rangle$ is the 4th-order subgroup
$\{1, \sy, \txz, \sy \txz\}$. The equivalence classes are indicated by equivalence 
relations between groups, i.e. $H_1 \sim H_2$ indicates that groups $H_1$ and $H_2$ are 
related by a conjugacy and thus equivalent under a coordinate transformation. A group listed by itself 
indicates that that group is the sole member of its equivalence class. We find there are 
1 order-1 subgroups in 1 equivalence class, 13 order-2 subgroups in 9 equivalence classes, 
22 order-4 subgroups in 12 equivalence classes, and 4 order-8 subgroups in 2 equivalence classes. 
Thus there are a total of 24 equivalence classes representing 24 physically distinct
symmetry subgroups with 2nd-order elements or less. 

Prior studies have found invariant solutions for plane Poiseuille flow in 7 of these
24 distinct subgroups.
\cite{waleffe2001exact} found invariant solutions with
$\langle \sz\tx \rangle$                  
symmetry.
\cite{nagata2013mirror} found traveling waves with 
$\langle \sz,\, \txz \rangle$,            
$\langle \sy,\, \sz\txz \rangle$, and     
$\langle \sy, \, \sz, \, \txz \rangle$    
symmetries.
\cite{park2015exact} reported invariant solutions in 
$\langle \sy \rangle$,                              
$\langle \sy, \, \sz \rangle$,                      
$\langle \sy, \, \sz, \, \txz \rangle$,             
and the equivalent subgroup 
$\langle \sy, \, \txz,\, \sz \tx \rangle$.          
\cite{gibson2014spanwise} reported traveling waves with
$\langle \sz \tx \rangle$,                          
$\langle \sz \rangle$,                              
$\langle \sy, \, \sz \rangle$,                      
$\langle \sy, \, \sz \tx \rangle$,                  
$\langle \sz, \, \txz \rangle$, and                 
$\langle \sy, \, \sz, \, \txz \rangle$  
symmetries.             
\cite{zammert2015crisis} found a traveling wave with 
$\langle \sy, \, \sz, \, \txz \rangle$              
symmetry.
Thus the 7 previously-explored subgroups are 
$\langle \sy \rangle,$                              
$\langle \sz \rangle,$                              
$\langle \sz \tx \rangle,$                          
$\langle \sy, \, \sz \rangle,$                      
$\langle \sy, \, \sz\tx \rangle \sim \langle \sy, \, \sz\txz \rangle,$     
$\langle \sz, \, \txz \rangle,$ and                 
$\langle \sy, \sz, \txz \rangle \sim \langle \sy, \, \sz \tx, \, \txz \rangle$.   

Note that our coordinate system is consistent with \cite{waleffe2001exact}, \cite{toh2003periodic}, and
\cite{park2015exact}. 
The conventions of \cite{nagata2013mirror} differ; they have $x$ as the streamwise coordinate, $y$
as spanwise, and $z$ as wall-normal. Their $\Omega$, $\mathcal{S},$ $\mathcal{Z}_{y},$ and $\mathcal{Z}_{z}$
symmetries correspond to $\sxy \txz,$ $\sz \txz,$ $\sz,$ and $\sy$, respectively, in our notation.
%

\subsection{Subgroups with higher-order elements} \label{sec:nonhalfbox_groups}
\newcommand{\Hh}[1]{H_{#1}}
\begin{figure} 
 \begin{center}
 (a) \includegraphics[width=0.28\textwidth]{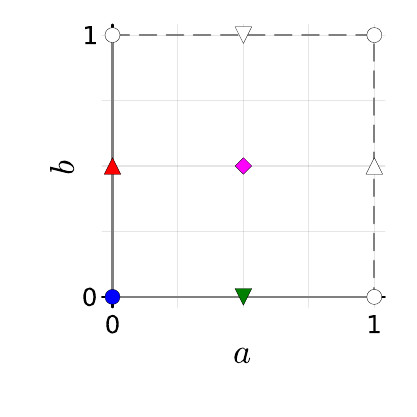}
 (b) \includegraphics[width=0.28\textwidth]{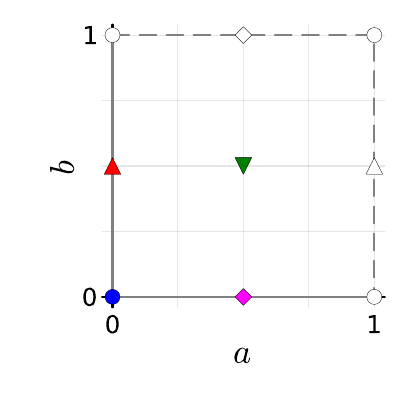}  
 ~~~\, \includegraphics[width=0.28\textwidth]{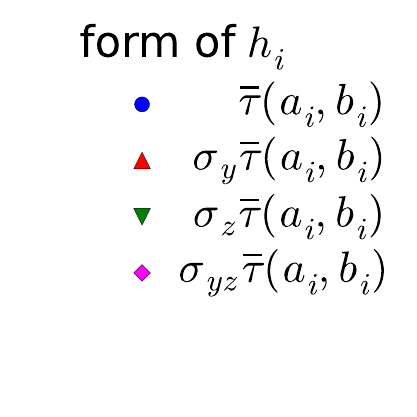} \\
 (c) \includegraphics[width=0.28\textwidth]{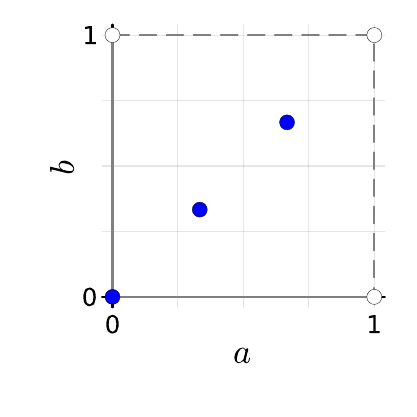} 
 (d) \includegraphics[width=0.28\textwidth]{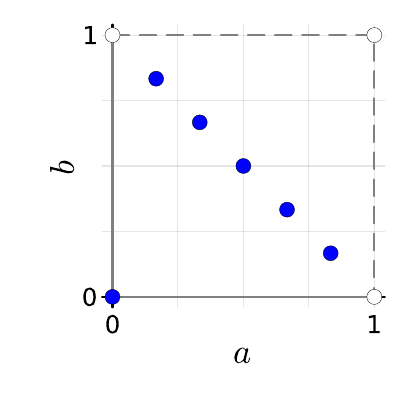}
 (e) \includegraphics[width=0.28\textwidth]{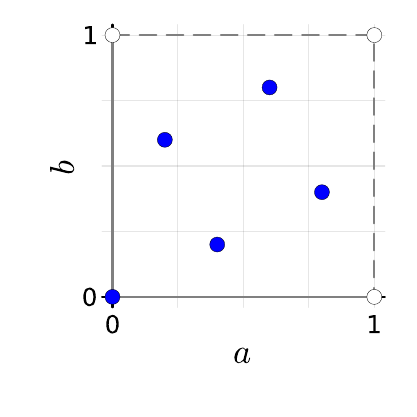} \\
 (f) \includegraphics[width=0.28\textwidth]{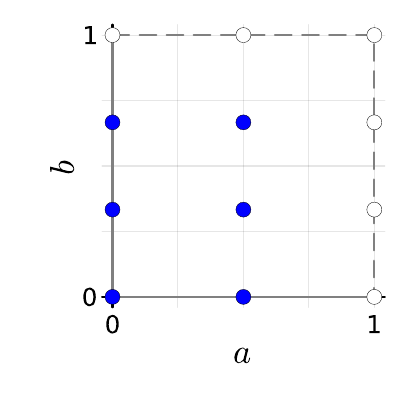} 
 (g) \includegraphics[width=0.28\textwidth]{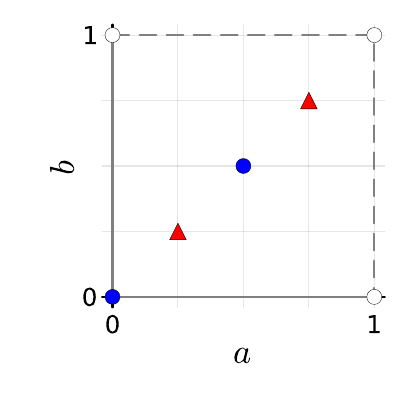}
 (h) \includegraphics[width=0.28\textwidth]{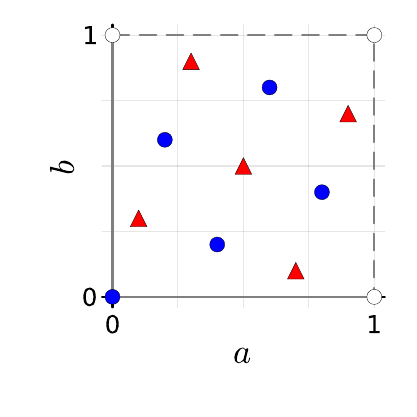} \\
 (i) \includegraphics[width=0.28\textwidth]{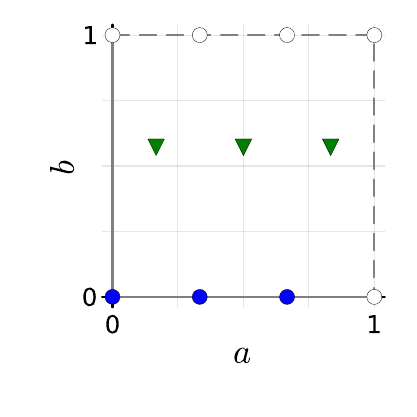} 
 (j) \includegraphics[width=0.28\textwidth]{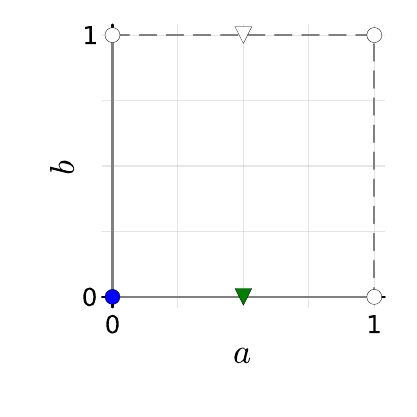} 
 (k) \includegraphics[width=0.28\textwidth]{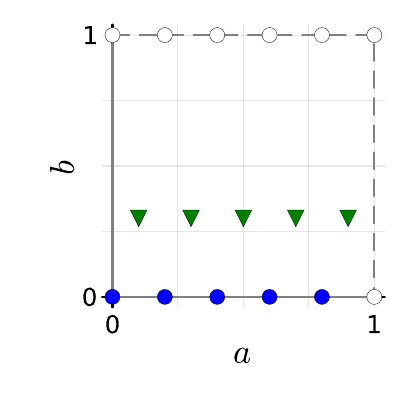} \\
  \end{center}
\caption{{\bf Visualizations of symmetry subgroups of plane Poiseuille flow.} 
Each subplot shows a subgroup $H$ of the plane Poiseuille symmetry group $\Gppf$.
The subgroup is specified in terms of its generators, $H = \langle h_1 \rangle$ or $H = \langle h_1, h_2 \rangle$.
The plot is formed by expanding the group to a set of elements $H = \{h_1, h_2, h_3, \ldots\}$
and then plotting each $h_i$ in $H$ on the $a,b$ plane using the marker codings
indicated in the legend. 
Group elements on the $a$ or $b$ axes are repeated without shading at $a=1$ or $b=1$ to indicate
the periodicity of the domain. \\
(a) $\Hh{a} = \langle \sz \tx,  \sy \tz \rangle = \langle \sz \btau(1/2,0),\, \sy \btau(0,1/2) \rangle$,\\
(b) $\Hh{b} = \langle \sz \txz, \sy \tz \rangle = \langle \sz \btau(1/2,1/2),\, \sy \btau(0,1/2) \rangle$,\\
(c) $\Hh{c} = \langle \btau(1/3, 1/3) \rangle$, ~~~~~~~(d) $\Hh{d} = \langle \btau(1/6, 5/6) \rangle$, ~~~~~~~~~~~(e) $\Hh{e} = \langle \btau(1/5, 3/5) \rangle$, \\
(f) $\Hh{f} = \langle \btau(1/2, 2/3) \rangle$, ~~~~~~~(g) $\Hh{g} = \langle \sy \btau(1/4, 1/4) \rangle,$ ~~~~~\, (h) $\Hh{h} \, = \langle \sy \btau(1/10, 3/10) \rangle$, \\
(i) $\Hh{i} = \langle \sz \btau(1/6, 4/7) \rangle$, ~~~~(j) $\Hh{j} = \langle \btau(1/2, 0) \rangle$, ~~~~~~~~~~~~~~~~(k) $\Hh{k} = \langle \sz \btau(1/10, 3/10) \rangle$.\\
 }
\label{fig:groupviz}
\end{figure}

We briefly expand on finite subgroups of $\Gppf$ involving phase shifts by lengths other than $L_x/2$
and $L_z/2$, or equivalently, subgroups that contain elements of order three or more.  As in section
\ref{sec:halfbox_groups} we assume that velocity fields and subgroups are $z$-centered. In this section,
we use domain-normalized notation
\begin{align} \label{eqn:btau}
  \btau(a,b) = \tau(aL_x, bL_z)
\end{align}
to simplify notation in the important case where the phase shifts are rational multiples of the periodic lengths.  
In principle it is possible to construct and classify all finite subgroups with elements up 
to some maximal order using the methods presented in sections \ref{sec:centering}-\ref{sec:halfbox_groups}.
This approach, however, quickly becomes impractical due to combinatoric explosion as $n$ increases. 
Instead we present a number of examples of such groups and infer some principles governing them.

Figure \ref{fig:groupviz} provides a visualization of 11 subgroups of $\Gppf$.
The subgroups are labeled $\Hh{a}$ through $\Hh{k}$ corresponding to the labels of the subplots. 
The first two subgroups are equivalent minimal-domain half-box subgroups 
$\Hh{a} =  \langle \sz \tx, \sy \tz \rangle$ and 
$\Hh{b} =  \langle \sz \txz,\, \sy \tz \rangle$.
In domain-normalized notation, $\Hh{a} =  \langle \sz \btau(1/2, 0),\, \sy \btau(0, 1/2) \rangle$,
and when expanded to a set of elements, 
\begin{align}
  \Hh{a} &=  \{1,\, \sz \btau(1/2, 0),\, \sy \btau(0, 1/2),\, \syz \btau(1/2, 1/2)\}.
\end{align}
The four elements of $\Hh{a}$ are plotted in the $(a,b)$ plane in fig.\ \ref{fig:groupviz}(a)
using the marker code shown in the legend. For example, $h_1 = 1 = \btau(0,0)$ is shown as a
blue circle at $(a,b) = (0,0)$, and $h_2 = \sz \btau(1/2, 0)$ is shown with a green 
downward-pointing triangle at $(1/2,0)$. Similarly, in domain-normalized form $\Hh{b}$ is
\begin{align}
  \Hh{b} &=  \{1,\, \sz \btau(1/2, 1/2),\, \sy \btau(0, 1/2),\, \syz \btau(1/2, 0)\},
\end{align}
and its four elements are illustrated in fig.\ \ref{fig:groupviz}(b). 

The remaining subfigures (c) through (k) illustrate subgroups with elements of order 3 or more. 
A key question is whether a given set of generators generates a minimal- or nonminimal-domain 
subgroup; i.e. whether an element of domain-normalized form $\btau(\alpha,0)$ for some $0<\alpha<1$ or 
$\btau(0,\beta)$ for some $0<\beta<1$ results from some product of the generators. For a single 
generator of form $\btau(a,b)$ with rational $a,b$, this question reduces to number theory.
Specifically, if $a,b$ are reduced rationals $a=j/m$ and $b=k/n$, where $j,k$ are integers and
$m,n$ natural numbers, with $j$ and $k$ relatively prime to $m$ and $n$ respectively, is there
an natural number $\ell$ for which $\ell j \equiv 0 \!\! \mod m$ and $\ell k \not\equiv 0 \!\!\mod n$,
or vice versa? If so, then $\btau^{\ell}(a, b) = \btau(\ell j/m, \ell k/n)$ has form 
$\btau(0, \beta)$ for $0<\beta<1$ or $\btau(\alpha, 0)$ for $0<\alpha<1$ and thus the generated
group is nonminimal.

Figure \ref{fig:groupviz}(c) illustrates the minimal-domain subgroup generated by the rational phase shift 
$\btau(1/3, 1/3)$, namely,
$\Hh{c} = \langle \btau(1/3, 1/3) \rangle  =  \{1,\, \btau(1/3, 1/3),\, \btau(2/3, 2/3) \}$.
In general, a phase shift of form $\btau(1/n, 1/n)$ for natural number $n$ generates the
$n$th-order minimal-domain subgroup
\begin{align}
  \langle \btau(1/n, 1/n) \rangle  = \{1,\, \btau(1/n, 1/n),\, \btau(2/n, 2/n),\, \ldots, \btau((n-1)/n, (n-1)/n)\}.
\end{align}
Subgroups of this form are minimal-domain for all natural numbers $n$, since they contain no elements
of form $\btau(\alpha,0)$ or $\btau(0,\beta)$ for $0 < \alpha, \beta < 1$. Subgroups of this form
induce a diagonal structure on velocity fields, since the points of the velocity field that are
equated by these symmetries lie along diagonals of slope $L_z/L_x$ in the $x,z$ plane, namely 
$\bu(x,y,z) = \bu(x + (j/n) L_x, y, z + (j/n) L_z)$ for $j = 0, 1, \ldots, n-1$. 

Figure \ref{fig:groupviz}(d) illustrates the minimal-domain subgroup generated by the rational phase shift
$\btau(1/6, 5/6)$, namely, 
$\Hh{d} = \{1,\, \btau(1/6, 5/6),\, \btau(2/6, 4/6),\, \ldots\, \btau(5/6, 1/6)\}.$
In general, $\btau(1/n, (n-1)/n)$ for natural number $n$ generates the $n$th-order 
minimal-domain subgroup
\begin{align} \label{eqn:reversed_diagonal}
  \langle\btau(1/n, (n-1)/n) \rangle  = \{1,\, \btau(1/n, (n-1)/n),\, \btau(2/n, (n-2)/n),\, \ldots, \btau((n-1)/n, 1/n)\}.
\end{align}
The $h_j = \btau(j/n, (n-j)/n)$ structure of the subgroup elements derives from the fact that
$n-j \equiv -j \mod n$, by which $\btau(1/n, (n-1)/n) = \btau(1/n, -1/n)$, and thus 
$\btau^j(1/n, (n-1)/n) = \btau^j(1/n, -1/n) = \btau(j/n, -j/n) = \btau(j/n, (n-j)/n)$.
Subgroups of this form are minimal for all natural numbers $n$, and they induce diagonal
structure along the diagonal of slope $-L_z/L_x$ in the $x,z$ plane. 

Figure \ref{fig:groupviz}(e) illustrates the minimal-domain subgroup generated by the 
rational phase shift of $\btau(1/5, 3/5)$, namely 
$\Hh{e} = \{1,\, \btau(1/5, 3/5),\, \btau(2/5, 1/5),\, \ldots\, \btau(4/5, 2/5)\}$.
The general form of this type is $\btau(j/p, k/p)$ for prime $p$ and integers $j,k$, 
satisfying $1<j,k<p$.
This generates a minimal-domain diagonal subgroup similar to $\langle \tau(1/n, 1/n) \rangle$
or $\langle \tau(1/n, (n-1)/n) \rangle$, since each of $j,k$ are relatively prime to $p$, 
so that $\ell j \!\!\! \mod p \equiv 0$ and $\ell k \!\!\! \mod p \equiv 0$ only for the
same values of $\ell$, namely multiples of $p$. Note the difference from the subgroups
generated by $\btau(j/n, k/n)$ for nonprime $n$, which are minimal-domain only if $j$
and $k$ are 1 or $n-1$ (or an integer congruent to these mod $n$). This difference
is related to the fact that the integers mod $p$ form a Galois field only when $p$
is prime. 

Figure \ref{fig:groupviz}(f) illustrates a nonminimal-domain subgroup generated by
$\btau(1/2, 2/3)$, namely 
$\Hh{f} = \langle \btau(1/2, 2/3) \rangle = \{1,\, \btau(1/2, 2/3),\, \btau(0, 1/3),\, \btau(1/2, 0), \ldots, \btau(1/2, 1/3)\}$. 
This subgroup is non-minimal due to the presence of elements 
$\btau^2(1/2, 2/3) = \btau(0, 1/3)$, 
$\btau^3(1/2, 2/3) = \btau(1/2, 0)$, and
$\btau^4(1/2, 2/3) = \btau(0, 2/3)$,
each of which appears in the figure as a blue dot on the $a$ or $b$ axis away from the origin. 
The general form of this type is $\langle \btau(j/m, k/n) \rangle$ for unequal integers $m,n>1$
and nonzero integers $j,k$ relatively prime to $m,n$ respectively. Such subgroups are non-minimal because
the integers modulo a composite number $mn$ do not form a Galois group. 

Figure \ref{fig:groupviz}(g) illustrates a minimal-domain subgroup generated by an element 
of form $\sy \btau(1/n,1/n)$, namely $\Hh{g} = \langle \sy \btau(1/4,1/4) \rangle = \{1,\, \sy \btau(1/4, 1/4),\, \btau(1/2, 1/2),\, \sy \btau(3/4, 3/4)\}$. 
The structure of this group is similar to that of $\Hh{c} = \langle \btau(1/3, 1/3) \rangle$, 
except the diagonal elements alternate between pure phase shifts and phase shifts with $y$
reflection. 

Figure \ref{fig:groupviz}(h) illustrates a minimal-domain subgroup generated by an element 
of form $\sy \btau(j/n, k/n)$ for $j,k$ relatively prime to natural number $n$. The
example shown is $\Hh{h} = \langle \sy \btau(1/10, 3/10) \rangle$. Since $(\sy \btau(1/10, 3/10))^2 = \btau(1/5, 3/5)$,
the generator of example (e), it follows that $\Hh{e}$ is a subgroup of $\Hh{h}$. 
This is apparent from subplot (h) containing all the markers of subplot (e). 

Figures \ref{fig:groupviz}(i,j) illustrate the simplification of a non-$z$-centered, nonminimal-domain
subgroup to $z$-centered, minimal-domain form. Figure \ref{fig:groupviz}(i) shows the 
subgroup 
$\Hh{i} = \langle \sz \btau(1/6, 4/7) \rangle = \{1,\, \sz \btau(1/6, 4/7),\, \btau(1/3, 0),\,  \sz \btau(1/2, 4/7),\, \btau(2/3, 0),\, \sz \btau(5/6, 4/7)\}$. This subgroup is non-$z$-centered since it has elements of the
form $\sz \btau(a,b)$ with $b\neq0$, but no element of form $\sz \btau(a,0)$. It can
be transformed to an equivalent subgroup $\Hh{i}' = \gamma \Hh{i} \gamma^{-1}$ for 
$\gamma = \btau(0, 2//7)$. 
This produces the $z$-centered, nonminimal subgroup 
$\Hh{i}' = \langle \sz \btau(1/6, 0) \rangle = \{1, \sz \btau(1/6, 0), \btau(1/3, 0),  \sz \btau(1/2, 0), \btau(2/3, 0), \sz \btau(5/6, 0)\}$ (not shown). 
The presence of $\btau(1/3, 0)$ in both $\Hh{i}$ and $\Hh{i}'$ implies a periodicity of $L_x/3$ in $x$, so 
we further simplify this subgroup by recasting on a domain with this smaller periodicity.
This is accomplished by multiplying the $x$ component of each normalized phase shift by 3
and simplifying, $\sz \btau(1/6, 0) \rightarrow \sz \btau(1/2, 0)$
and $\btau(1/3, 0) \rightarrow \btau(3/3, 0) = 1$, etc., resulting in the 
$z$-centered, minimal-domain subgroup $\Hh{j} = \langle \sz \btau(1/2, 0) \rangle = \{1, \sz \btau(1/2, 0)\}$
shown in fig.\ \ref{fig:groupviz}(j). 

Figures \ref{fig:groupviz}(k) in comparison to (h) illustrates the substantial difference
in the behavior of $z$- and $y$-reflection symmetries $\sz$ and $\sy$ in combination
with phase shifts $\btau(a,b)$. Figure \ref{fig:groupviz}(k) shows 
$\Hh{k} = \langle \sz \btau(1/10,3/10) \rangle$, whose generator differs from that
of $\Hh{h} = \langle \sy \btau(1/10,3/10) \rangle$ only by the replacement of $\sz$ for $\sy$.
In contrast to $\Hh{k}$, $\Hh{h}$ is both non-$z$-centered and nonminimal-domain. It can be
simplified to $\Hh{j} = \langle \sz \btau(1/2, 0) \rangle$ by a process similar to the reduction of
$\Hh{i}$.

From these examples we infer and present without proof a few general principles of subgroups
with higher-order elements. 
\begin{enumerate}
\item Any element of $\Gppf$ with rational normalized phase shifts has finite order.
\item The subgroup $H = \langle \btau(j/p, k/p) \rangle$, where $p$ is prime and $j,k$
  are integers $1 \leq j,k < p$, is finite and has $p$ elements ($|H| = p$). 
\item If a subgroup contains a phase shift $\btau(j/m, k/n)$ where $j/m$ and $k/n$ are 
  reduced rationals with natural numbers $m,n$, $m\neq n$, and integers $j,k$ relatively prime
  to $m$ and $n$ respectively, then the subgroup is nonminimal.
\item A subgroup containing an element $\sz \btau(a,b)$ is minimal only for $a=0$ or $1/2$.
\end{enumerate}
These examples and principles provide some structure for understanding the features of the
symmetry subgroups of plane Poiseuille symmetry involving phase shifts other than half the
periodic lengths. They also illustrate a rich variety of possible symmetry groups for velocity
fields and invariant solutions in a doubly-periodic computational domain. 
\section{Nonlinear traveling waves}\label{sec:nl_tws}
In this section, we start by outlining the numerical
approach and discuss about exploiting connections between imposed symmetries and dynamics. 
We also report the fifteen new numerically calculated traveling wave solutions of plane Poiseuille flow in seven different symmetric subspaces.  
A summary of the properties is given in table \ref{tbl:tws}. 
Lastly, we present streamwise-averaged cross sections and Reynolds-number continuations
of the computed traveling waves and discuss their structure and bifurcations.

\subsection{Numerical approach}\label{subsec:num_approach}

We used Channelflow $2.0$ to compute traveling-wave solutions of plane Poiseuille flow on doubly-periodic
domains in a selection of symmetric subspaces. 
Channelflow uses the Newton-Krylov-hookstep algorithm of 
\cite{viswanath2007recurrent} to find solutions of (\ref{eqn:invariance}) while enforcing a given 
symmetry subgroup by projection with its generators, as described in section \ref{sec:preliminary}.
For traveling waves, the invariance equation (\ref{eqn:invariance}) is 
\begin{align} 
\tau(c_x t, c_z t) \, \phi^t(\bu) - \bu = 0 \label{eqn:tw}
\end{align}
for all $t \geq 0$, unknown wavespeeds $c_x, c_z$, and unknown $\bu$. Channelflow solves the related 
equation 
\begin{align} 
\btau(a_x, a_z) \, \phi^T(\bu) - \bu = 0 \label{eqn:twchflow}
\end{align}
where $T$ is fixed, $\btau$ is the domain-normalized form of a phase shift (\ref{eqn:btau}), and the free
variables are  the spectral coefficients of $\bu$ and the domain-normalized phase shifts $a_x, a_z$ on the
periodic unit interval.
The computed phase shifts are nominally related to the wavespeeds by 
$c_x = a_x L_x/T, c_z = a_z L_z/T$. After convergence of a solution we check that the computed values
of $a_x, a_z$ are on the correct branch of the periodic unit interval and produce the correct wavespeeds
by verifying that $\tau(c_x t, c_z t) \phi^t(\bu) - \bu = 0$ for $0 < t \leq T$. 

The control parameters for a computation of a traveling wave are the periodic domain lengths $L_x, L_z$,
the Reynolds number $\Rey$, the fixed integration time $T$, and the symmetry  group $H \leq \Gppf$.
We chose $\Rey = 2000$, $L_x, L_z = 2\pi, \pi$ or $\pi, \pi/2$, and $T=5$. For $H$ we selected several 
of the subgroups derived in section \ref{sec:halfbox_groups}, namely $H = \langle \sy \rangle$,
$\langle \sz,\, \txz \rangle$, $\langle \sy,\, \txz \rangle$,  $\langle \sy,\, \sz,\, \txz \rangle$,
$\langle \sy,\, \sz \rangle$, $\langle \sy,\, \sz\tx \rangle$, and $\langle \tau(L_x/3, L_z/3)\rangle$.
All numerical computations were performed at the discretization of $N_x \times N_y \times N_z = 48 \times 81 \times 48$,
with dealiasing in periodic directions performed using the 3/2 dealiasing rule. Time marching was done using a third-order
semi-implicit backward differentiation scheme. We confirmed the adequate numerical resolution by 
recomputing each solution at higher resolution of $96 \times 97 \times 96$.

To generate initial guesses, we generated random divergence-free initial conditions $\bu(0)$ with
exponentially decaying spectral coefficients to mimic the smoothness of transitionally turbulent
velocity fields. We projected the random field onto the given symmetric subspace and then rescaled it
to have sufficient magnitude to evolve into sustained unsteady flow. We generated time series data by
time integration, $\bu(t) = \phi^t(\bu(0))$, with the symmetry group $H$ imposed by periodically projecting
with its generators, as discussed in section \ref{sec:preliminary}. 
Since traveling waves are moving steady states, initial guesses for the traveling-wave search were
selected by scanning the time-series data for states with a close balance of energy input and dissipation
rates. Following \cite{toh2003periodic} and
\cite{park2015exact} we define these rates as
\begin{align}
 I &= \frac{1}{2L_{z}}\int_{0}^{L_{z}}\int_{-1}^{1} ( \ptot\utot|_{x=0} \;-\; \ptot\utot|_{x=L_{x}} ) \; dy \, dz, \label{eqn:I} \\
 D &= \frac{1}{2 L_{x} L_{z}} \int_{0}^{L_{z}}\int_{-1}^{1} \int_{0}^{L_{x}} (|\bsym{\nabla} \utot|^{2} + |\bsym{\nabla} v_{\textrm{tot}}|^{2} + |\bsym{\nabla} w_{\textrm{tot}}|^{2})\; dx\, dy\, dz. \label{eqn:D}
\end{align}
We selected initial guesses $\hbu$ from time-series states $\bu(t)$ that satisfied $| D - I | \leq 2 \times 10^{-3}$.
Guesses $\hat{a}_x, \hat{a}_z$ for domain-normalized phase shifts were then obtained by minimizing 
$\| \btau(\hat{a}_x, \hat{a}_z) \phi^T(\hbu) - \hbu\|$ over $\hat{a}_x, \hat{a}_z$ in $[-1/2, 1/2)$.
We fixed the integration times to $T=5$, which slowed the rate of convergence of the Newton-Krylov
iterations, in that it sometimes took more than $100$ iterations for the Newton step to converge. 
However, the large number of iterations was partially counterbalanced by the shorter integration time.
For larger values of $T$, we rarely observed any success. We also tried initial guesses from states
that lingered around a point in the $D$ versus $I$ plane for $10$ time units or more. Our overall success
rate was low: approximately one in ten initial guesses converged to a traveling wave.

As noted in section \ref{sec:preliminary}, a $z$-reflection symmetry pins the phase of a velocity
field to a center in $z$, thus precluding traveling waves from traveling in $z$ and
fixing the spanwise wavespeed to $c_z=0$. 
This property holds for any symmetry subgroup with a generator
that has a factor of $\sz$. Let $\bu$ be a traveling-wave solution of (\ref{eqn:tw}) with symmetry
subgroup $H$ with generator $h = \sy^j \sz \tau(a,b)$ where $j =0$ or 1. Since $h \in \Gppf$ it
commutes with $\phi^t$ (\ref{eqn:invariance}), so 
\begin{align}\label{eqn:commute}
 h \phi^t(\bu) = \phi^t(h \bu) 
\end{align}
for all $t \geq0$. From \ref{eqn:tw} we have $\phi^t(\bu) = \tau(-c_x t, -c_z t)\, \bu$. Substituting 
this and the value of $h$ into (\ref{eqn:commute}) gives
\begin{align*}
   \sy^j\, \sz\, \tau(a,b)\, \tau(-c_x t, -c_z t)\, \bu  &= \tau(-c_x t, -c_z t)\, \sy^j\, \sz\, \tau(a,b)\, \bu \\
                                                         &= \sy^j\, \sz\, \tau(-c_x t, c_z t)\, \tau(a,b)\, \bu
\end{align*}
Multiplying this from the left by $\tau(c_x t, c_z t)\, \tau(-a,-b)\, \sz\, \sy^j$ gives
\begin{align} 
\bu &= \tau(0, 2c_z t)\, \bu \label{eqn:twpinning}
\end{align}
for all $t \geq0$. By this equation, either $c_z = 0$ or $\bu$ is independent of $z$, in which case $c_z$
is arbitrary and can be taken as $c_z = 0.$ Thus, if the symmetry subgroup $H$ has a generator with a factor
of $\sz$, we compute traveling waves by solving \ref{eqn:tw} with $a_z=0$ held fixed. 
              
For relative periodic orbits, the spanwise pinning effect is slightly different. The invariance equation 
(\ref{eqn:invariance}) in this case is
\begin{align}\label{eqn:rpo}
\tau(\ell_x, \ell_z) \, \phi^T(\bu) - \bu = 0
\end{align}
for unknown $\ell_x, \ell_z, T$, and $\bu$, where $T > 0$. Let $\ell_x, \ell_z, T$, and 
$\bu$ be a solution of (\ref{eqn:rpo}), and let the symmetry group of $\bu$ have a generator of form 
$h = \sy^j \sz \tau(a,b)$ where $j=0$ or 1. 
From (\ref{eqn:rpo}) we have $\phi^T(\bu) = \tau(-\ell_x, -\ell_z) \bu$.
Substituting this and the value of $h$ into (\ref{eqn:commute}) at $t=T$ gives
\begin{align*}
   \sy^j\, \sz\, \tau(a,b)\, \tau(-\ell_x, -\ell_z)\, \bu  &= \tau(-\ell_x, -\ell_z)\, \sy^j\, \sz\, \tau(a,b)\, \bu \\
                                                         &= \sy^j\, \sz\, \tau(-\ell_x, \ell_z t)\, \tau(a,b)\, \bu
\end{align*}
Multiplying this from the left by $\tau(\ell_x, \ell_z)\, \tau(-a,-b)\, \sz\, \sy^j$ and simplifying gives
\begin{align}\label{eqn:rpopinning}
\bu &= \tau(0, 2 \ell_z)\, \bu 
\end{align}
If the minimal spanwise periodic length is $L_z>0$, we then have $\ell_z=0$ or $L_z/2$. 
Unlike the traveling-wave case where (\ref{eqn:twpinning}) holds for continuous $t$, here 
we cannot rule out the nonzero value of the phase shift. 

The appearance of a zero spanwise wavespeed in a traveling-wave computation or a spanwise phase shift
of $L_z/2$ for a relative periodic orbit can be used to deduce a $z$ reflection symmetry that was not
enforced in the search but instead arose as a property of the solution. For example, the traveling wave
$\text{P}2$ of \cite{park2015exact} with $\sy$ symmetry has spanwise wavespeed $c_z=0$; thus we conjecture
that it also has a $z$ reflection symmetry of generic form $h = \sy^k \sz \tau(a,b)$, which can be
transformed to a $z$-centered symmetry $h' = \sy^k \sz \tau(a,0)$ by an appropriate phase shift
$\bu' = \tau(0,b/2) \bu$, $h' = \tau(0,-b/2) \, h \, \tau(0,b/2)$ following Theorem 1. The appropriate
value of $b$ can be found numerically, for example, by minimizing $\| \sy^k \sz \tau(a,b) \bu -\bu\|$ over 
$a,b$ and $k=0,1$. Similarly, a solution computed on a given $L_x,L_z$-periodic domain can exhibit a 
$\tau(a,0)$ or $\tau(0,b)$ symmetry for $0 < a < L_x$, $0 < b < L_z$, implying periodicity on a smaller
domain. This occurred in our computations of $\TW_{9}$ and $\TW_{12}$ and is furthered discussed in 
section \ref{subsec:describe_tw}.
\subsection{Nonlinear traveling waves: structures and bifurcations}\label{subsec:describe_tw}

\begin{table}
  \begin{center}
\def~{\hphantom{0}}
  \begin{tabular}{l|rccS[table-format=1.5]cS[table-format=1.6]cccc}
      TW  & $L_{x}$   &   $L_{z}$ && $c_{x}$ & & $c_{z}$ & symmetry group \\ [3pt]
      \hline
      $\TW_{1}$ & $\pi$ & $\pi/2$ &&  0.1085 &&-0.00226 & $\langle\sy\rangle$ \\
      $\TW_{2}$ & $\pi$ & $\pi/2$ &&  -0.035 && 0 & $\langle\sy, \sz \tx \rangle$ \\
      $\TW_{3}$  & $\pi$ & $\pi/2$ &&  0.06 && 0.004 & $\langle\sy\rangle$ \\
      $\TW_{4}$  & $\pi$ & $\pi/2$ && 0.1265  && 0.0008 & $\langle\sy\rangle$ \\
      \hline
      $\TW_{5}$  & $2\pi$ & $\pi$ && -0.5559 && 0 & $\langle \sz, \txz \rangle$ \\
      $\TW_{6}$  & $2\pi$ & $\pi$ && -0.5587 && 0 & $\langle \sz, \txz \rangle$ \\
      \hline
      $\TW_{7}$ & $2\pi$ & $\pi$ && -0.53391 && 0.000245 & $\bsym{\langle\sy,\, \txz\rangle}$ \\
      %
      %
      $\TW_{8}$  & $2\pi$ & $\pi$ && -0.51454 && 0.000198 & $\bsym{\langle\sy,\, \txz\rangle}$ \\
      \hline
      $\TW_{9}$ & $\pi$ & $\pi$ && -0.55543 && 0 & $\langle\sy, \sz, \txz\rangle$\\
      $\TW_{10}$ & $\pi$ & $\pi$ && -0.49283 && 0 & $\langle\sy, \sz\rangle$\\
      \hline
      $\TW_{11}$  & $2\pi$ & $\pi$ &&  -0.43136 && 0 & $\langle \sy, \sz\tx \rangle$\\
      $\TW_{12}$  & $2\pi$ & $\pi/2$ && -0.45775 && 0 & $\langle \sy, \sz\tx \rangle$\\
      $\TW_{13}$  & $2\pi$ & $\pi$ && -0.55467 && 0 & $\langle \sy, \sz\tx \rangle$\\
      $\TW_{14}$ & $2\pi$ & $\pi$ && -0.53339 && 0  & $\langle \sy, \sz\tx \rangle$\\
      \hline
      $\TW_{15}$ &  $2\pi$ & $\pi$ && -0.50449 && 0.000365 & $\bsym{\langle \tau(L_x/3, L_z/3)\rangle}$ \\ 
      \hline
      \hline
  \end{tabular}
\caption{{\bf Properties of numerically calculated traveling waves}. 
Fifteen newly-found traveling waves of plane Poiseuille flow are summarized in terms of their periodic domains,
wavespeeds, and symmetry groups. The wavespeeds are reported at $\Rey=2000$ on the solution's lower branch under
continuation in $\Rey$. Boldfaced symmetry groups indicate hitherto unexplored invariant spaces.}
  \label{tbl:tws}
  \end{center}
\end{table}

Table \ref{tbl:tws} summarizes the nonlinear traveling waves we found by the methods of \ref{subsec:num_approach}. 
$\TW_{1}$ through $\TW_{4}$ were computed for $L_x, L_z = \pi, \pi/2$.
All the other solutions were computed for $L_x, L_z = 2\pi,\pi$. $\TW_{9}$ was computed on this domain but
then found to have an unimposed $\tx = \tau(L_x/2,0)$ symmetry, and was then recast to its minimal $\pi \times \pi$
domain. Similarly, $\TW_{12}$ was computed on a $2\pi,\pi$ domain, was found to have $\tz = \tau(0,L_z/2)$
symmetry, and was recast onto its minimal $2\pi, \pi/2$ domain. A striking fact reflected in 
table \ref{tbl:tws} is the dominance of streaky flows. Streamwise speeds ($c_x$) are generally orders of magnitude
larger than the spanwise speeds ($c_z$) owing to the strong driving streamwise bulk velocity and the streaky flows
that result.

\begin{figure} 
  \begin{center}
    (a) \includegraphics[scale = 0.34]{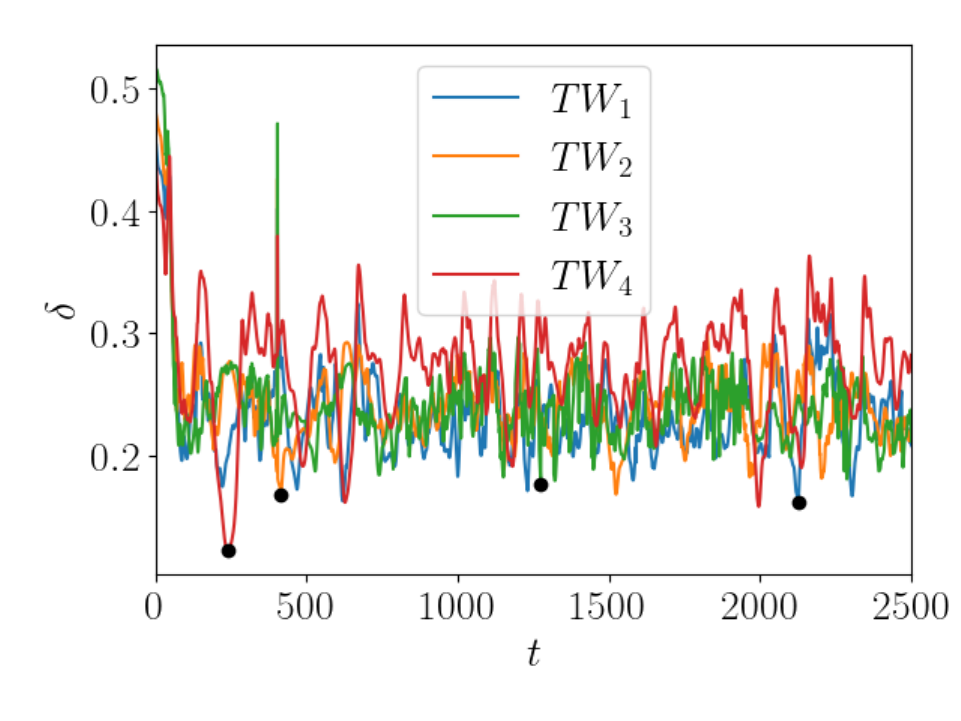}
    (b) \includegraphics[scale = 0.34]{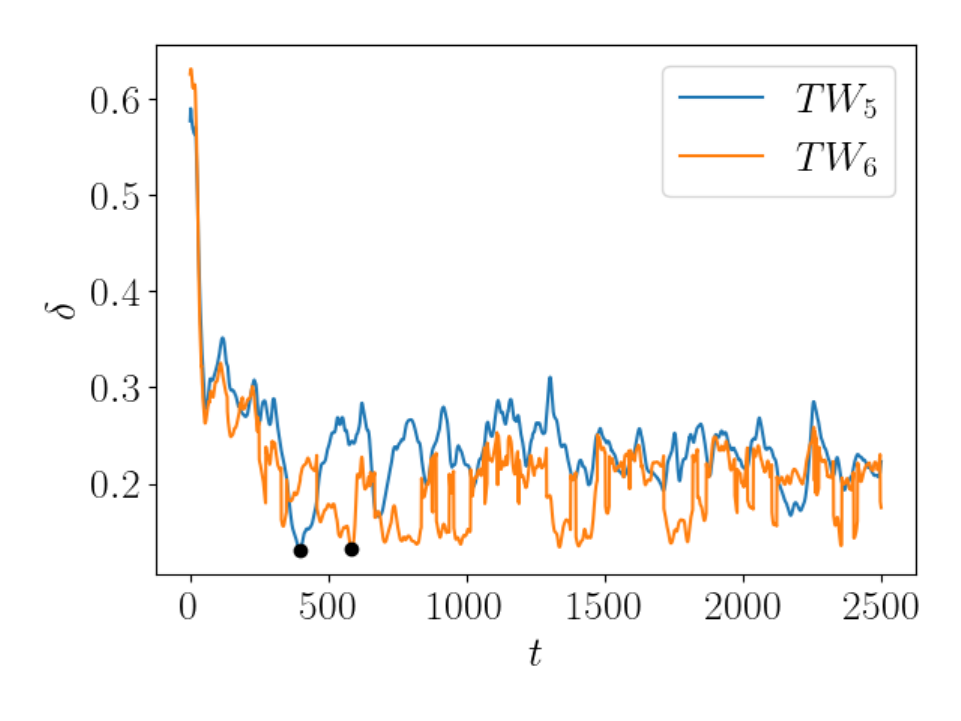} \\
    (c) \includegraphics[scale = 0.34]{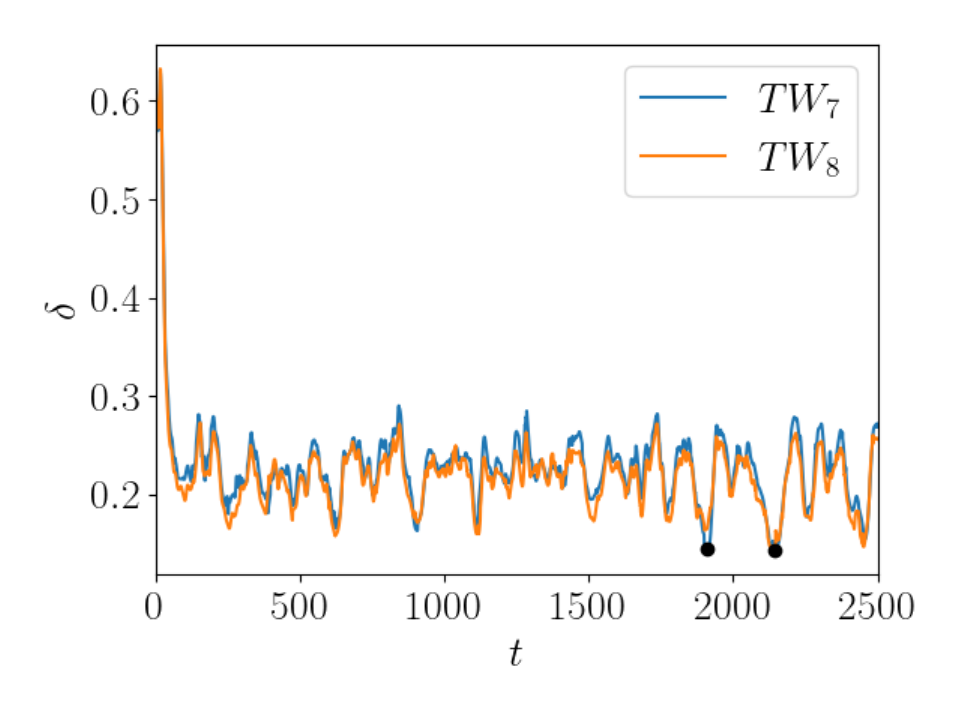}
    (d) \includegraphics[scale = 0.34]{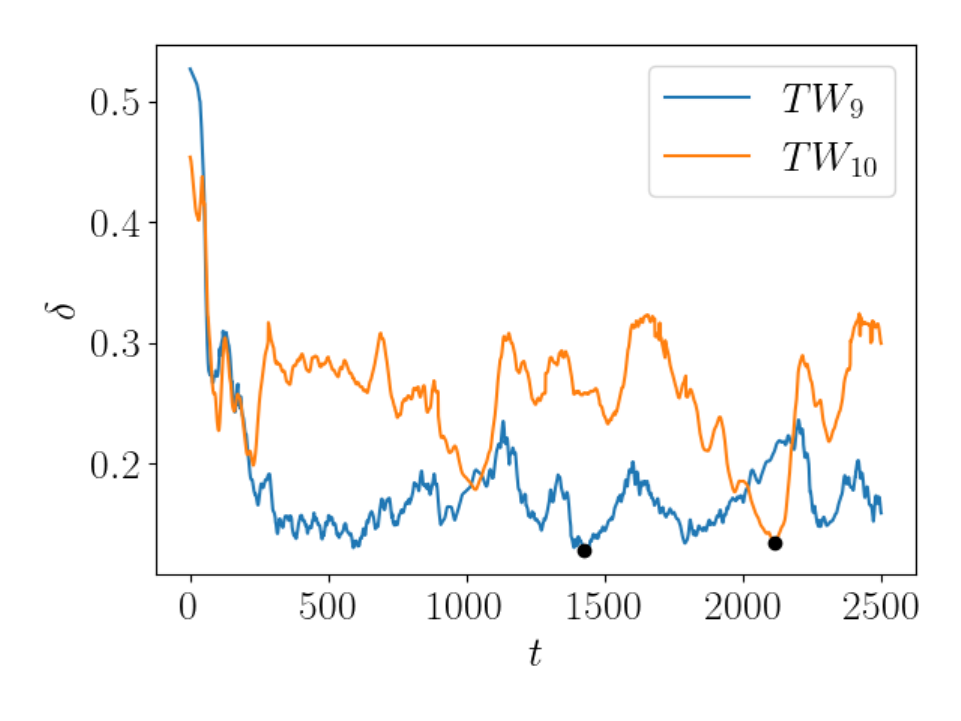} \\
    (e) \includegraphics[scale = 0.34]{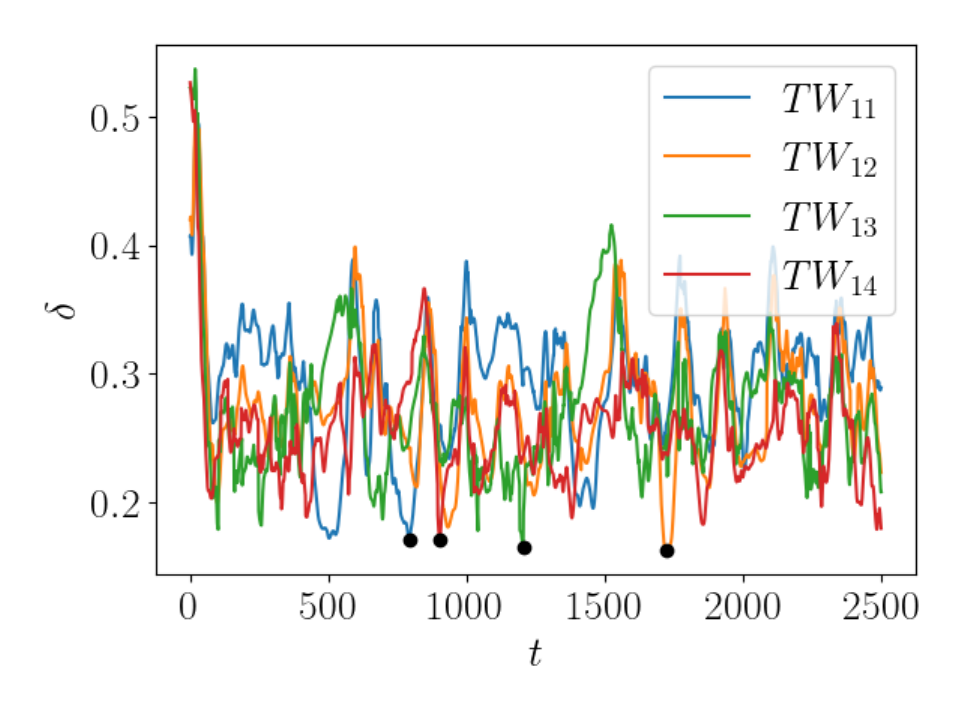} 
    (f) \includegraphics[scale = 0.34]{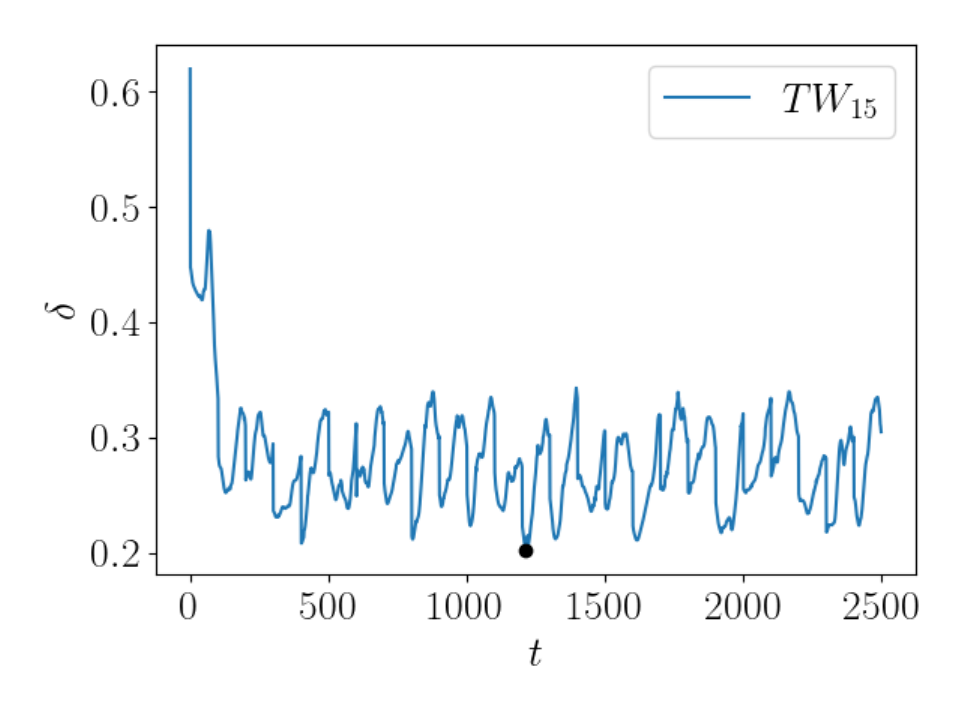} 
  \end{center}
  \caption{{\bf Close passes of turbulent trajectories to traveling waves}
    illustrated by the normalized distance $\delta (t)$ versus $t$ between a turbulent trajectory and the
    given traveling waves in the  (a) $\langle \sy \rangle$, (b) $\langle \txz, \sz \rangle$,
    (c) $\langle \txz, \sy \rangle$, (d) $\langle \sz, \sy \rangle$, (e) $\langle \sz \tx, \sy \rangle$, and 
    (f) $\langle \tau(L_x/3, L_z/3)\rangle$ symmetric subspaces. Black dots denote locations of the closest 
    passes to each traveling wave.}
  \label{fig:l2dist}
\end{figure}

To address the dynamical importance of the traveling waves, we calculate the norm of the difference between turbulent
trajectories in the invariant subspaces and each of the traveling-wave solutions with the same symmetries.
The minimum distance between a field taken from a turbulent trajectory $\bu(t)$ and a traveling wave $\bu_{\TW}$
was calculated by optimizing over phase shifts in both streamwise and spanwise directions. Modifying the definition
 of \cite{park2015exact}, we denote this distance by $\delta(t)$, defined as follows:
\begin{equation}\label{eqn:def_delta}
 \delta(t) = \min_{0 \leq a_x,a_z < 1} \frac{\|\bu(x+a_xL_x, y, z+a_zL_z, t) - \bu_{\TW}(x, y,z, t) \|}{\|\bu(t) \|_{\rms}}
\end{equation}
where $\|\bu(t) \|_{\rms}$ is the root mean square of the norm of velocity fields of the turbulent time series. 
Figure \ref{fig:l2dist} shows the normalized $\delta(t)$ for the computed traveling waves; each subfigure illustrates
close passes to the traveling waves found in the same symmetric subspace. $\TW_{2}$ and $\TW_{9}$ each have an additional
symmetry beyond those enforced in the simulation; they are grouped with other traveling waves in the symmetric subspace
in which they were found. The magnitude of $\delta(t)$ for all close passes was $O(10^{-1})$. The significance of this
result is discussed in section \ref{sec:linearized_dynamics}.

Figures \ref{fig:closepass_one} and \ref{fig:closepass_two} show the streamwise-average velocity fields
of the traveling waves and the close passes to them by turbulent trajectories. The close passes reflect
the general structure of the traveling waves they approach, in terms of the locations and strengths 
of mean spanwise $v,w$ rolls, but there are notable differences in moderate- to fine-scale structure.
The close passes shown in figs.\ \ref{fig:closepass_one} and \ref{fig:closepass_two} correspond to the 
black dots in fig.\ \ref{fig:l2dist}.

\begin{figure}
  \begin{center}
    (a) \includegraphics[scale = 0.3]{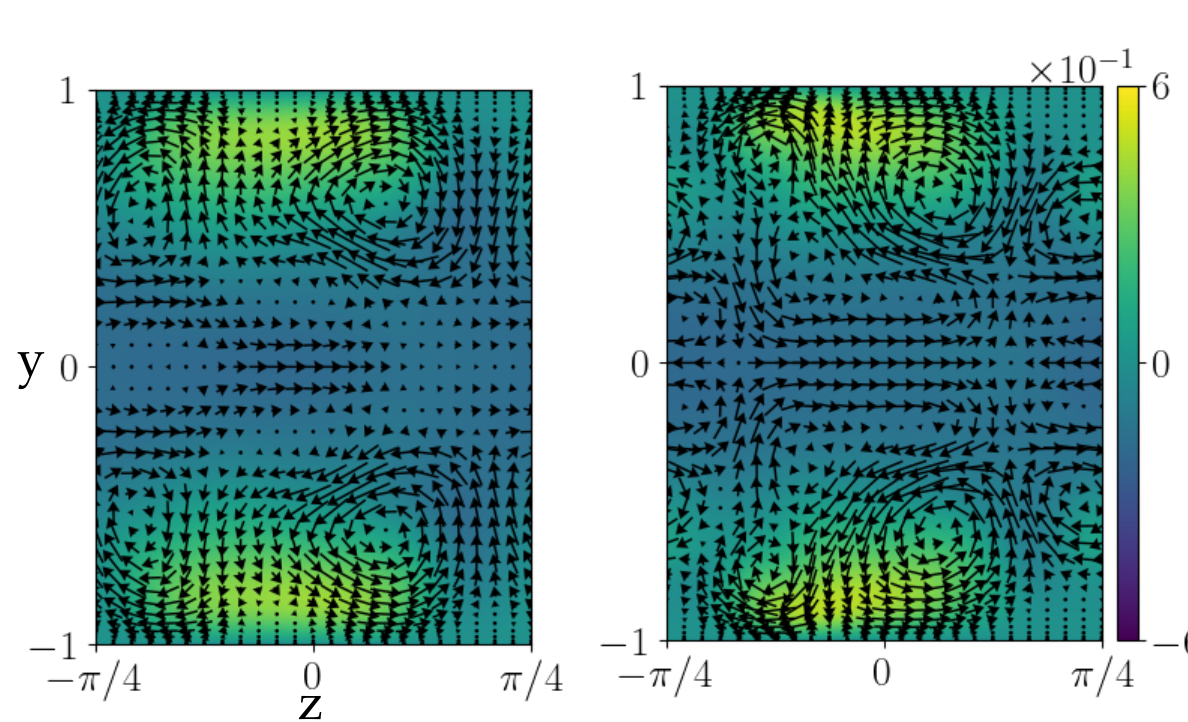}
    (b) \includegraphics[scale = 0.3]{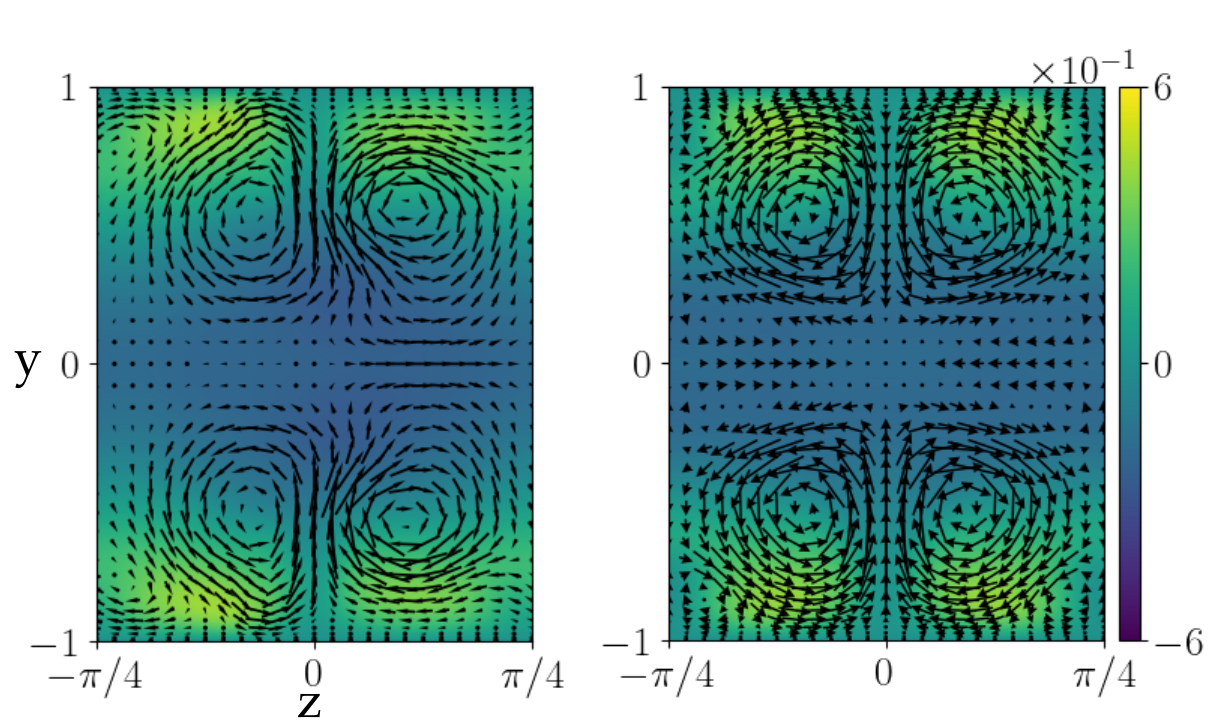} \\
    (c) ~\includegraphics[scale = 0.3]{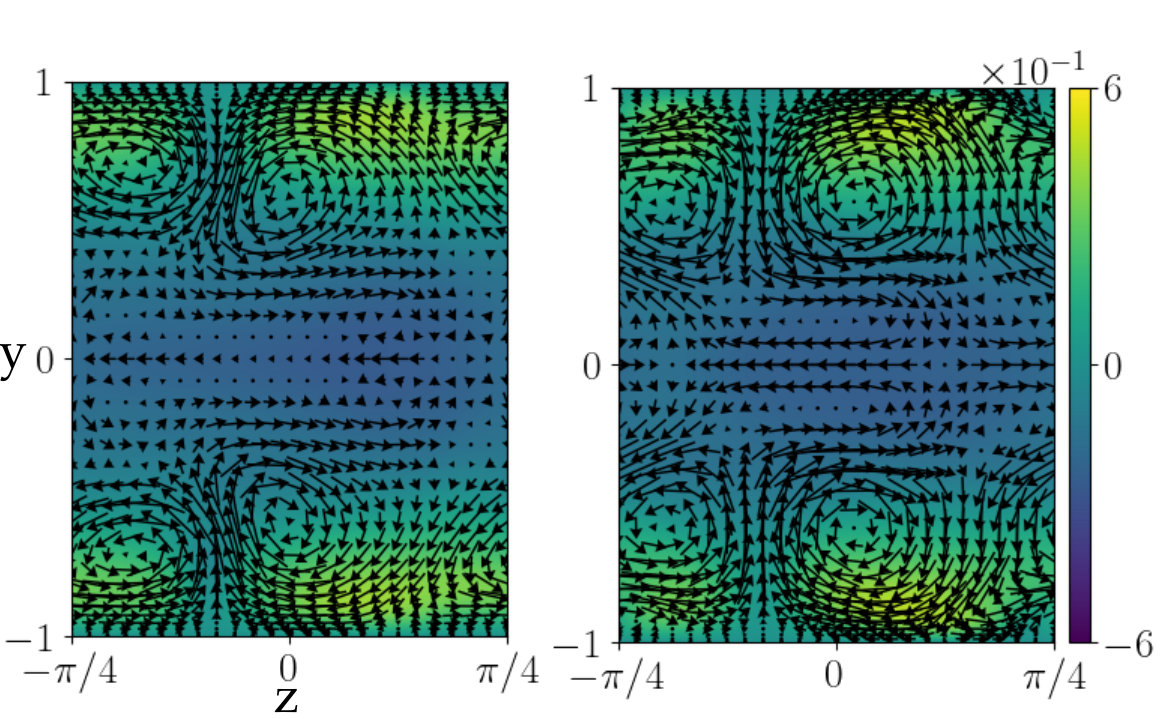} 
    (d) ~\includegraphics[scale = 0.3]{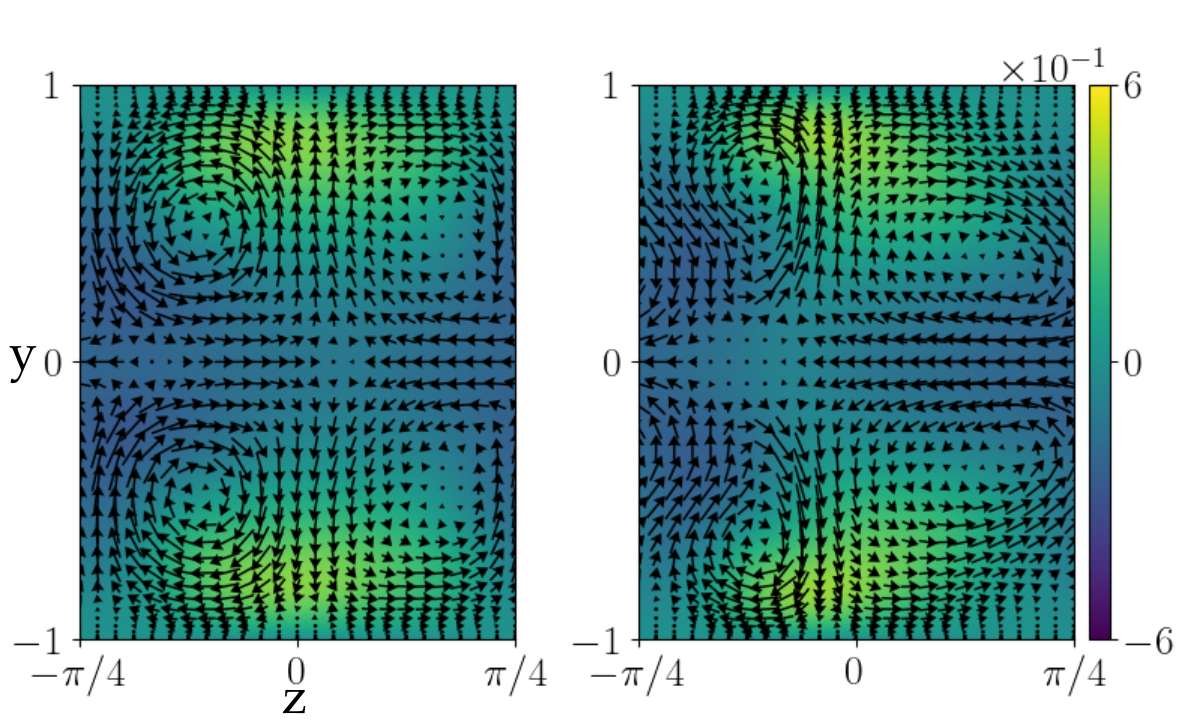}
  \end{center}
  \caption{{\bf Close passes of turbulent trajectories to $\TW_1$--$\TW_4$}
    illustrated by streamwise-averaged $y,z$ cross sections of the velocity fields. 
    (a) $\TW_{1}$, (b) $\TW_{2}$, (c) $\TW_{3}$, (d) $\TW_{4}$.
    In each case the right subplot is the given traveling wave, and the left subplot
    is the the streamwise-averaged cross section of the closest pass of the turbulent
    trajectory, at the point marked with a black dot in figure \ref{fig:l2dist}. Arrows indicate 
    cross-stream $v,w$ velocity, and the colormap indicates the streamwise velocity $u$.
  }
  \label{fig:closepass_one}
\end{figure}

\begin{figure}
  \begin{center}
    (a) \includegraphics[scale = 0.35]{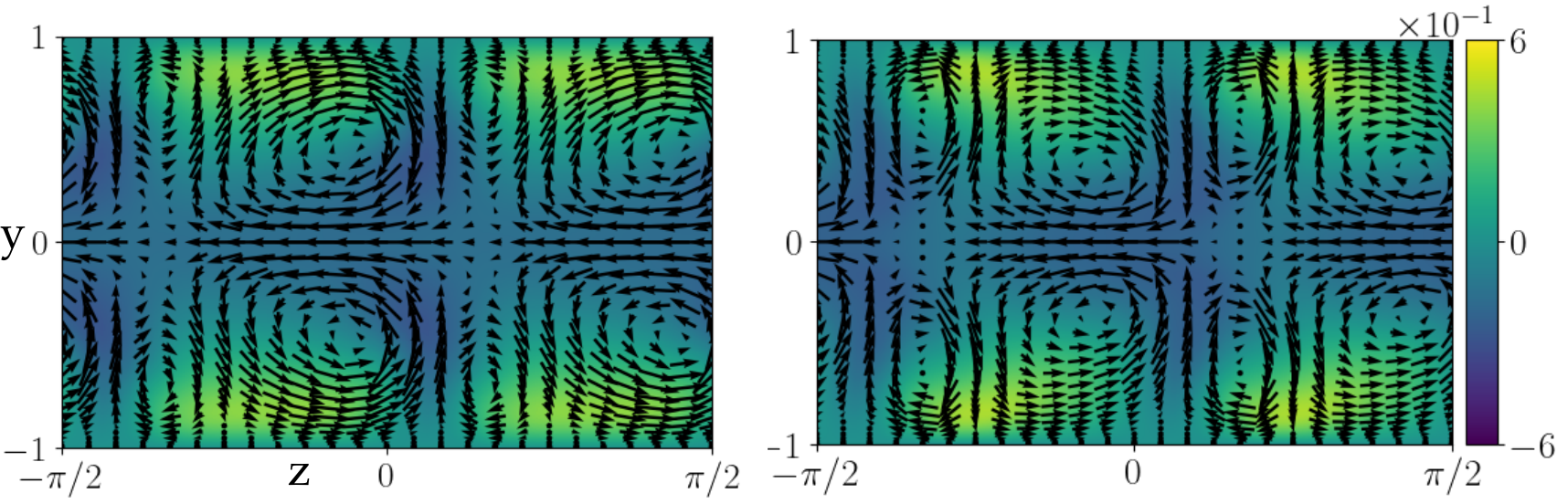}\\
    (b) \includegraphics[scale = 0.35]{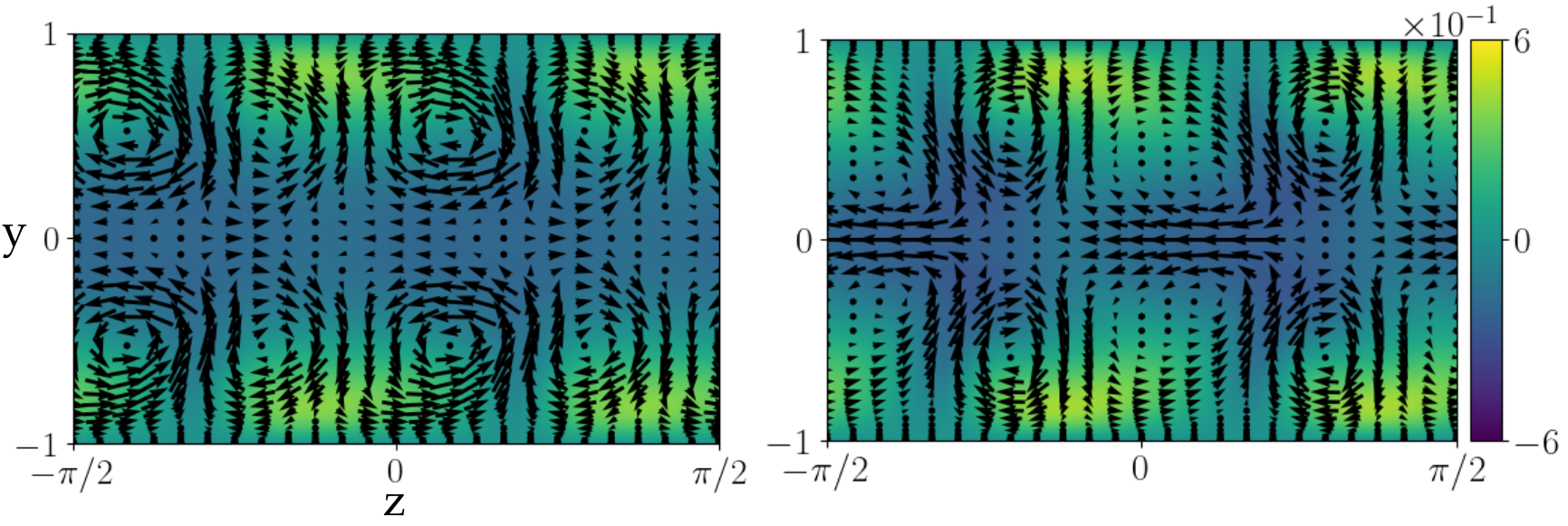}\\
    (c) \includegraphics[scale = 0.35]{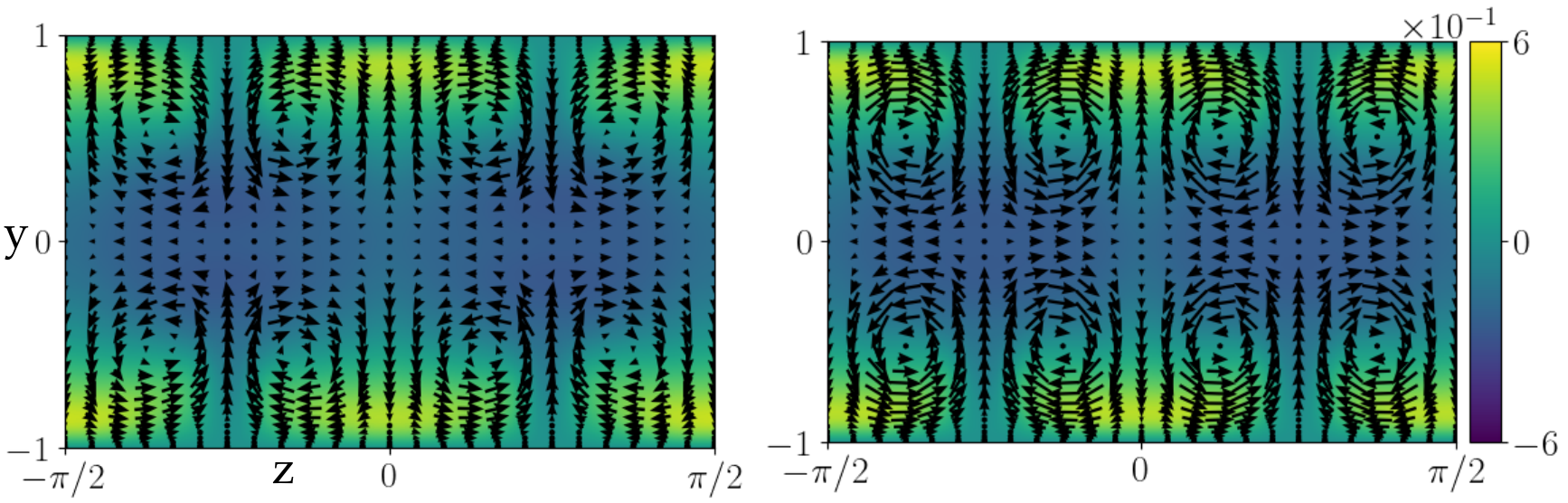} \\
    (d) \includegraphics[scale = 0.35]{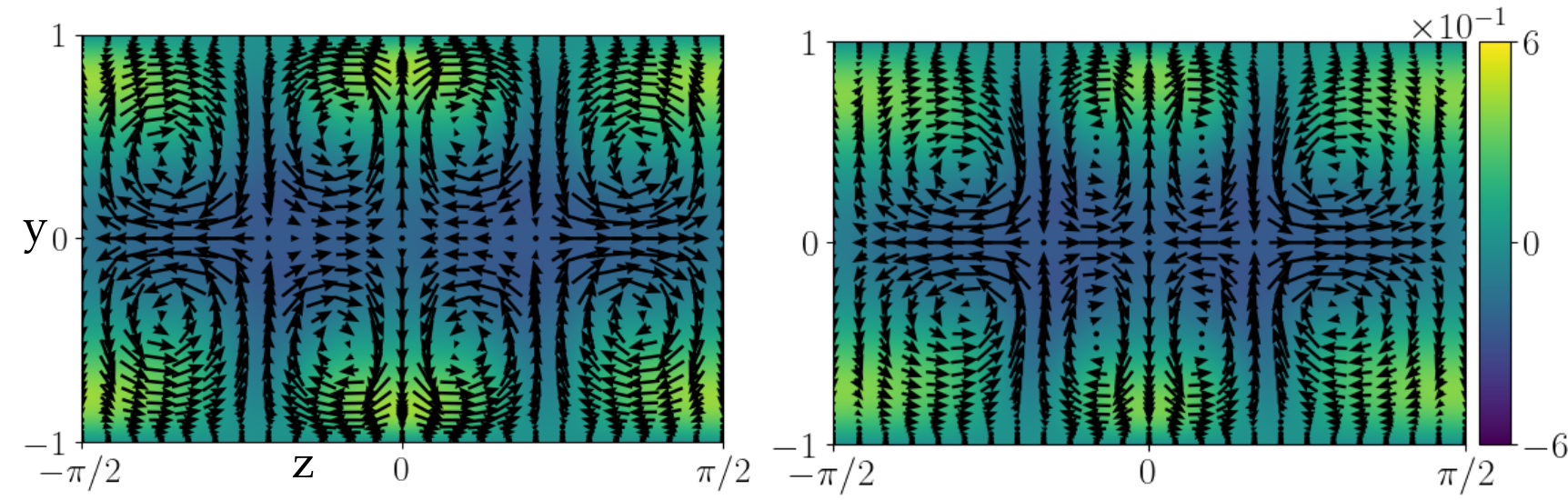} \\
  \end{center}
  \caption{{\bf Close passes of turbulent trajectories to $\TW_7$--$\TW_{10}$}
    illustrated by streamwise-averaged $y,z$ cross sections of the velocity fields. 
    (a) $\TW_{7}$, (b) $\TW_{8}$, (c) $\TW_{9}$, (d) $\TW_{10}$.
    In each case the right subplot is the given traveling wave, and the left subplot
    is the streamwise-averaged cross section of closest pass of the turbulent trajectory,
    at the point marked with a black dot in figure \ref{fig:l2dist}. Arrows indicate 
    cross-stream $v,w$ velocity, and the colormap indicates the streamwise velocity $u$.
    }
  \label{fig:closepass_two}
\end{figure}
\begin{figure}
  \begin{center}
    (a) \includegraphics[scale = 0.35]{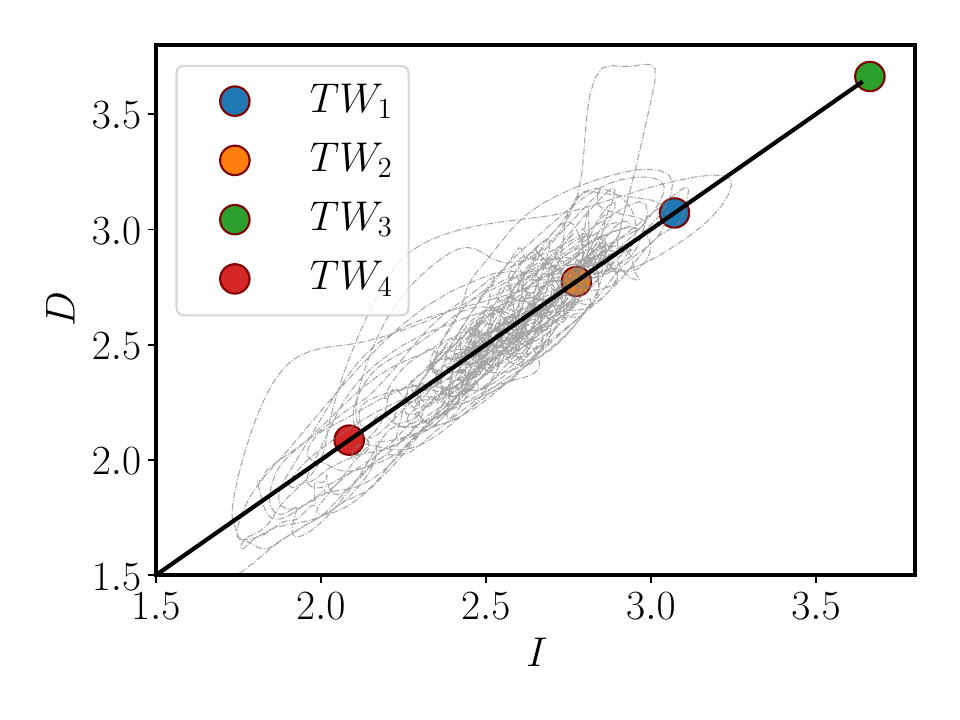}
    (b) \includegraphics[scale = 0.35]{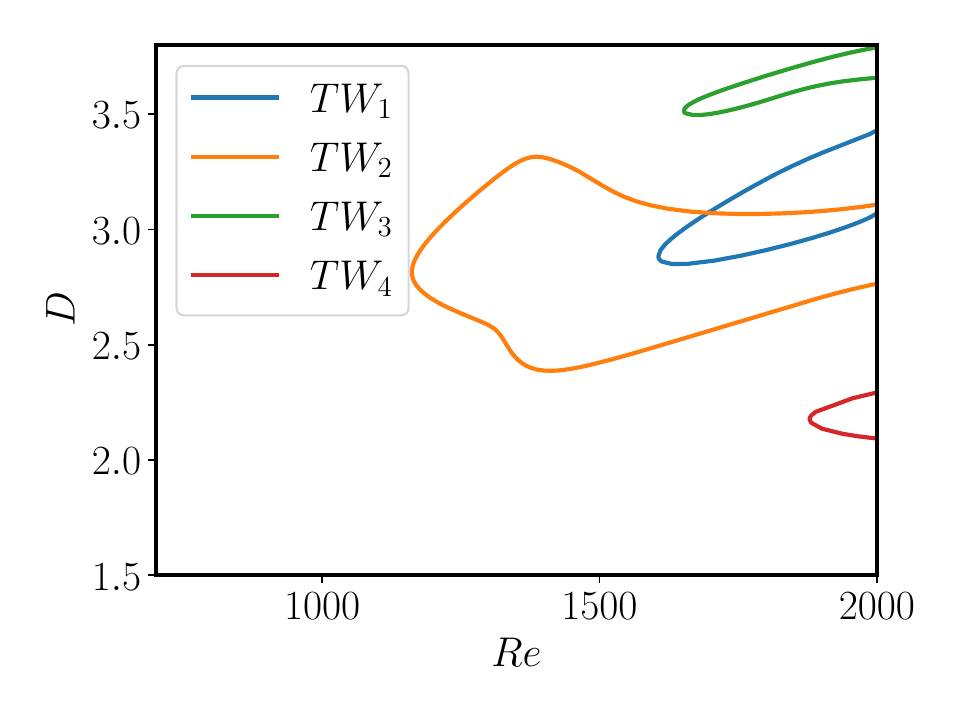} \\
    (c) \includegraphics[scale = 0.35]{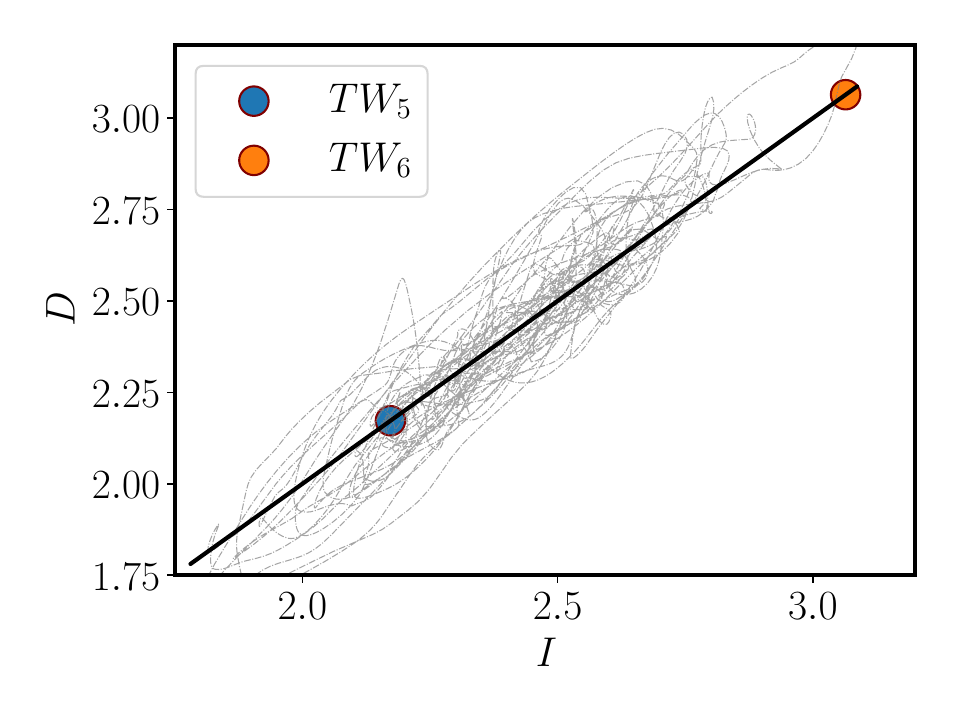} 
    (d) \includegraphics[scale = 0.35]{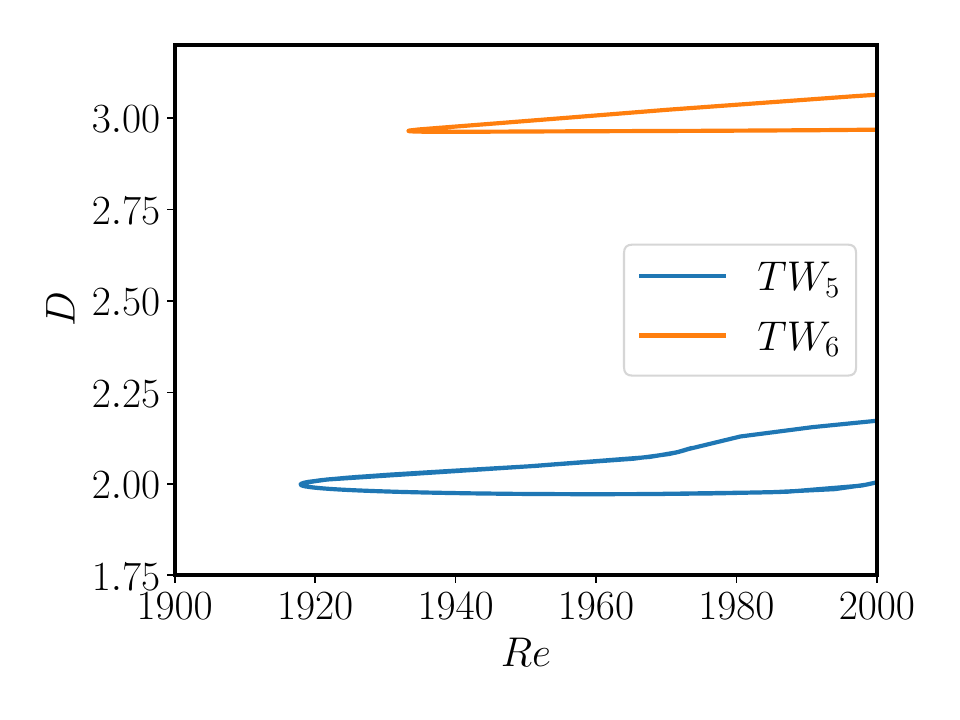}\\ 
    (e) \includegraphics[scale = 0.35]{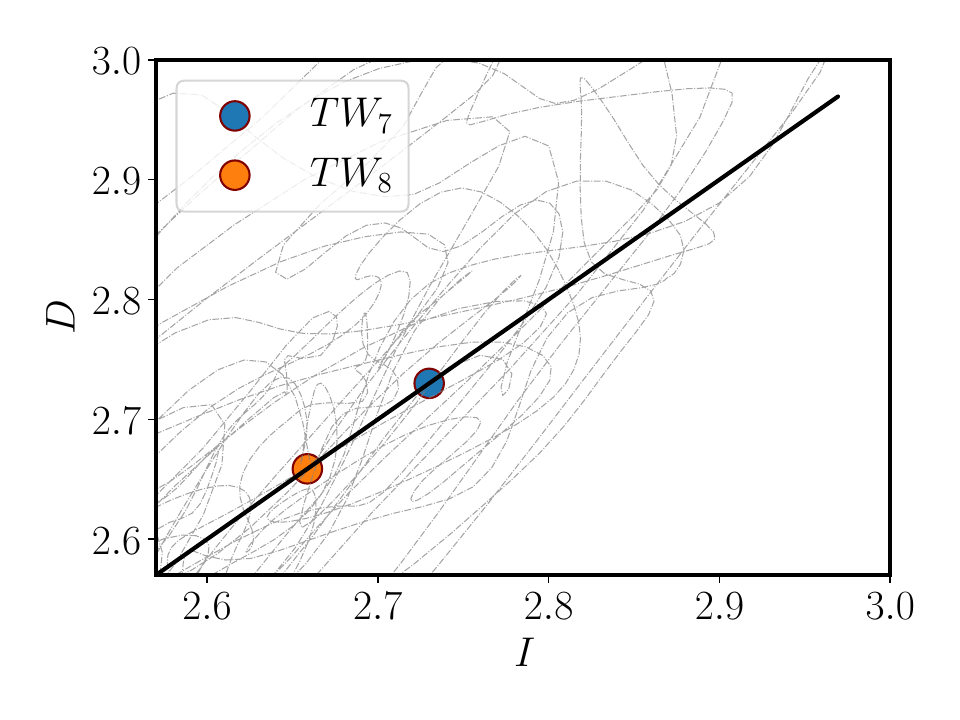} 
    (f) \includegraphics[scale = 0.35]{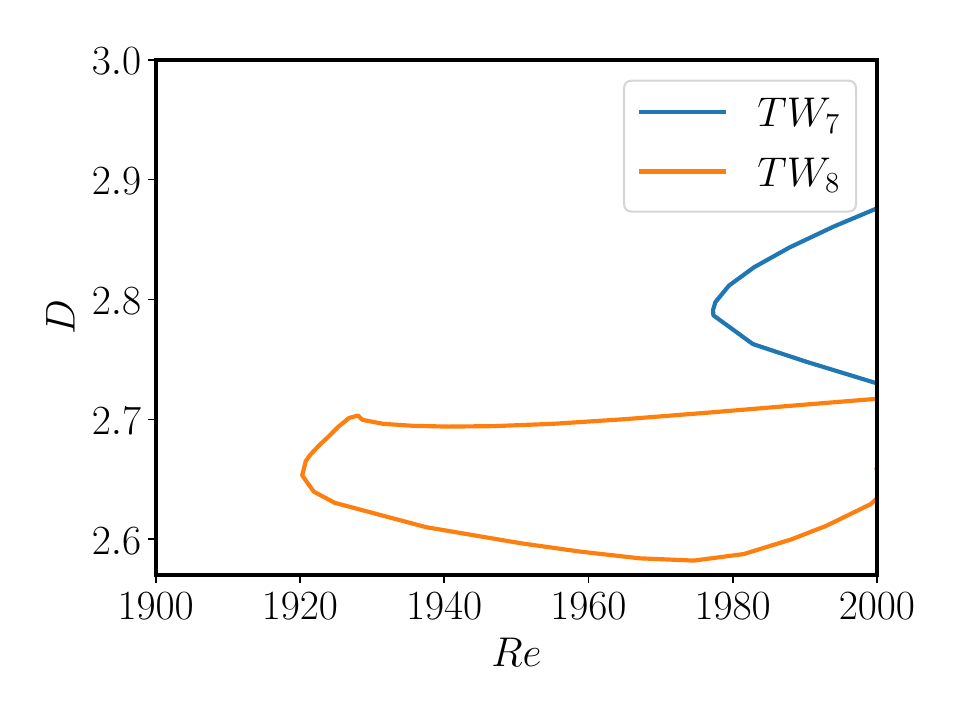}
  \end{center}
  \caption{{\bf Dissipation versus input and continuation curves for $\TW_1$ through $\TW_8$.}
    (left) Dissipation $D$ versus energy input $I$ at $Re = 2000$. Traveling waves are marked with dots.
    The dashed line in each case is a typical turbulent trajectory in the given symmetric subspace. 
    (right) $D$ versus $Re$ continuation curves of the traveling waves. The enforced symmetry groups for 
    each pair of curves are 
    (a,b) $\langle \sy \rangle$,
    (c,d) $\langle \sz,\, \txz \rangle$,
    (e,f) $\langle \sy,\, \txz \rangle$,}
 \label{fig:D-vs-Re-one}
\end{figure}

\begin{figure}
  \begin{center}
    (a) \includegraphics[scale = 0.35]{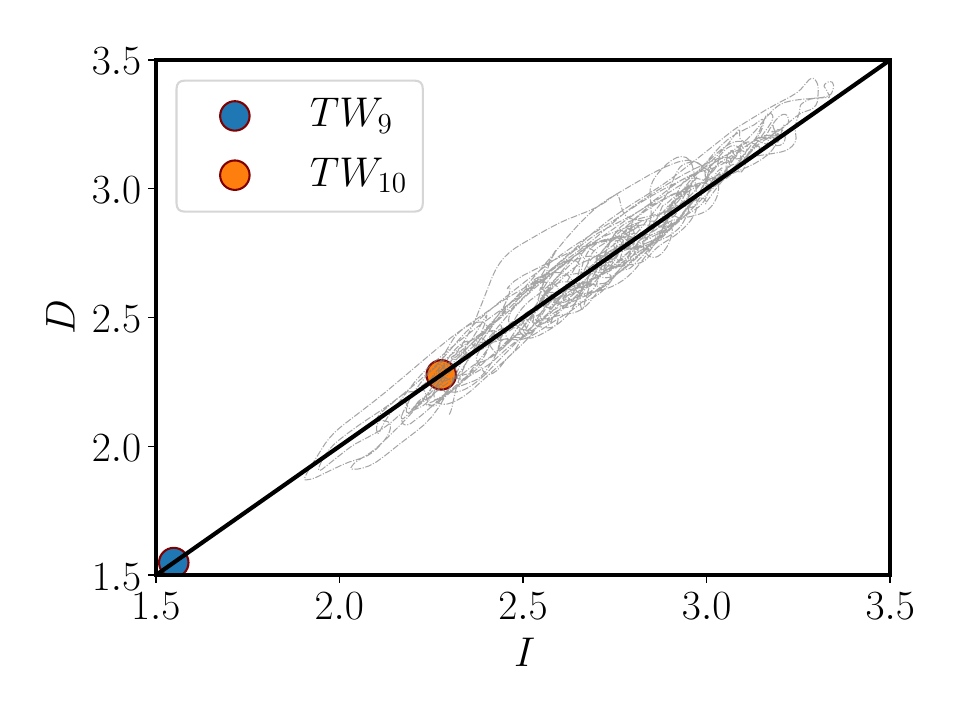}
    (b) \includegraphics[scale = 0.35]{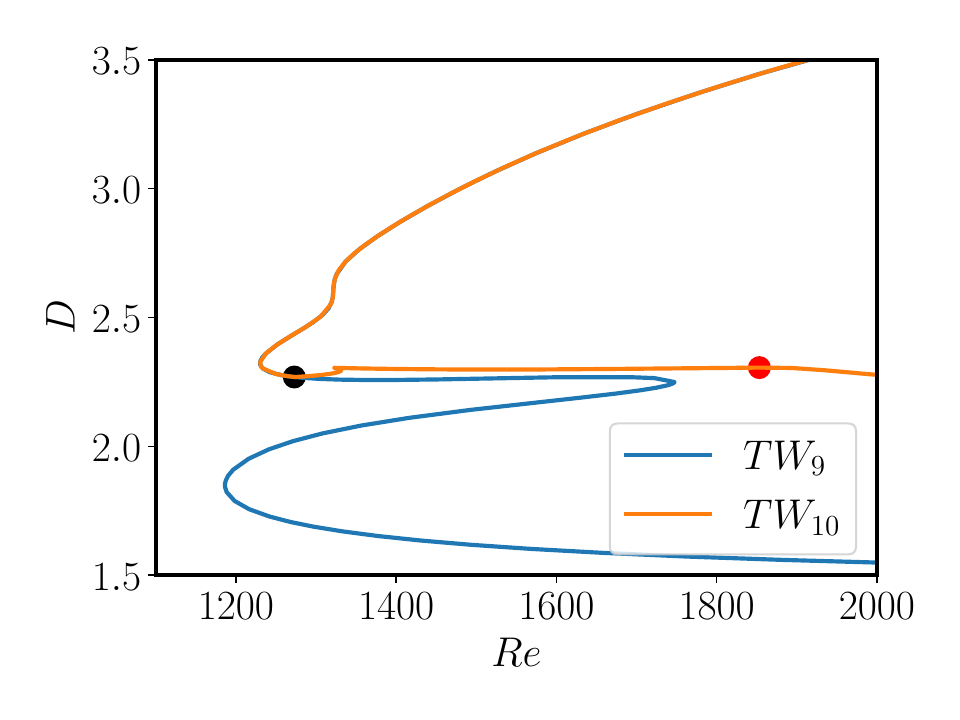} \\
    (c) \includegraphics[scale = 0.35]{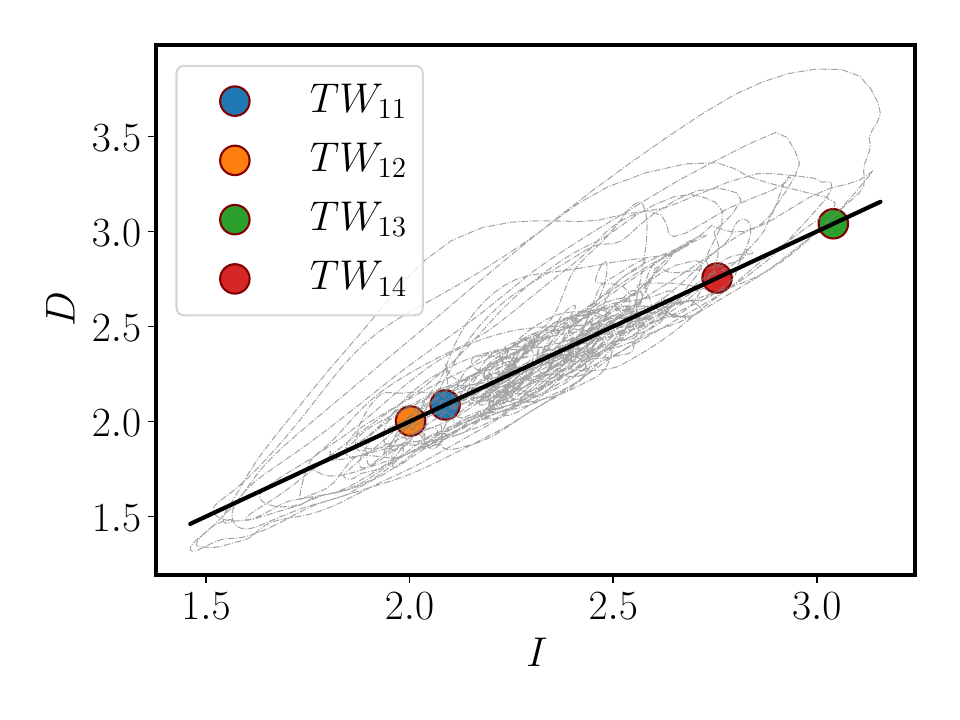} 
    (d) \includegraphics[scale = 0.35]{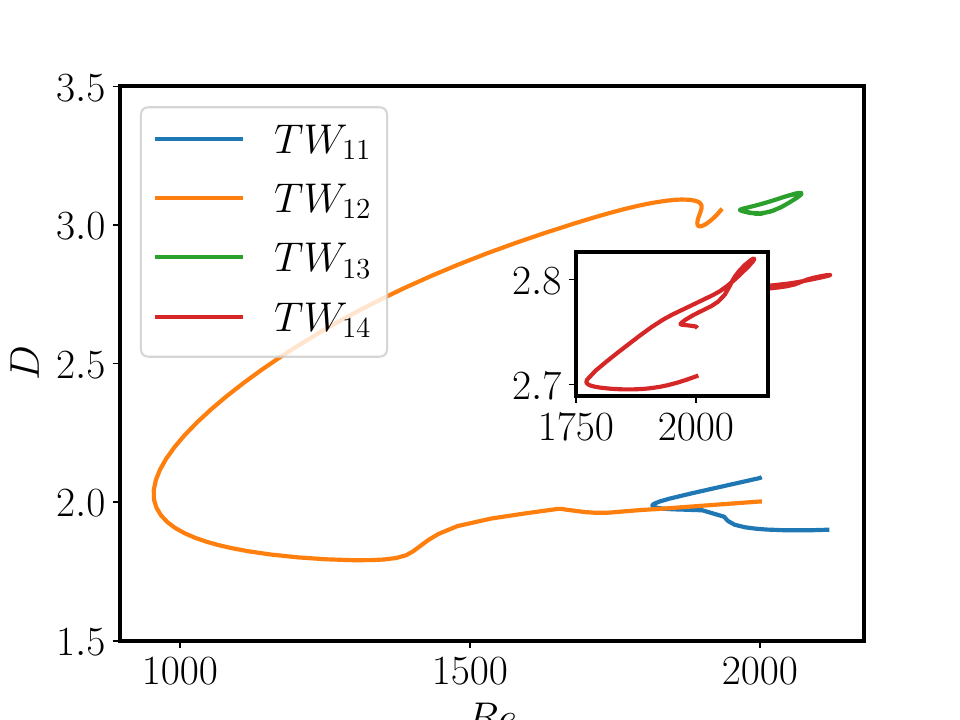}
  \end{center}
  \caption{{\bf Dissipation versus input and continuation curves for $\TW_9$ through $\TW_{14}$.}
    (left) Dissipation $D$ versus energy input $I$ at $Re = 2000$. Traveling waves are marked with dots.
    The dashed line in each case is a typical turbulent trajectory in the given symmetric subspace. 
    (right) $D$ versus $Re$ continuation curves of the traveling waves. The enforced symmetry groups for 
    each pair of curves are 
    (a,b) $\langle \sy,\, \sz \rangle$ 
    (c,d) $\langle \sy,\, \sz\tx \rangle$.
    $\TW_{9}$ has a degeneracy
    as discussed in section \ref{subsec:describe_tw}. At $Re \approx 1273$  (black dot in (b)), $\TW_{10}$ 
    connects to $\TW_{9}$ in a symmetry-breaking bifurcation. This is followed by another symmetry-breaking
    bifurcation at  $Re \approx 1894$ (red dot). See section \ref{subsec:describe_tw} for further details.}
    \label{fig:D-vs-Re-two}
\end{figure}

Figures \ref{fig:D-vs-Re-one} and \ref{fig:D-vs-Re-two} show energy dissipation versus input 
for traveling waves and typical turbulent trajectories (left) and continuation curves for 
traveling waves (right). The dissipation versus versus input curves were computed at $Re=2000$
in the symmetric subspaces and spatial domains in which each traveling wave was found. The 
dots on the $D=I$ lines indicate the dissipation-energy balance of each traveling wave. Each 
of the traveling waves shown is the lower branch of its continuation curve in Reynolds number
shown on the right. It is apparent from the dissipation-input figures that most of these
lower-branch traveling waves are embedded in the chaotic sets of the turbulent trajectories.

Figure \ref{fig:D-vs-Re-two}(b) shows a bifurcation of $\TW_{10}$ from $\TW_{9}$ at $Re \approx 1273$,
marked by a black dot in the figure. We found both these solutions in a search for traveling waves
with $\langle\sy, \sz\rangle$ symmetry on a $L_x, L_z = 2\pi, \pi$ computational domain. After convergence
$\TW_{9}$ was found to have additional symmetries and the full symmetry group $\langle\sy, \sz, \tau(L_x/4, L_z/2)\rangle$.
The $\tau(L_x/4, L_z/2)$ symmetry implies $\tx = (\tau(L_x/4, L_z/2))^2$ symmetry as well, and thus
$\TW_{9}$ is periodic on the half domain $L_x/2, L_z = \pi, \pi$, as reported in table \ref{tbl:tws}.
The $\tau(L_x/4, L_z/2)$ symmetry on the full domain is equivalent to $\txz$-symmetry on the half domain. 
The bifurcation at $Re \approx 1273$ is a symmetry-breaking bifurcation in which $\TW_9$ loses 
it $\txz$-symmetry on the half domain and $\TW_{10}$ is born with $\langle\sy, \sz\rangle$ symmetry
on the half domain and $\langle\sy, \sz, \tx \rangle$ symmetry on the computational domain. 
Another symmetry-breaking bifurcation occurs at $Re \approx 1894$ (red dot in figure \ref{fig:D-vs-Re-two}(b)),
where $\tx$ symmetry is broken and the new solution has symmetry group $\langle\sy, \sz\rangle$. 

$\TW_{12}$ was computed on domain $L_x, L_z = 2\pi, \pi$ with imposed symmetries $\langle \sz \tx, \sy \rangle$.
The converged solution found to have an additional $\tz$ symmetry and symmetry group $\langle \tz, \sz \tx, \sy \rangle$.
After recasting on the half domain $L_x, L_z = 2\pi, \pi/2$ implied by the $\tz$ symmetry, $\TW_{12}$ has 
symmetry group $\langle \sz \tx, \sy \rangle$.
\subsection{A traveling wave with $\langle \tau(L_x/3, L_z/3)\rangle$ symmetry}\label{sec:tw15}
Although we primarily focused on finite subgroups of plane Poiseuille flow with second-order elements
and half-box shifts, it is also possible to compute invariant solutions with symmetry groups involving
higher-order elements and non-half-box shifts. One such nontrivial, non-half-box shift symmetry group
is $\langle \tau(L_x/3, L_z/3)\rangle$. We found one traveling wave with this symmetry group, $\TW_{15}$,
listed in table \ref{tbl:tws}.  With no $z$ reflection symmetry, this solution has nonzero streamwise
and spanwise wavespeeds. The streamwise-averaged $y,z$ cross section and continuation in Reynolds
number for $\TW_{15}$ are shown in figure \ref{fig:tw15}.   
 \begin{figure}
 \begin{center}
 \includegraphics[scale = 0.35]{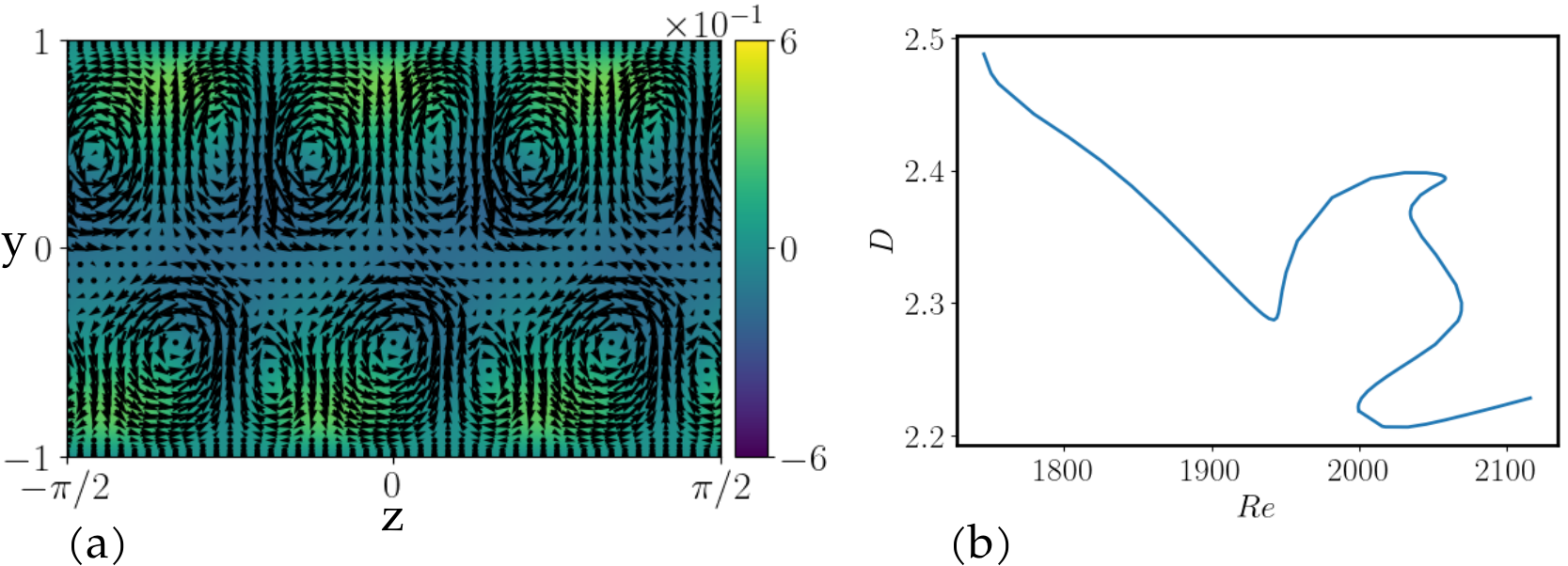}
  \end{center}
\caption{{\bf Streamwise-average velocity and continuation in Reynolds number for 
$\langle \tau(L_x/3, L_z/3) \rangle$-symmetric traveling wave $\TW_{15}$.} 
(a) The streamwise-averaged $y,z$ cross section of $\TW_{15}$ at $Re=2000$. 
Arrows indicate cross-stream $v,w$ velocity. The colormap indicates streamwise
velocity $u$. (b) Continuation $\TW_{15}$ in Reynolds number. 
}
 \label{fig:tw15}
\end{figure}
\section{Discussion}\label{sec:discussion}

\subsection{Linearized dynamics and bifurcations} \label{sec:linearized_dynamics}

A key aim of the study of invariant solutions is to model the dynamics of transitionally turbulent flows
in terms of the invariant solutions. One might model transitionally turbulent dynamics as a set of finite-time
hops between the neighborhoods of invariant solutions, governed by the linearization of the solutions' finite-time
dynamics. The results of this study, however, suggest that a much denser set of solutions is required for this 
approach. 
Close passes between turbulent trajectories and traveling waves in the same symmetric subspaces were observed to be $O(10^{-1})$
in root-mean-square normalized magnitude and $O(10^{-2})$ in unnormalized magnitude. This length scale is significantly
larger than the range of accuracy of the linearized finite-time dynamics $\phi^T(\bu + \Delta \bu) \approx \phi^T(\bu) + D\phi^T \Delta \bu$
estimated during the Newton-hookstep search. For each step $\bu \rightarrow \bu+\Delta\bu$, the Newton-hookstep algorithm computes
the Newton step $\Delta\bu_N$ using the linearized dynamics and estimates the trust-region radius $\dtr$ within which the linearization
is accurate. If the Newton step is within the trust region, $\|\Delta\bu_N\| \leq \dtr$, the algorithm takes the Newton step,
$\bu \rightarrow \bu+\Delta\bu_N$. 
If $\|\Delta\bu_N\| > \dtr$, the algorithm computes the optimal hookstep $\Delta\bu_H$ within the trust region and takes the hookstep,
$\bu \rightarrow \bu+\Delta\bu_H$. A typical search takes several slowly-converging hooksteps until the estimated
solution is within the trust region, at which point the algorithm switches to Newton steps and converges quickly. Thus the trust-region
radius $\dtr$ at which the search switches from hooksteps to Newton steps can be taken as an estimate of the radius of validity
of the linearization of the dynamics about the converged solution. The searches for the traveling waves in section \ref{sec:nl_tws}
showed that this radius is typically $\dtr \sim O(10^{-3})$ and sometimes $O(10^{-4})$, significantly smaller than the
$\delta = O(10^{-2})$ unnormalized closest passes of turbulent trajectories to the traveling waves. This suggests that
a much denser set of traveling waves would need to be computed in order to model turbulent trajectories with the linearized
dynamics of the traveling waves they shadow. Alternatively, the radius of accuracy for the linearized dyanmics could
be increased by reducing the time scale significantly below the $T=5$ used in our Newton-hookstep searches, but this would
result in more frequent hops between invariant solutions and, again, a denser set of solutions. 
\subsection{Implications for the study of transitional turbulence}

Our study of the plane Poiseuille symmetry subgroups was motivated by their importance in computing
invariant solutions: for enumerating the different symmetry subgroups in which to seek invariant solutions, 
for specifying which kind of solutions each subgroups allows, and for increasing the efficiency of numerical
searches.  However, knowledge of the symmetry subgroups has application beyond the computation of invariant
solutions. As discussed in section \ref{sec:intro}, the subgroups of $\Gppf$ divide the space of all solutions
of the flow into different invariant symmetric subspaces. The dynamics of transitional or turbulent flow in
each these subspaces can be studied as independent dynamical systems with independent but presumably related
dynamical properties. Small, doubly periodic boxes have been studied as ``minimal flow units'' for plane
Poiseuille and plane Couette flow since \cite{jimenez1991minimal} and \cite{hamilton1995regeneration},
addressing such questions as
(i) What is the minimal domain size and Reynolds number that sustains unsteady flow?
(ii) What dynamical processes sustain unsteady flow? (e.g. Waleffe's self-sustaining process)
(iii) Can the statistics of turbulence in large-aspect-ratio domains be replicated in small periodic boxes?
(iv) What are the minimal perturbations to laminar flow that trigger turbulence?
(v) What are the dynamical pathways by which transitional turbulence collapses to laminar flow?
Each of these questions can be addressed within the context of a specific symmetry subgroup, and the
symmetry-specific answers may well shed light on the dynamics of general unsteady flows, both individually and by comparison.
For example, if we find that the $\langle \sy, \txz \rangle$ subspace requires a much higher Reynolds number
to sustain turbulence than the $\langle \sy, \sz \rangle$ subspace for a fixed domain size, what does this
symmetry-dependent behavior tell us about the the mechanisms by which transitional turbulence is triggered
and kept from relaminarizing? Or, if we find that the statistics of one subspace closely replicate those of
the general flow while another's statistics differ markedly, does it mean that transitional turbulence is
dominated by structures with the former symmetries and not the latter?

The study of invariant solutions is motivated by the hope of addressing such questions precisely from a
dynamical-systems perspective, with well-resolved, time-dependent dynamic processes, rather than time or
ensemble averages and approximations to the equations of motion. From this perspective, we reframe the
above questions in the context of symmetry, as follows. 
{\em For each distinct symmetry subgroup of plane Poiseuille flow, plane Couette flow, or pipe flow}
(vi) What is the minimal Reynolds number that sustains transitional turbulence flow on a given domain?
(vii) What domain supports transitional turbulence at the lowest Reynolds number?
(viii) What is the lowest-Reynolds value of $(L_x, L_z, \Rey)$ for which traveling-waves or equilibria exist?
For periodic or relative periodic orbit solutions? For edge states? 
(ix) What are the physical mechanisms corresponding to the quantitative force balance in the
lowest-Reynolds invariant solutions? I.e., what is the minimal self-sustaining process for each
symmetry subgroup?
(x) Which invariant solutions best replicate the statistics of transitional turbulence with no
symmetries, or of transitional flow with the given symmetries?
(xi) Can the dynamics of transitional turbulence be understood as a chaotic walk among the set of
unstable invariant solutions within the network of their low-dimensional unstable manifolds?

The furthest explorations of these questions have been in the context the $\langle \sz \tx, \sxyz \tz \rangle$-symmetric
subspace of plane Couette flow, which contains the Nagata equilibrium \citep{nagata1990three}, Waleffe's self-sustaining
process \citep{waleffe1997self} and the periodic orbits and state-space visualizations of
\citep{gibson2008visualizing,gibson2009equilibrium}.
In this study, we attempted to find traveling waves for plane Poiseuille flow in a few symmetry subgroups other
than those presented in table \ref{tbl:tws} at similar Reynolds numbers and domains, and found that it was 
difficult to generate sustained transitional turbulence for some subgroups. Perturbations of even very large
magnitudes decayed quickly to laminar flow. The questions above arise from these observations of the strong
influence of symmetry on the basin of attractions of laminar and turbulent flow in small doubly-periodic domains. 

Another line of inquiry stems from the relations between periodic patterns and patches of turbulent
flow within a laminar background. Some doubly-periodic solutions have related spanwise-localized
forms \citep{schneider2010snakes,gibson2014spanwise}. Are there other doubly-periodic solutions in different symmetry
groups that can also be localized? So far there is little success in understanding localization
of periodic solutions through modulation and amplitude equations derived from the Navier-Stokes
equations. Perhaps this problem is more tractable in a symmetry subgroup that has not yet received
much study.

\subsection{Extension to higher-order group elements and tilted domains}
\label{sec:extension_to_tilted_domains}
Our analysis of the subgroups of $\Gppf$ is complete only up to subgroups with second-order elements.
Some examples of subgroups with order-3 and higher elements are presented in section \ref{sec:nonhalfbox_groups},
along with some preliminary principles for their organization. Extending the classification results of this
paper to groups with $n$th-order elements would be a matter of group theory and number theory.
Extension to doubly-periodic domains tilted with respect to the
mean pressure gradient or bulk-velocity contraint is also possible. This would especially be applicable in the case of transitional turbulence in large domains, where alternating laminar and turbulent structure align themselves at a particular angle with respect to the streamwise direction (see \cite{tuckerman2011patterns} and \cite{tuckerman2014turbulent}).
In this case $\sz$ is no longer a symmetry
of the equations of motion, so a classification of plane Poiseuille subgroups can be arrived at simply by dropping
subgroups containing $\sz$ as an element or as a factor of an element. In plane Couette flow, one can preserve the
tilt in the $x,z$ plane by inversion in both $x$ and $z$, so the symmetry subgroups of tilted dynamics allow
elements containing the product $\sigma_{xz} = \sx\sz$, but not $\sz$ or $\sx$ individually.

\section{Conclusion}\label{sec:conclusion}
This article analyzed subgroups of the symmetry group of plane Poiseuille flow in doubly-periodic domains, focusing mainly
on subgroups generated by spanwise and wall-normal reflections and streamwise and spanwise half-box shifts. We enumerated
the set of subgroups generated by these symmetries and classified them in equivalence classes related by conjugacies.
This procedure is general and can be used to classify symmetry subgroups of other parallel shear flows. We recomputed,
verified, and extended previous analysis of the symmetry subgroups of plane Couette flow (see appendix \ref{sec:appendixB}).
We presented some examples of groups with phase shifts other than half the periodic length and inferred some general
principles of subgroups with higher-order elements.

The interplay between symmetries and dynamics was explained and exploited to find $14$ new traveling waves
in a variety of symmetry subgroups. $\TW_{1-4}$ and $\TW_{7, 8}$ were shown to travel in both streamwise and spanwise
directions, consistent with the imposed symmetries. The spanwise wave-speeds ($c_z$) were found to be smaller by orders
of magnitude than the streamwise wave-speeds ($c_x$). This observation is consistent with the streakiness of 
plane Poseuille flow, in which perturbation velocities are dominated by streaks in the streamwise direction of 
the forcing. To the best of our knowledge, these solutions are first of their kind for plane Poiseuille flow, since all
the previously reported traveling waves for this flow travel only in the streamwise direction. Based on behavior of the
solution and knowledge of symmetries we conjecture that $\text{P}2$ of \cite{park2015exact} has $\sz$ or $\sz \tx$ symmetry
in addition to $\sy$ symmetry. We also computed one traveling wave in a non-half-box symmetry group $\langle \tau(L_x/3, \, L_z/3) \rangle$.

The complete analysis of half-box subgroups and examples of non-half-box symmetry subgroups suggest a rich variety
of physically distinct symmetric subspaces, each with different dynamical behavior for small domains and transitional
Reynolds numbers. We pose a number of questions about the interplay between symmetry and large-amplitude dynamics of
transitional turbulence in plane Poiseuille, plane Couette, and pipe flow, and what kind of minimal self-sustaining
process each symmetry subgroup supports. 

In addition to these directions for future work, we note that the doubling of the convergence rate obtained
by enforcing a second-order symmetry during Krylov iteration will change when the symmetries enforced are third-order
or higher. Understanding the increased convergence rate for higher-order symmetries is nontrivial and will likely be
related to the isotypic decomposition of the space restricted to the symmetry group of interest. Recent work by
\cite{rudge2024crystallographic} explained the usage of crystallographic notation for fluid dynamics and should prove
useful in further analysis of the symmetries of wall--bounded, transitional shear flows.

\textbf{Declaration of interests:} The authors report no conflict of interest.
\section*{Acknowledgements}
This work was partially supported by the National Science Foundation under NSF CBET Award No. 1554149.
Computations were performed on Marvin, a Cray CS500 supercomputer at UNH supported by the NSF MRI program
under grant AGS-1919310.
\clearpage
\bibliographystyle{jfm}
\bibliography{jfm}
\clearpage
\appendix
\section{Conjugacy of subgroups of $\Gppfh$.}
\label{sec:appendixA}

To determine the equivalence classes of subgroups of $\Gppfh = \langle \sy, \sz,\tx,\tz\rangle$ (\ref{def:Gppfh}),
we need to determine if two unequal subgroups $H$ and $H'$ are conjugate, i.e. if $H' = \gamma H \gamma^{-1}$ for
some $\gamma \in \Gppf$. A search for such $\gamma$ over its general form $\gamma = \sy^j \sz^k \tau(a,b)$ would
require a search over $j,k \in \{0,1\}$, $a \in [0, L_x),$ and $b \in [0, L_z)$. However, the following theorem
shows that we need only to check the single value $\gamma = \tau(0, L_z/4)$. 

\begin{theorem}
Let $H, H'$ be unequal but equivalent subgroups of $\Gppfh$. Then $H$ and $H'$ are 
conjugate under $\gamma = \tau(0,L_z/4)$.
\end{theorem}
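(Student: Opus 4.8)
The plan is to reduce the entire question to understanding the automorphism of $\Gppfh$ given by conjugation, $h \mapsto \gamma h \gamma^{-1}$, for an arbitrary conjugating element $\gamma = \sy^j \sz^k \tau(a,b)$ with $j,k\in\{0,1\}$, $a\in[0,L_x)$, $b\in[0,L_z)$. Since an automorphism is determined by its action on generators, the first step is to compute the images of the four generators $\sy,\sz,\tx,\tz$ and to show that almost all of the parameters $j,k,a$ of $\gamma$ drop out, so that the action depends only on the $z$-shift $b$.

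The key computations use only the multiplication rules (\ref{eqn:multiplication_rules}) together with the half-period identification $\tau(0,-L_z/2)=\tau(0,L_z/2)=\tz$. Because $\sy$ is central in $\Gppf$, and because $\sz\tx\sz = \tx$ and $\sz\tz\sz=\tau(0,-L_z/2)=\tz$ while translations commute, one finds that conjugation by $\gamma$ fixes $\sy$, $\tx$, and $\tz$. The only nontrivial image is that of $\sz$: carrying $\tau(a,b)\sz=\sz\tau(a,-b)$ through the computation gives
\[
  \gamma\, \sz\, \gamma^{-1} = \sz\,\tau\!\left(0,(-1)^{k+1}2b\right),
\]
so that conjugation by $\gamma$, restricted to $\Gppfh$, is completely determined by the single element $\tau(0,\pm 2b)$, and hence only by $2b$ modulo $L_z$. (The sign ambiguity from $k$ is immaterial in every case that arises below, since there $2b\equiv -2b \pmod{L_z}$.)

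The second step is to constrain $b$ and conclude by case analysis. Suppose $H,H'\le\Gppfh$ are unequal but $H'=\gamma H\gamma^{-1}$. If $H$ contained no element with an $\sz$ factor, then every element of $H$ would be fixed by conjugation and we would have $H'=H$, contradicting $H\ne H'$; hence $H$ contains some element $\sy^p\sz\tx^r\tz^s$. Applying the formula above, its image is $\sy^p\sz\,\tau(0,\pm2b)\,\tx^r\tz^s$, which lies in $H'\subseteq\Gppfh$ only if $\tau(0,\pm2b)\in\Gppfh$, i.e.\ $2b\equiv 0$ or $L_z/2 \pmod{L_z}$. The case $2b\equiv 0$ makes conjugation the identity automorphism and forces $H'=H$, again a contradiction; so $2b\equiv L_z/2$, i.e.\ the conjugation sends $\sz\mapsto\sz\tz$ and fixes $\sy,\tx,\tz$. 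A direct check shows this is exactly the automorphism induced by $\gamma_0=\tau(0,L_z/4)$, since $\tau(0,L_z/4)\,\sz\,\tau(0,-L_z/4)=\sz\tau(0,-L_z/2)=\sz\tz$. As the two automorphisms agree on all four generators they agree on $\Gppfh$, and therefore $H' = \gamma H\gamma^{-1} = \tau(0,L_z/4)\,H\,\tau(0,L_z/4)^{-1}$, as claimed.

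The main obstacle is not conceptual but a matter of careful bookkeeping: one must track the sign flips produced by $\tau(a,b)\sz=\sz\tau(a,-b)$ and the reductions modulo $L_z$ precisely enough to confirm that $j$, $k$, and $a$ genuinely contribute nothing to the restricted automorphism and that the residue $2b \bmod L_z$ is the only relevant datum. Once this is established, the ostensibly two-parameter search over $\gamma=\sy^j\sz^k\tau(a,b)$ collapses to the observation that a nontrivial conjugacy of subgroups of $\Gppfh$ can realize only one nonidentity automorphism of $\Gppfh$, namely the one realized by the single element $\tau(0,L_z/4)$.
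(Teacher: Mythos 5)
Your proof is correct and follows essentially the same route as the paper's: both reduce the conjugation to its dependence on $b$ alone, use the existence of an $\sz$-containing element of $H$ together with closure in $\Gppfh$ to force $2b \equiv L_z/2 \pmod{L_z}$, and identify the resulting map with conjugation by $\tau(0,L_z/4)$. The only cosmetic difference is that you work at the level of generators and the induced automorphism, which lets you bypass the paper's explicit case split between $b=L_z/4$ and $b=3L_z/4$.
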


\begin{proof}
  Let $H', H$ be unequal, equivalent subgroups of $\Gppfh$.
  Then $H' = \gamma H \gamma^{-1}$ for some $\gamma \in \Gppf$, $\gamma \neq 1$. 
  The element $\gamma$ can be expressed as $\gamma = \sy^j \sz^k \tau(a,0) \tau(0,b)$ for 
  some $j,k \in \{0,1\}$ and $a \in [0, L_x), \, b \in [0, L_z)$. 
  Since $H$ is a subgroup of $\Gppfh$, each of its elements can be expressed in form 
  $h = \sy^m \sz^n \tx^p \tz^q$ for some $m, n, p, q$ each either 0 or 1.
  The elements $h'$ of $H'$ are then given by $h' = \gamma h \, \gamma^{-1}$ for $h \in H$.
  Substitution gives
  \begin{align}
    h' &= (\tau(a,0)\, \tau(0,b)\, \sy^j \sz^k)\, (\sy^m\, \sz^n\, \tx^p\, \tz^q)\, (\sz^{-k}\, \sy^{-j} \, \tau(-a,0)\, \tau(0,-b)).
  \end{align}
  Since the elements $\sy, \sz, \tx, \tz,$ and $\tau(\pm a,0)$ commute, we can rearrange these factors
  and simplify with $ \tau(a,0) \tau(-a,0) = \sy^{j} \sy^{-j} = \sz^{k} \sz^{-k} = 1$, yielding
  \begin{align}
    h' &= \tau(0,b)\, \sy^m\, \sz^{n}\, \tx^{p}\, \tz^{q}\, \tau(0,-b). \label{eqn:theorem_two_h_conjugacy}
  \end{align}
  Since $\tau(0,-b)$ commutes with $\sy, \tx,$ and $\tz$, but
  $\tau(0,-b)\, \sz^{n} = \tau(0, -(-1)^{n}b) = \tau(0, (-1)^{n+1}b)$, we have
  \begin{align}
    h' &= \tau(0,(1+(-1)^{n+1})b) \, \sy^{m}\, \sz^{n}\, \tx^{p}\, \tz^{q}.
  \end{align}
  There must be at least one $h$ in $H$ for which $\tau(0,(1+(-1)^{n+1})b)$
  is not the identity, since the contrary assumption leads to 
  $h' = \sy^m\, \sz^n\, \tx^p\, \tz^q\ = h$ for all $h$ in $H$,
  and thus $H' = H$, which contradicts the assumption $H' \neq H$.

  Thus we have $n_j = 1$ for some $h_j$ in $H$, and $h'_j$ has the form
  \begin{align}
    h'_j = \tau(0, 2b)\, \sy^{m_j}\, \sz^{n_j}\, \tx^{p_j}\, \tz^{q_j}, \label{eqn:theorem_two_hj}
  \end{align}
  where $\tau(0,2b) \neq 1$. Since $h'_j$ is an element of a subgroup of $\Gppfh$, it is a product
  of factors $\sy, \sz, \tx,$ and $\tz.$ It follows then from \ref{eqn:theorem_two_hj} that 
  $\tau(0,2b) = \tz = \tau(0, L_z/2)$ and thus $2b = L_z/2$.  The only values of $b$ in the periodic
  domain $[0, L_z)$ that satisfy this equality are $b=L_z/4$ and $b=3L_z/4$.

  If $b=L_z/4$ then substituting this value and $h = \sy^m\, \sz^n\, \tx^p\, \tz^q$
  into \ref{eqn:theorem_two_h_conjugacy} gives $h' = \tau(0, L_z/4)\, h\, \tau(0, -L_z/4)$ for
  all $h$ in $H$, and thus $H'$ is conjugate to $H$ under $\gamma = \tau(0,L_z/4)$.
  If $b=3L_z/4$, then we write $\tau(0, \pm 3L_z/4) = \tau(0, \pm L_z/4) \tau(0, \pm L_z/2) = \tau(0, \pm L_z/4) \tz$
  and substitute into \ref{eqn:theorem_two_h_conjugacy}, giving
  \begin{align}
    h' &= \tau(0,L_z/4)\, \tz\, \sy^m\, \sz^n\, \tx^p\, \tz^q\, \tau(0,-L_z/4)\, \tz.
  \end{align}
  Since $\tz$ commutes with $\sy, \sz, \tx, \tz,$ and $\tau(0, \pm L_z/4)$, this can be reduced to
  $h' = \tau(0,L_z/4)\, \sy^m\, \sz^n\, \tx^p\, \tz^q\, \tau(0,-L_z/4) = 
  \tau(0, L_z/4)\, h\, \tau(0,-L_z/4)$, for all $h$ in $H$. Thus $H'$ is conjugate to $H$ under 
  $\gamma = \tau(0,L_z/4)$. 
\end{proof}

\section{Symmetry subgroups and equivalence classes for plane Couette flow}
\label{sec:appendixB}
To verify the symbolic code, we calculated the equivalence classes of symmetry subgroups for
plane Couette flow and compared them with \cite{gibson2009equilibrium}. The results,
presented in table \ref{tab:pcf_subgroups}, are for symmetries acting on the fluctuating velocity
$\bu(x,t)$ of plane Couette flow after a decomposition of total velocity and pressure into a
laminar base flow and fluctuation $\bu(x,t)$ following \cite{gibson2009equilibrium},
namely $\utot(\bx,t) = y \be_x + \bu(\bx, t)$. This results in Dirichlet boundary conditions on 
$\bu$ at the walls $y = \pm 1$ and renders the space of $\bu$ a vector space. The plane Couette
symmetry group $\Gpcf$ corresponding to $\Gppf$ (\ref{def:Gppf}) for plane Poiseuille flow is 
\begin{align} \label{def:Gpcf}
  \Gpcf = \langle \sxy,\, \sz,\, \{\tau(\lx, \lz) \; : \; a,b \in \bbR \} \rangle,
\end{align}
and the 16th-order abelian subgroup of $\Gpcf$ limited to $z$-centered, half-box phase shifts
corresponding to $\Gppfh$ (\ref{def:Gppfh}) is
\begin{align} \label{def:Gpcfh}
  \Gpcfh = \langle \sxy,\, \sz,\, \tx, \tz \rangle.
\end{align}

\begin{table}
  \begin{tabular}{ll}
     order $1$ : & 1 subgroup in 1 equivalence class\\
     & $ \langle 1 \rangle$ \\ 
     \\ order $2$ : & 12 subgroups in 6 equivalence classes\\
     & $ \langle \txz \rangle$ \\
     & $\langle \sz \rangle \sim \langle \sz \tz \rangle$ \\
     & $\langle \sxy \rangle \sim \langle \sxy \tx \rangle$ \\
     & $\langle \sz \tx \rangle \sim \langle \sz \txz \rangle$ \\
     & $\langle \sxy \tz \rangle \sim \langle \sxy \txz \rangle$ \\
     & $\langle \sxyz \rangle \sim \langle \sxyz \tx \rangle \sim \langle \sxyz \tz \rangle$ \\    %
     \\ order $4$: & 23 subgroups in 7 equivalence classes \\
     %
     & $\langle \sz,\, \txz \rangle \sim \langle \sz \tz, \txz \rangle$  \\
     & $\langle \sxy,\, \txz \rangle \sim \langle \sxy \tx,\, \txz \rangle$ \\
     & $\langle \sxyz,\, \txz \rangle \sim \langle \sxyz \tx,\, \txz \rangle \sim \langle \sxyz \tz,\, \txz  \rangle$  \\
     & $\langle \sxy,\, \sz \rangle \sim \langle \sxy \tx,\, \sz \rangle \sim \langle \sxy,\, \sz \tz \rangle \sim \langle \sxy \tx,\, \sz\tz \rangle $ \\
     & $\langle \sxy \tz,\, \sz \rangle \sim \langle \sxy \txz,\, \sz \rangle \sim \langle \sxy \tz,\, \sz \tz \rangle \sim \langle \sxy \txz,\, \sz \tz \rangle$ \\
     & $\langle \sxy,\, \sz\tx \rangle \sim \langle \sxy \tx,\, \sz\tx \rangle \sim \langle \sxy,\, \sz\txz \rangle \sim \langle \sxy\tx,\, \sz\txz \rangle$ \\
     & $\langle \sxy\tz,\, \sz\tx \rangle \sim  \langle \sxy\txz,\, \sz\tx \rangle \sim \langle \sxy\tz,\, \sz\txz \rangle \sim \langle \sxy\txz,\, \sz\txz \rangle$\\
     \\ order $8$: & 4 subgroups in 1 equivalence class\\
     & $\langle \sxy,\, \sz,\ \txz \rangle \sim \langle \sxy\tx,\ \sz,\, \txz \rangle \sim \langle \sxy,\, \sz\tz,\, \txz \rangle \sim \langle \sxy\tx,\, \sz\tz,\, \txz \rangle$  \\
   \end{tabular}
  \caption{\label{tab:pcf_subgroups} 
    {\bf Symmetry subgroups and their equivalence classes for plane Couette flow in minimal doubly-periodic domains.}
    The symbol $\sim$ denotes equivalence between subgroups under conjugation. 
    A set of subgroups related by $\sim$ symbols forms an equivalence class.
    A subgroup listed in isolation forms an equivalence class with that subgroup as its sole element.
    The first subgroup listed in each equivalence class is chosen as the representative for that equivalence class. 
    The subgroups listed are the complete set of $z$-centered, minimal-domain subgroups with elements of order 2 or less
    for plane Couette flow in doubly-periodic domains. 
    See sections \ref{sec:halfbox_groups} for related discussion.}
 \end{table}

Table \ref{tab:pcf_subgroups} lists the subgroups of  $\Gpcfh$ and their equivalence classes. 
The inversion of the $x$ coordinate by $\sxy$ in $\Gpcf$ and $\Gpcfh$, compared to $\sy$ in $\Gppf$ and
$\Gppfh$, means that plane Couette symmetries have the noncommuting product $\sxy \tau(a,b) = \tau(-a,b) \sxy$ 
in addition to $\sz \tau(a,b) = \tau(a,-b) \sz$. This additional instance of non-abelian behavior expands the
set of elements of $\gamma \in \Gpcf$ for which conjugacies $H' = \gamma H \gamma^{-1}$ are nontrivial and
thereby reduces the total number of equivalence classes from $24$ for plane Poiseuille flow to $14$ for 
plane Couette flow.  The results agree with \cite{gibson2009equilibrium} and expand upon them, since
\cite{gibson2009equilibrium} excluded subgroups with isolated $\txz$ symmetry. Note that \cite{gibson2009equilibrium}
used notation ``$\sx$'' for $\sxy$. \cite{daly2014secondary} showed that rotating plane Couette flow, with
a uniform rotation about the spanwise direction, obeys the same symmetries. 

\end{document}